\newcommand{\paratitle}[1]{
\noindent{\bf #1.}}
\Crefname{algocf}{Algorithm}{Algorithms}
\newcommand{\srtodo}[1]{{\reversemarginpar \todo[nolist]{{\tiny sr: #1}}}}
\newcommand{\xmark}{\ding{55}}%
\let\oldnl\nl
\newcommand{\nonl}{\renewcommand{\nl}{\let\nl\oldnl}}
\def\HiLi{\leavevmode\rlap{\hbox to \hsize{\color{red!20}\leaders\hrule height .8\baselineskip depth .5ex\hfill}}}
\newcommand\independent{\protect\mathpalette{\protect\independenT}{\perp}}
\def\independenT#1#2{\mathrel{\rlap{$#1#2$}\mkern2mu{#1#2}}}
\newtheorem{definition}{Definition}[section]
\newtheorem{example}{Example}[section]
\newcommand{\blue}[1]{{\color{blue} {#1}}}
\newcommand{\red}[1]{{\color{red} {#1}}}
\newcommand{\magenta}[1]{{\color{magenta} {#1}}}
\newcommand{\cut}[1]{}
\newcommand{\cov}{\textsc{Cov}}
\newcommand{\probName}{\textsc{ExPO}}
\newcommand{\algoName}{\textsc{CauSumX}}
\newcommand{\sysName}{\textsc{CauSumX}} 
\newcommand{\Aagg}{\ensuremath{A_{avg}}}
\newcommand{\attrset}{\ensuremath{\mathbb{A}}}
\newcommand{\attrsubset}{\ensuremath{\mathcal{A}_{gb}}}
\newcommand{\dom}{{\tt dom}}
\newcommand{\doop}{{\tt do}}
\newcommand{\db}{\ensuremath{D}}
\newcommand{\Qagg}{\ensuremath{Q}}
\newcommand{\causalG}{\ensuremath{G}}
\newcommand{\exo}{\ensuremath{\mathcal{E}}}
\newcommand{\pattern}{\ensuremath{\mathcal{P}}}
\newcommand{\model}{\ensuremath{\mathcal{M}_\attrset}}
\newcommand{\edvar}{\ensuremath{\mathcal{N}}}
\newcommand{\bruteforce}{\textsf{Brute-Force}}
\newcommand{\ids}{\textsf{IDS}}
\newcommand{\frl}{\textsf{FRL}}
\newcommand{\bruteforcelp}{\textsf{Brute-Force-LP}}
\newcommand{\greedy}{\textsf{Greedy-Last-Step}}
\newcommand{\exptable}{\textsf{Explanation-Table}}
\newcommand{\exptableG}{\textsf{Explanation-Table-G}}
\newcommand{\xinsight}{\textsf{XInsight}}
\newcommand{\german}{\textsf{German}}
\newcommand{\so}{\textsf{SO}}
\newcommand{\adult}{\textsf{Adult}}
\newcommand{\impus}{\textsf{IMPUS-CPS}}
\newcommand{\accidents}{\textsf{Accidents}}
\definecolor{moonstoneblue}{rgb}{0.45, 0.66, 0.76}
\definecolor{oldlace}{rgb}{0.99, 0.96, 0.9}
\definecolor{mintcream}{rgb}{0.96, 1.0, 0.98}
\definecolor{mintgreen}{rgb}{0.6, 1.0, 0.6}
\definecolor{mistyrose}{rgb}{1.0, 0.89, 0.88}
\definecolor{palegold}{rgb}{0.9, 0.75, 0.54}
\definecolor{palechestnut}{rgb}{0.87, 0.68, 0.69}
\newcommand{\reva}[1]{{\leavevmode\color{black}{#1}}}
\newcommand{\revb}[1]{{\leavevmode\color{black}{#1}}}
\newcommand{\revc}[1]{{\leavevmode\color{black}{#1}}}
\newcommand{\common}[1]{{\leavevmode\color{black}{#1}}}
\def\HiLiG{\leavevmode\rlap{\hbox to \hsize{\color{green!30}\leaders\hrule height .8\baselineskip depth .5ex\hfill}}}
\def\HiLiY{\leavevmode\rlap{\hbox to \hsize{\color{yellow!50}\leaders\hrule height .8\baselineskip depth .5ex\hfill}}}
\definecolor{light-gray}{gray}{0.95}
\def\independenT#1#2{\mathrel{\rlap{$#1#2$}\mkern2mu{#1#2}}}
\renewcommand\footnotetextcopyrightpermission[1]{} 
\begin{document}

\title{Summarized Causal Explanations For Aggregate Views}

\author{Brit Youngmann}
\email{brity@technion.ac.il}
\affiliation{%
  \institution{Technion - Israel Institute of Technology}
  \country{Israel}
}

\author{Michael Cafarella}
\affiliation{%
  \institution{CSAIL MIT}
  \country{USA}}
\email{michjc@csail.mit.edu}

\author{Amir Gilad}
\affiliation{%
  \institution{Hebrew University}
  \country{Israel}
}
\email{amirg@cs.huji.ac.il}

\author{Sudeepa Roy}
\affiliation{%
 \institution{Duke University}
 \country{USA}}
 \email{sudeepa@cs.duke.edu}





\renewcommand{\shortauthors}{Brit Youngmann, Michael Cafarella, Amir Gilad, \& Sudeepa Roy}

\begin{abstract}
SQL queries with group-by and average are frequently used and plotted as bar charts in several data analysis applications. 
Understanding the reasons behind the results in such an aggregate view may be a highly non-trivial and time-consuming task, especially for large datasets with multiple attributes. 
Hence, generating automated explanations for aggregate views can allow users to gain better insights into the 
results 
while saving time in data analysis. 
When providing explanations for such views, it is paramount to ensure that they are succinct yet comprehensive, reveal different types of insights that hold for different aggregate answers in the view, and, most importantly, they reflect reality and arm users 
to make informed data-driven decisions, i.e., the explanations do not only consider correlations but are {\em causal}. 
In this paper, we present \sysName, a framework for generating summarized causal explanations for the entire aggregate view. 
Using background knowledge captured in a causal DAG, \sysName\ finds the most effective causal treatments for different groups in the view. 
We formally define the framework and the optimization problem, study its complexity,
and devise an efficient algorithm using the Apriori algorithm, LP rounding, and several optimizations.
We experimentally show that our system generates useful summarized causal explanations compared to prior work 
and scales well for large high-dimensional data. 
\end{abstract}

\maketitle

\section{Introduction}
As database interactions grow in popularity and their user base broadens to data analysts and decision-makers with varied backgrounds, it becomes important to generate insightful and automated explanations for results of the queries users run on the data. One simple yet important class of queries used in data analysis is the class of {\em SQL queries with group-by and average}, which are frequently used and plotted as barcharts in data analysis applications. These queries show how the average varies in different sub-populations in the data by creating an {\em aggregate view} over the input database
 (e.g., average salary per country, occupation, race, or gender;  average severity of car accidents per major city in the USA, etc.). Understanding the {\em causal reasons} behind the high/low values of the average in different groups for such queries can enable sound data-driven decision-making to address unwarranted situations. For instance, if a policymaker knows the possible causal reasons behind lower average salary of a certain race or gender in a certain region in the USA, they can try to improve the situation with corrective measures, which may not be possible with insights that are based on non-causal associational factors. Here we give a running example that we will frequently use in the paper. 
 

\begin{example}
\label{ex:running_example}
Consider the Stack Overflow annual developer survey \cite{stackoverflowreport}, 
where respondents from around the world answer questions about their job.
We consider a subset of the data with 38090 tuples from 20 countries and 5 continents appearing the most in the dataset and augmented the data with additional attributes that describe the economy of each country: {\tt HDI} (Human Development Index, higher values mean more human development), {\tt Gini} (measures income inequity, higher values imply more inequity),  {\tt GDP} (Gross Domestic Product per capita, a measure of country's economic health, higher is better). 
\cref{tab:data} shows a few sample tuples with a subset of the attributes. 
The other attributes are {\tt SexualOrientation}, {\tt EducationParents}, {\tt Dependents}, {\tt Student}, {\tt Hobby}, {\tt HoursComputer}, and {\tt Exercise}.
Now consider the following group-by SQL query measuring the average salary in different countries:
%
%
\begin{center}
\small
    \begin{tabular}{l}
         \verb"SELECT Country, AVG(Salary)"\\
         \verb"FROM Stack-Overflow"\\         
         \verb"GROUP BY Country"\\
        
    \end{tabular}
\end{center}

The results are plotted as a barchart in Figure \ref{fig:so-barchart} 
(the colors will be explained later).
There is a huge variation in average salary (converted to USD) in different countries. 
The user may wonder (i) what are the main factors for this variation across countries, and also (ii) within each country, 
what is {\bf causing} developers to earn more or less. However, the dataset is too big, both in terms of the number of tuples and attributes, to look for a succinct yet informative explanation by manual inspection. While tools like Tableau give highly sophisticated visualizations by slicing and dicing the data across several dimensions, they return the aggregates for these dimensions and do not differentiate between causal and non-causal reasons behind Figure~\ref{fig:so-barchart}. 
\end{example}

Understanding the importance of generating insightful explanations for aggregated query results, several approaches have been proposed in database research on {\em explanations for aggregated query answers}. A simple form is given by the {\em provenance} for aggregate query answers that show how the output was computed using the input tuples \cite{DBLP:conf/pods/AmsterdamerDT11}. However, an aggregate answer over a large dataset uses many input tuples, hence several approaches have focused on providing high-level explanations as {\em predicates} on input tuples that are responsible for producing query answers of interest \cite{wu2013scorpion, roy2014formal, li2021putting} or provide other types of insights explaining them (e.g., the counterbalance approach in \cite{miao2019going}). While these approaches  provide predicates as explanations, which are easy to comprehend, they aim to explain certain answers in the view (e.g., outliers \cite{wu2013scorpion}, high/low values of an answer \cite{miao2019going, li2021putting}, or comparisons of a set of answers \cite{roy2014formal}), and do not provide a summarized explanation for the entire view. Further, although the explanations returned by these approaches reveal many interesting insights, they are not causal. 
\par
{\em Causal inference}, nevertheless, has been studied for several decades in Artificial Intelligence (AI) by 
{\em Pearl's Graphical Causal Model} \cite{pearl2009causal}, and in  Statistics by {\em Rubin's Potential Outcome Framework} \cite{rubin2005causal}. Causal analysis is a vital tool in determining the effect of a treatment on an outcome, and has been used in decision-making in medicine \cite{robins2000marginal}, economics \cite{banerjee2011poor}, biology \cite{shipley2016cause}, and in critical applications like understanding the efficacy of a new vaccine using {\em randomized controlled trials}. While randomized trials cannot be performed in many applications due to ethical or feasibility issues, fortunately, the above causal models provide ways to do sound causal analysis on {\em observed}  datasets under some assumptions (ref. Section~\ref{sec:prelim}).  
\par
Recent works have introduced causality to the field of database research \cite{salimi2018bias,galhotra2022hyper,SalimiPKGRS20,youngmann2022explaining, abs-2207-12718}, allowing users to benefit from this well-founded approach and infer solid causal conclusions from their data and queries. In particular, there has been prior work on extending Pearl's causal model for relational databases \cite{SalimiPKGRS20}, providing causal hypothetical reasoning for what-if and how-to queries \cite{galhotra2022hyper}, and providing explanations for aggregate queries using causal analysis that focused on revealing unobserved factors 
 influencing the results 
\cite{salimi2018bias,youngmann2022explaining}; however, \cite{salimi2018bias,youngmann2022explaining} provide a single explanation of the entire view, and 
do not offer fine-grained explanations for individual groups. A recent work \cite{abs-2207-12718} introduces a framework that searches for predicates that explain the difference in two average outcomes, and marks the patterns as either causal or not, by proposing a new causal discovery algorithm that extends the PC algorithm \cite{spirtes2000causation}. However, they do not search for important treatments affecting the outcomes and do not give causal explanations summarizing the entire view. On the other hand, {\em summarization} techniques for data and query answers form another active topic in database research \cite{el2014interpretable,bu2005mdl,lakshmanan2002generalized,wen2018interactive,sathe2001intelligent,lakshmanan2002quotient,basu2010constructing,DBLP:journals/pvldb/YoungmannAP22}, often with 
{\em diversity} and {\em coverage} factors, however these summaries are also not causal. 


\cut{
One useful way to understand aggregate views is by {\em causal explanations} 
for high and low values in the view. 
However, even obtaining a causal explanation for a single data item can be challenging; aggregate views go further, presenting data items that might be affected by causes in different ways, thereby making a coherent causal story difficult to discern.
}




\smallskip
\noindent
\textbf{Our contributions.~} In this work, {\em we present a novel framework called \sysName\ ({\em \underline{Cau}sal \underline{Sum}marized E\underline{X}planations}) to explain the entire aggregate view} from a query with group-by-average. 
Given a database $D$, causal background knowledge in the form of a causal DAG by Pearl's graphical causal model \cite{pearl2009causal}, a group-by-average query $Q$, 
and parameters $k$ and $\theta$, \sysName\ generates a set of $k$ 
{\em explanation patterns} (predicates) that explain at least $\theta$ fraction of groups in $Q(D)$. 
An explanation pattern contains a {\em grouping pattern} capturing a subset of output groups covered by the explanation, and a {\em treatment pattern} with a high or low value of {\em conditional average treatment effect (CATE)} (ref. Section~\ref{sec:prelim}) on the average attribute $\Aagg$ as the outcome. In standard causal analysis, the goal is to estimate the causal effect of a given treatment on a given outcome, whereas in \sysName\ we search for treatments with high and low causal effects for different subsets of groups defined by the grouping pattern. \sysName\ combines the features of (i) causal inference, (ii) explanation, and (iii) summarization to provide succinct yet comprehensive 
 and causal explanations for group-by-average queries, 
 helping save time and effort in data analysis.


\begin{table*}[]
\footnotesize
    \centering
    	\caption{\textnormal{A subset of the Stack Overflow dataset.}}
         \label{tab:data}
  			\begin{tabular}[b]{|l|l|l|l|c|l|l|l|c|}
  			
				\hline

				\textbf{ID}& \textbf{Country}& \textbf{Continent} &\textbf{Gender} 
    
    &
				\textbf{Age} &\textbf{Role} &
				 \textbf{Education} 
     
    &\textbf{Major}&\textbf{Salary}
				\\ \hline

				1 & US &N. America&Male&26&Data Scientist & PhD& C.S&180k\\
    
    		2 & US &N. America&Non-binary&32&QA developer & B.Sc.& Mech. Eng.&83k\\



 3 & India &Asia&Male&29&C-suite executive  & B.Sc. & C.S&24k\\

  4 & India &Asia&Female&25&Back-end developer  & M.S. & Math.&7.5k\\

  5 & China &Asia&Male&21&Back-end developer & B.Sc. & C.S&19k\\
  

    \hline
			\end{tabular}
\end{table*}

\begin{figure}[t]
\begin{center}
		\includegraphics[scale = 0.4]{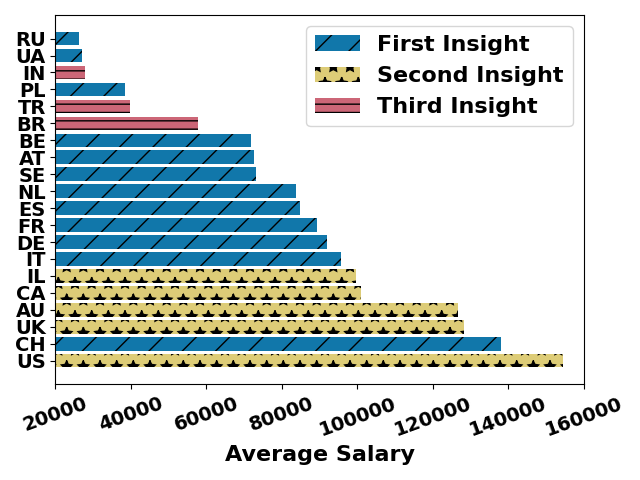}
		\caption{A visualization of the Stack Overflow query results.} 
			\label{fig:so-barchart}
\end{center}

\end{figure}

 \begin{figure}[t]
        \centering
        \begin{minipage}[b]{1.0\linewidth}
            \small
            \begin{tcolorbox}[colback=white]
            \vspace{-2mm}
\textsf{$\bullet$ \magenta{\underline{For countries in Europe}} (blue bars with '/'), the most substantial effect on high salaries (effect size of 36K, $p {<}$ 1e-3) is observed for \textcolor{blue}{\underline{individuals under $35$ with a Master's degree}}. Conversely, \textcolor{red}{\underline{being a student}} has the greatest adverse impact on annual income (effect size: -39K, $p {<}$ 1e-3 ).}\\
                \textsf{$\bullet$ \magenta{\underline{ For countries with a high GDP level}} (yellow bars with '*'), the most substantial effect on high salaries (effect size of 41K, $p {<}$ 1e-3 ) is observed for \textcolor{blue}{\underline{C-level executives}}. Conversely, \textcolor{red}{\underline{being over $55$ with a bachelor's degree}} has the greatest adverse impact on annual income (effect size: -35K,$p {<}$ 1e-4 ).}\\
                \textsf{$\bullet$ \magenta{\underline{For countries with a high Gini coefficient}} (pink bars with '-'), the most substantial effect on high salaries (effect size of 29K, $p {<}$ 1e-4) is observed for \textcolor{blue}{\underline{white individuals under $45$}}. Conversely, being \textcolor{red}{\underline{having no formal degree}} has the greatest adverse impact on annual income (effect size: -28K, $p {<}$ 1e-3).}
            \vspace{-2mm}
            \end{tcolorbox}
        \end{minipage}
        \caption{Causal explanation summary by \sysName.}
        \label{fig:so-explanation}
    \end{figure}

\begin{example}\label{ex:motivating-results}

Reconsider the dataset and query from \cref{ex:running_example}. The user runs \sysName\  to search for an explanation for her query with no more than three insights while covering all groups, and receives the answers shown in ~\cref{fig:so-explanation}. The mapping between countries and insights is visualized using the bars' color and texture in \cref{fig:so-barchart}. Each country can be mapped to more than one insight, but for simplicity, only one color/texture is visualized.  
\sysName\ uses a \emph{causal DAG} (a partial DAG is shown in \cref{fig:causal_dag}), explores multiple patterns, 
and evaluates their causal effect on the salary across different countries. 
There are three parts in each insight: (a) A {\bf grouping pattern}  (first underlined text in \magenta{magenta}), illustrates a property or predicate on the groups or {\tt countries} (group-by attribute in the query) for which this insight holds. (b) A {\bf positive treatment pattern} (second underlined text in \blue{blue}) 
is a predicate on the individuals from the above groups 
with a high positive treatment effect. 
(c) A {\bf negative treatment pattern} (third underlined text in \red{red}) is a predicate on the individuals from the above groups with a high negative treatment effect.
\par
Without having to manually explore this large dataset 
by running many subsequent queries, the user learns the main reasons for high and low salaries in different countries. 
These reasons are not just predicates summarizing tuples in the dataset, they have high and low {\em causal effects} as determined by the causal model. 
Moreover, the user knows that these explanations not only hold for one country but hold for several countries that share the same 
grouping pattern.
The user can continue the exploration by varying parameters in \sysName.
\end{example}

\noindent
Our main contributions are as follows. \\
{\bf (1)} We {\bf develop a framework called \sysName}
    that generates a summarized causal explanation to explain an aggregate view $Q(D)$ for $Q$ with group-by and average. We define explanation patterns that comprise a grouping pattern and a treatment pattern, 
    define an optimization problem to maximize the causal explainability of these explanations subject to a size constraint on the number of explanations and a coverage constraint on the number of output groups covered by them, and show its NP-hardness.
    \par
    \par
{\bf (2)} We {\bf design a three-step algorithm} named \algoName. The first step mines frequent grouping patterns using the seminal Apriori algorithm~\cite{agrawal1994fast}. The second step uses a greedy lattice-based algorithm for mining promising treatment patterns for each grouping pattern from the previous step. In the third step, we model the optimization problem as an Integer Linear Program (ILP) and solve it by randomized rounding of its LP relaxation 
using the grouping and treatment patterns from previous steps.
\par
{\bf (3)} \common{We provide a thorough {\bf experimental analysis} and {\bf multiple case study} that include five datasets, six baselines, and two variations of our solution as additional comparison points. We show that the explanations generated by \sysName\ are of high quality compared to existing approaches and may provide different (and more justifiable) explanations from those given by associational and causal approaches. Additionally, we analyze the runtime and accuracy of the proposed algorithms, and show that \sysName\ is both efficient and useful in providing explanations}.   

\cut{
\begin{example}
\label{ex:running_example}
Alex is a data scientist examining a \red{projection of the Stack Overflow (SO)
dataset~\cite{stackoverflow}\srtodo{how many attributes? size of the data?}}, containing answers of users to the Stack Overflow developers survey. A simplified version of the dataset with a subset of the attributes is presented in Table \ref{tab:data}. Alex runs the following SQL query with group by and average to gain a better understanding of the average salary in different countries:
\begin{center}
\small
    \begin{tabular}{l}
         \verb"SELECT Country, AVG(Converted_Salary)"\\
         \verb"FROM Stack-Overflow"\\         
         \verb"GROUP BY Country"\\
        
    \end{tabular}
\end{center}
The query results are plotted as a barchart in Figure \ref{fig:flights}\srtodo{only 5 countris or partial result?}.  
Alex wonders what are the main factors leading to huge variations of salary in different country. Alex is interested in finding a succinct yet informative explanation for this observation.
\end{example}

\begin{figure}[t]
\begin{center}
		\includegraphics[scale = 0.5]{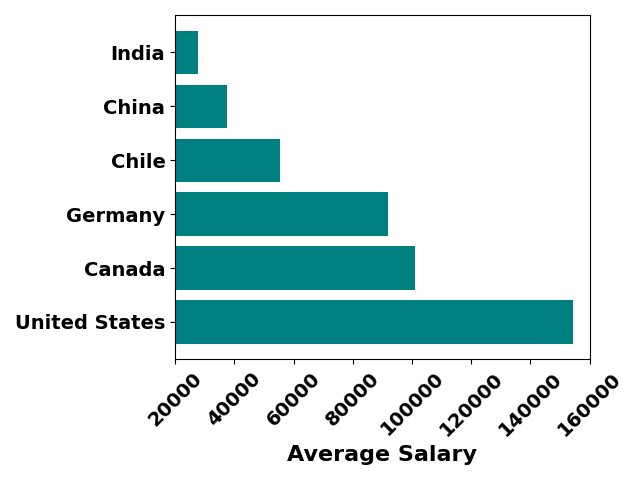}
\vspace{-5mm}
		\caption{A visualization of the Stack Overflow query results. } 
			\label{fig:flights}
\end{center}

\end{figure}

 \begin{figure}[b]
        \centering
        \begin{minipage}[b]{1.0\linewidth}
            \small
            \begin{tcolorbox}[colback=white]
                \textsf{$\bullet$ In India, the most substantial effect on high salaries is observed for \textcolor{blue}{\underline{males under $25$}}. Conversely, \textcolor{red}{\underline{not having a formal degree}} has the greatest adverse impact on annual income.}\\
                \textsf{$\bullet$ In China, the most substantial effect on high salaries is observed for \textcolor{blue}{\underline{individuals having a c-suite executive role}}. Conversely, \textcolor{red}{\underline{being over 55}} has the greatest adverse impact on annual income.}\\
                \textsf{$\bullet$ For countries in North America (United States, Canada), the most substantial effect on high salaries is observed for \textcolor{blue}{\underline{white individuals under 35}}. Conversely, being \textcolor{red}{\underline{over 55}} has the greatest adverse impact on annual income.}
            \end{tcolorbox}
        \end{minipage}
        \caption{Generated Explanation Summary.}
        \label{fig:so-output-intro}
    \end{figure}

\begin{example}
Alex uses \sysName\ to search for an explanation for her query. \sysName\ explores multiple patterns and evaluates their effect on salary across different countries. 
Figure \ref{fig:so-output-intro} shows the generated explanation summary\srt. 
Alex uncovers age discrimination in North American countries: she learns that being below the age of $35$ positively impacts salary while being over $55$ has a negative impact on annual income. In the case of Asian countries, Alex finds that demographic factors, including age, gender, and ethnicity, significantly influence salary levels in all countries.
Alex is pleased because she found a plausible real-world explanation for her query results. 
\end{example}
}

\section{Related work}
\label{sec:related}
Table \ref{tab:relatedwork} summarizes the differences between \sysName\ and previous work. Columns in bold highlight our novelty, namely: \sysName\ generates a {\bf summarized} and  \textbf{causal} explanation to the \textbf{entire aggregated view} generated by a SQL query while \textbf{accounting for variations among the groups.} \revc{In contrast, alternative approaches either provide non-causal explanations (as indicated in the Causal column), focus solely on elucidating specific portions of query results (as evident in the Entire View column), or offer a single explanation for query results, neglecting differences among groups (as represented in the Support Groups column).}. 

\begin{table}[t]
\small
\centering
\caption{Positioning of \sysName\ w.r.t. Query Result Explanation, Interpretable Prediction Models, and Data Summarization.}
\label{tab:relatedwork}
\begin{tabular}{|c|c|c|c|c|c|c|c|}
    \hline
 
\multicolumn{2}{|c|}{{Related Work}}  & {Causal} &   {\begin{tabular}{@{}c@{}}Entire\\View\end{tabular}} & {\begin{tabular}{@{}c@{}} Supports\\Groups\end{tabular}} \\
    \hline
\multirow{4}{*}{\begin{tabular}{@{}c@{}}Query Result Explanation\end{tabular}} &\cite{meliou2010complexity, meliou2009so,milo2020contribution,bessa2020effective, miao2019going,roy2014formal,wu2013scorpion,roy2015explaining} &\xmark&\xmark&\xmark\\
&\cite{li2021putting}&{\bf \xmark}&\xmark&\checkmark\\
&\cite{abs-2207-12718}&\textbf{\checkmark}&\xmark&\checkmark\\
&\cite{salimi2018bias, youngmann2022explaining}&\textbf{\checkmark}&\xmark&\xmark\\    \hline
\multirow{2}{*}{\begin{tabular}{@{}c@{}}Interpretable Prediction Models\end{tabular}} 
& \cite{chen2018optimization,lakkaraju2016interpretable}&\xmark&{\bf \checkmark}&\xmark\\
&&&&\\
\hline
\multirow{2}{*}{\begin{tabular}{@{}c@{}}Data Summarization\end{tabular}} &\cite{el2014interpretable} &\xmark&{\bf \checkmark}&\xmark\\
&\cite{wen2018interactive,sathe2001intelligent, DBLP:journals/pvldb/YoungmannAP22}&\xmark&\checkmark&\xmark\\
\hline
\multicolumn{2}{|c|}{\sysName}
\textbf{}&\textbf{\checkmark}&\textbf{\checkmark}&\textbf{\checkmark}\\

    \hline
\end{tabular}
\end{table}

\paratitle{Query Result Explanation}
A substantial body of research has been dedicated to query result explanations. 
Multiple works used \emph{data provenance} to obtain explanations for (possibly missing) query results~ \cite{bidoit2014query,chapman2009not,meliou2010complexity, meliou2009so,DeutchFG20,milo2020contribution,lee2020approximate,ten2015high,li2021putting}. 
Other forms of explanations include (non-causal) interventions \cite{wu2013scorpion,roy2014formal,roy2015explaining,tao2022dpxplain,DBLP:journals/pvldb/DeutchGMMS22}, entropy \cite{Gebaly+2014-expltable}, Shapley values \cite{LivshitsBKS20,ReshefKL20}, and counterbalancing patterns \cite{miao2019going}. 
Those works are orthogonal to our work, as we aim to explain an entire aggregated view via a small set of causal explanations (as explained above). 
Recent works \cite{salimi2018bias, youngmann2022explaining,youngmann2023nexus} propose using causal inference to explain query results. 
\common{In particular, \cite{salimi2018bias} detects bias in a query in the form of confounder variables as explanations for Simpson’s Paradox, while \cite{youngmann2022explaining} finds confounders that explain the correlation between the grouping attribute and AVG queries. In both, the treatment is set as the grouping attribute and is fixed, and the same explanation is provided for all groups. Herein, we mine attribute-value pairs as treatments and search for ones with high causal effects for sets of groups.}
\common{
Another work~\cite{abs-2207-12718} introduced a framework that identifies causal and non-causal patterns to explain the differences between two groups of tuples. 
In Section \ref{subsec:quality_eval} we empirically demonstrate that this framework aims to solve a different task, and hence is unsuitable to solve the problem studied in this work.}

\paratitle{Causal Inference}
There is an extensive body of literature on causal inference over observational data in AI and
Statistics~\cite{greenland1999epidemiology,pearl2009causal,rubin2005causal,tian2000probabilities}. We employ standard
techniques from this literature to compute causal effects.
In related research, estimating heterogeneous treatment effects has been explored \cite{wager2018estimation, xie2012estimating}. This refers to variations in treatment effects across different population subgroups. However, this research differs from our framework. They assume known treatment and outcome variables and focus on identifying subpopulations with varying treatment effects. In contrast, we assume only the outcome variable is given and aim to identify treatments that influence the outcome for each subpopulation, potentially leading to different treatments for each subgroup.

\paratitle{Interpretable Prediction Models}
Previous work developed models that offer both high predictive
accuracy and interoperability~\cite{sagi2021approximating,schielzeth2010simple,lou2013accurate,kim2014bayesian}. 
Rule-based interpretable prediction models~\cite{yang2017scalable,chen2018optimization,lakkaraju2016interpretable} 
often utilize association rule mining processes to produce predictive rules. For instance, \cite{lakkaraju2016interpretable} introduced IDS, a model for binary classification that aims to optimize both the accuracy and interpretability of the chosen rules. This model generates a short, non-overlapping rule set encompassing the entire feature space and classes. 
Similarly, \cite{chen2018optimization} devised FRL, an ordered rule list that includes probabilistic if-then rules. We compare against both \cite{lakkaraju2016interpretable,chen2018optimization} in \cref{sec:experiments}.




\paratitle{Data Summarization}
Data summarization is the process of condensing an input dataset into interpretable and representative subsets~\cite{wen2018interactive,yu2009takes,kim2020summarizing}. A broad spectrum of approaches
have been proposed for data and view summarization~\cite{bu2005mdl,lakshmanan2002generalized,wen2018interactive,sathe2001intelligent,lakshmanan2002quotient,basu2010constructing,DBLP:journals/pvldb/YoungmannAP22}. 
Unlike our work, none of these methods specifically aim to uncover causal explanations for aggregated views. 
In~\cite{el2014interpretable}, the authors use \emph{explanation tables}, which is one of the baselines in \cref{sec:experiments}.

\section{Background on Causal Inference}
\label{sec:prelim}
We use Pearl's model for {\em observational causal analysis} on collected datasets \cite{pearl2009causal} and present the following concepts according to it. 
\par
\paratitle{Causal inference, Treatment, ATE, and CATE}
The broad goal of {\em causal inference} is to estimate the effect of a {\em treatment variable} $T$ on an outcome variable $Y$ (e.g., what is the effect of higher \verb|Education| on \verb|Salary|). The gold standard of causal inference is by doing {\em randomized controlled experiments}, 
where the population is randomly divided into a {\em treated} group that receives the treatment (denoted by $\doop(T = 1)$ for a binary treatment) and the {\em control} group ($\doop(T = 0)$). One popular measure of causal estimate is {\em Average Treatment Effect} (ATE). In a randomized experiment, ATE is the difference in the average outcomes of the treated and control groups~\cite{rubin2005causal, pearl2009causal}
\begin{equation}
    {\small ATE(T,Y) = \mathbb{E}[Y \mid \doop(T=1)] -  
    \mathbb{E}[Y \mid \doop(T=0)]}
\label{eq:ate}
\end{equation}
The above definition assumes that the treatment assigned to one unit does not affect the outcome of another unit (called the {Stable Unit Treatment Value Assumption (SUTVA)) \cite{rubin2005causal}}\footnote{This assumption does not hold for causal inference on multiple tables and even on a single table where tuples depend on each other, which we discuss in Section~\ref{sec:conc}.}. 
\par
In our work on causal explanations for SQL group-by-average queries, where the treatment with maximum effect may vary among different tuples in the query answer, we are interested in computing the \emph{Conditional Average Treatment Effect} (CATE), which measures the effect of a treatment on an outcome on \emph{a subset of input units}~\cite{rubin1971use,holland1986statistics}. 
Given a subset of units defined by (a vector of) attributes $B$ and their values $b$, 
we can compute $CATE(T,Y \mid B = b)$ as:
{
\begin{eqnarray}    
    \mathbb{E}[Y \mid \doop(T=1), B = b] -  
    \mathbb{E}[Y \mid \doop(T=0), B = b]\label{eq:cate}
\end{eqnarray}
}
\par
However, randomized experiments where treatments are assigned at random cannot be done in many practical scenarios due to ethical or feasibility issues 
(e.g., effect of 
higher education on salary). In these scenarios, {\em Observational Causal Analysis} still allows sound causal inference under additional assumptions. Randomization in controlled trials mitigates the effect of {\em confounding factors} or {\em covariates}, i.e., attributes that can affect the treatment assignment and outcome. Suppose we want to understand the causal effect of \verb|Education| on \verb|Salary| from the SO dataset.  
We no longer apply Eq. (\ref{eq:ate}) since the values of \verb|Education| were not assigned at random in this data, and obtaining higher education largely depends on other attributes like \verb|Gender|, \verb|EducationParents|,and \verb|Country|. 
Pearl's model provides ways to account for these confounding attributes $Z$ to get an unbiased causal estimate from observational data under the following assumptions ($\independent$ denotes independence):
\begin{eqnarray}
    \textbf{Unconfoundedness}: & Y \independent T | Z {=} z \label{eq:unconfoundedness}\\
    \textbf{Overlap}: & 0 < Pr(T {=} 1 |Z {=} z)< 1 \label{eq:overlap}
\end{eqnarray}
The unconfoundedness assumption, Eq. (\ref{eq:unconfoundedness}), states that if we condition on $Z$, then treatment $T$ in the dataset and the outcome $Y$ are independent. In SO, assuming that only $Z$ =\{\verb|Gender|, \verb|EducationParents|, \verb|Country|\} affects $T = $ \verb|Education|, if we condition on a fixed set of values of $Z$, i.e., consider people of a given gender, from a given country, and with a given education level of parents, then $T = $ \verb|Education| and $Y = $ \verb|Salary| are independent. For such confounding factors $Z$,  Eq. (\ref{eq:ate}) and (\ref{eq:cate}) respectively reduce to the following form 
(Eq. (\ref{eq:overlap})
gives feasibility of the expectation difference): 
{\small
\begin{flalign}    
    & ATE(T,Y) = \mathbb{E}_Z \left[\mathbb{E}[Y \mid T=1, Z = z] -  
    \mathbb{E}[Y \mid T=0, Z = z] \right] \label{eq:conf-ate}\\
 & CATE(T,Y \mid B = b) = \nonumber
    \mathbb{E}_Z \left[\mathbb{E}[Y \mid T=1, B = b, Z = z] -  
    \mathbb{E}[Y \mid T=0, B = b, Z = z]\right]\label{eq:conf-cate}
\end{flalign}
}
The above equations no longer have the $\doop(T = b)$, and can be estimated from an observed dataset. 
Pearl's model gives a systematic way to find such a $Z$ when a causal DAG is available.


\paratitle{Causal DAG}
Pearl's Probabilistic Graphical Causal Model model \cite{pearl2009causal} can be written as a tuple $(\exo, \edvar, Pr_{\exo}, \psi)$, where $\exo$ is a set of {\em unobserved exogenous (noise)} variables, $\Pr_{\exo}$ is the joint distribution of \exo, and $\mathcal{N}$ is a set of {\em observed endogenous variables}. 
Here $\psi$ is a set of structural equations that encode dependencies among variables. The equation for $A \in \edvar$ takes the following form:
$$\psi_{A}: 
\dom(Pa_{\exo}(A)) {\times} \dom(Pa_{\edvar}(A)) \to \dom(A)$$
Here $Pa_{\exo}(A) {\subseteq} {\exo}$ and $Pa_{\edvar}(A) {\subseteq} \edvar \setminus \{A\}$ respectively denote the exogenous and endogenous parents of $A$. A causal relational model is associated with a {\em causal DAG}, $G$, whose nodes are the endogenous variables $\edvar$ and whose edges are all pairs $(X,Y)$ (directed edges from $X$ to $Y$) such that $Y {\in} \edvar$ and $X {\in} Pa_{\edvar}(Y)$. The causal DAG obfuscates exogenous variables as they are unobserved. Any given set of values for the exogenous variables completely determine the values of the endogenous variables by the structural equations (we do not need any known closed-form expressions of the structural equations in this work). 
The probability distribution $\Pr_{\exo}$ on exogenous variables $\exo$ induces a probability distribution  
on the endogenous variables $\mathcal{N}$ by the structural equations $\psi$.

  Figure \ref{fig:causal_dag} depicts a causal DAG for the SO dataset over the attributes in Table \ref{tab:data} as endogenous variables (we use a larger causal DAG with all 20 attributes for the SO dataset in our experiments). 
  Given this causal DAG, we can observe that the \verb|Role| that a coder has in their company depends on the values of their \verb|Education|, \verb|Age|, \verb|Major|, and \verb|YearsCoding| attributes.

A causal DAG can be constructed by a domain expert as in the above example, or using existing {\em causal discovery} \cite{pearl2009causal} algorithms, which we study further in our experiments (Section~\ref{sec:experiments}). 
\par
In Pearl's model, a treatment $T = t$ (on one or more variables) is considered as an {\em intervention} to a causal DAG by mechanically changing the DAG such that the values of node(s) for $T$ in $G$ are set to the value(s) in $t$, which is denoted by $\doop(T = t)$. Following this operation, the probability distribution of the nodes in the graph changes as the treatment nodes no longer depend on the values of their parents. Pearl's model gives an approach to estimate the new probability distribution by identifying the confounding factors $Z$ described earlier using conditions such as {\em d-separation} and {\em backdoor criteria} \cite{pearl2009causal}, which we do not discuss in this paper. 

\cut{
\begin{example}
Consider treatment $T = $ \verb|Education| and outcome $Y = $ \verb|Salary| in the SO dataset in Example~\ref{ex:running_example}. 
For simplicity, we assume \verb|Education| takes only two values, school-level education ($E_1$), and higher degrees with all other values ($E_2$).
Now assume that each respondent has two salary attributes, $Y$ ($E_1$) and $Y$ ($E_2$), representing the
salary if the respondent has school-level education or a higher degree, resp. Of
course, each person can only have one level of education (either school level or a higher degree). Hence $Y$ is
either $Y$ ($E_1$) or $Y$ ($E_2$) in the database, and the other one is missing. We can only estimate it in an alternative, counterfactual world.
To compute the ATE (or CATE),  
we have
to find sufficient confounding attributes such that, after conditioning, both salary variables are independent of education level. In this case, these variables are \verb|Country, Gender, Age|, and \verb|Ethnicity|. 
\end{example}
}





\cut{
\paratitle{Assumptions}
In general, ATE or CATE cannot be estimated from the data, since for each
unit, either $Y$ when $T=1$ or $Y$ when $T=0$ is missing~\cite{holland1986statistics}. For example, when measuring the effect of a vaccine on patients' immunity, a patient will either get the vaccine or not. The same patient cannot be both vaccinated and un-vaccinated. 
To compute CATE from observational data, we use the following commonly used assumptions~\cite{salimi2018bias}:  
The data contains a set of \emph{confounding attributes}~\cite{pearl2009causality} $\boldsymbol{Z} {\subseteq} \attrset$ satisfying the following properties for all $Z {\in} \boldsymbol{Z}$ and $z {\in} \dom(Z)$:
\begin{enumerate}
    \item \textbf{Unconfoundedness}: $(Y \independent T | Z {=} z)$ 
    \item \textbf{Overlap}: $\Pr(T {=} t_1 |Z {=} z) \in (0, 1)$ 
\end{enumerate}
The unconfoundedness assumption states that all the causes of the outcome $Y$ are measured and included in the analysis. This means that all of the variables that may influence the outcome have been accounted for in the analysis. The overlap assumption states that all of the individuals in the study population must have a non-zero probability of being exposed to the treatment. This means that the study population must have a sufficient level of exposure to the treatment to draw meaningful conclusions.

Under these assumptions, CATE can be estimated from a given dataset.

}

\section{Framework for Summarized Causal Explanations for Aggregate Queries}\label{sec:franework}

\begin{figure}
\centering
\includegraphics[scale = 0.2]{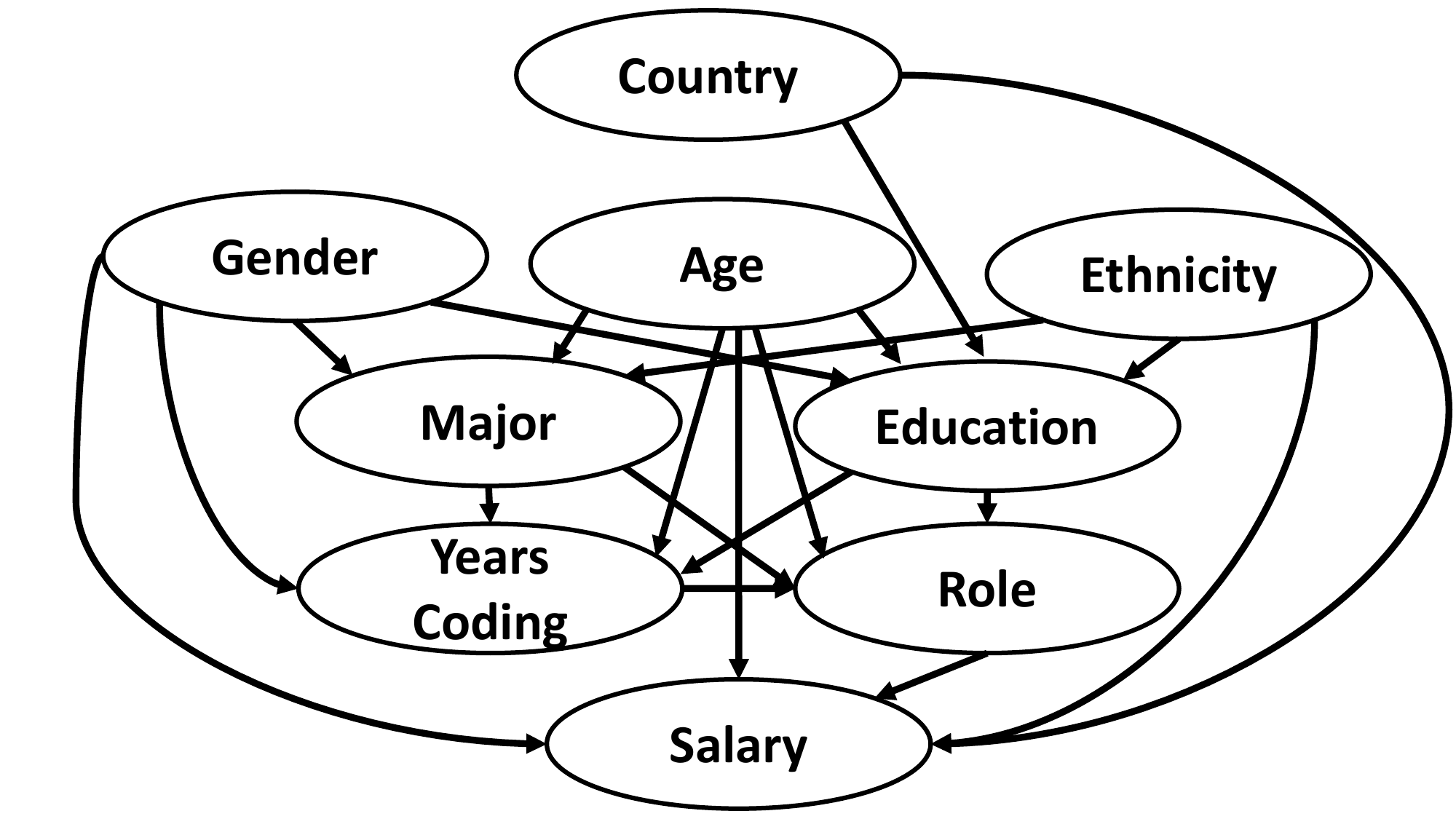}
\caption{Example causal DAG.} \label{fig:causal_dag}
\end{figure}

\paratitle{Databases and Queries}
We consider a single-relation database\footnote{We discuss adjustments for supporting multi-dimensional datasets in \cref{sec:conc}.} over a schema $\attrset$. The schema is a vector of attribute names, i.e., $\attrset {=} (A_1, \ldots, A_s)$, where each $A_i$ is associated with a domain $\dom(A_i)$, which can be categorical or continuous. 
A database instance \db, populates the schema with a set of tuples $t {=} (a_1, \ldots, a_s)$ where $a_i {\in} \dom(A_i)$. We use $t[A_i]$ to denote the value of attribute $A_i$ of tuple $t$.  
In this paper, we will only consider the {\em active domain} of every $A_i$ as $\dom(A_i)$, i.e., the set of values of $A_i$ in the given $\db$.

We consider an important class of SQL queries for data analysis, with group-by and average as the aggregate function: 
\begin{equation*}
\small
\Qagg = \texttt{SELECT } \attrsubset, \texttt{ AVG }(\Aagg) \texttt{ FROM D WHERE } \phi \texttt{ GROUP BY } \attrsubset;
\end{equation*}
Here, $\attrsubset {\subseteq} \attrset$ is a set of categorical group-by attributes, $\Aagg {\in} \attrset$ is the average attribute, and $\phi$ is a predicate. 
The result of evaluating $\Qagg$ over \db\ is denoted by $\Qagg(\db)$. We denote $|\Qagg(\db)|$, i.e., the number of groups in $Q(D)$, by $m$. The {\tt WHERE} condition $\phi$ simply reduces the table $D$ to tuples satisfying $\phi$ before the techniques in this section can be applied, so we do not discuss $\phi$ further in this section.

Consider again the query presented in Example \ref{ex:running_example}. 
Here $\attrsubset {=} $  \verb|Country|, $\Aagg {=} $ \verb|Salary|, and $\phi$ is the empty predicate. The 
query results $Q(D)$ are shown in Figure~\ref{fig:so-barchart}, where $m {=} |Q(D)| {=} 20$. 
\subsection{Explanation Patterns}
The {\em patterns} in our framework of summarized explanations are {\em conjunctive predicates} on attribute values that are prevalent in previous work on explanation, e.g.,  \cite{roy2014formal,wu2013scorpion,Gebaly+2014-expltable}.
\begin{definition}[Pattern]\label{def:pattern}
Given a database instance \db\ with schema \attrset,
a simple predicate is an expression of the form $\varphi {=} A_i ~{\tt op }~ a_i$, 
where $A_i {\in} \attrset$, 
$a_i {\in} \dom(A_i)$, and ${\tt op} {\in} \{=, <, >, \leq, \geq\}$.  
A {\em pattern} is a conjunction of simple  predicates $\pattern {=} \varphi_1 \land \ldots \land \varphi_k$.
\end{definition}



\paratitle{Grouping and treatment patterns}
Our explanations consist of pairs of patterns: 
(i) A {\bf grouping pattern} $\pattern_g$ 
captures a subset of groups in $\Qagg(\db)$ and must be well-defined over $\Qagg(\db)$, i.e., each query answer $s\in Q(D)$ is either \emph{covered} or not by $\pattern_g$. Therefore, $\pattern_g$ can only contain attributes $W$ such that the Functional Dependency (FD) from the grouping attributes, 
$\attrsubset \rightarrow W$ holds for all $W$ in $\pattern_g$. 
(ii) A {\bf treatment pattern}, $\pattern_t$, is defined over the dataset $\db$ (as opposed to $\pattern_g$ that is defined over $\Qagg(\db)$) and partitions the input tuples into treated ($T = 1$ if $\pattern_t$ evaluates to true for a tuple) and control groups ($T = 0$ if $\pattern_t$ evaluates to false). This partition is then used to assess the causal effects of the treatment pattern on the outcome $Y = \Aagg$, the attribute for average in the query $Q$.
\par
A pair of grouping and treatment pattern $(\pattern_g, \pattern_t)$ together define an {\bf explanation pattern}.  
Intuitively, $\pattern_g$ specifies the subpopulation of interest (equivalent to the condition $B = b$ in Eq. (\ref{eq:cate}) for CATE), while $\pattern_t$ (equivalent to treatment $T$) explains the observed outcome $Y$ within that subpopulation as per the CATE value. 

\begin{definition}[Explanation Pattern]\label{def:explanation-pattern}
Given a database instance \db\ with schema \attrset\ and a query $\Qagg$ with group-by attributes \attrsubset\ and attribute \Aagg for average, 
an \emph{explanation pattern} $(\pattern_g, \pattern_t)$ where $\pattern_g$ is a grouping pattern on $Q(D)$, i.e., the FD $\attrsubset \rightarrow W$ holds in \db for all attributes $W$ in $\pattern_g$, 
and $\pattern_t$ is a pattern defined over $\db$. 
\end{definition}
\begin{example}\label{ex:pattern}
In the first insight in Figure~\ref{fig:so-explanation}, 
one explanation pattern is $(\pattern_g, \pattern_t)$ with $\pattern_g:$ 
\newline$({\tt Continent = Europe})$ and $\pattern_t:$ $({\tt Age} < 35) \wedge$ ${\tt(Education = Master's~degree)}$. Note that the FD from the group-by attribute ${\tt Country} \rightarrow {\tt Continent}$ holds. 
\end{example}
\par
\paratitle{Partitioning attributes for grouping and treatment patterns} We partition the attributes in $\attrset$ into two disjoint sets. All attributes $W {\subseteq} \attrset$ s.t the FD $\attrsubset {\rightarrow} W$ holds in $D$ are considered for grouping patterns. All other attributes $U$
are considered for treatment patterns. The necessity for FD for grouping patterns is explained above. 
Further, any attribute $W$ where $\attrsubset {\rightarrow} W$ holds, and in general, any grouping pattern, cannot be a valid treatment pattern. By the overlap condition in Eq.~(\ref{eq:overlap}), conditioned on a grouping pattern $\pattern_g$, there should be at least one unit with $T {=} 1$ and at least one unit with $T {=} 0$. If we pick a pattern $W {=} w$ with $\attrsubset {\rightarrow} W$ or a pattern with multiple such attributes as the treatment, for all tuples in $D$ contributing to a query answer in $Q(D)$, either it evaluates to true or to false, so we do not get both $T {=} 1$ and $T {=} 0$ to estimate the CATE value.

\paratitle{Explainability} To evaluate the effectiveness of an explanation pattern $(\pattern_g, \pattern_t)$, we define its explainability.

\begin{definition}[Explainability]\label{def:explainability}
Given a database instance \db\ with schema \attrset, a query $\Qagg$ group-by attributes \attrsubset and attribute \Aagg, and a causal model \model\ on \attrset associated with a causal DAG, 
the explainability of an explanation pattern $( \pattern_g, \pattern_t)$ is defined as:
%
$$Explainability(\pattern_g, \pattern_t) {:=} CATE_{\model}(\pattern_t, \Aagg~|~\pattern_g)$$
using the definition of CATE as in Eq.~(\ref{eq:cate}) with treatment $T {=} \pattern_t$, outcome $Y {=} \Aagg$, and subpopultion 
defined by $\pattern_g$. The subscript $\model$ denotes that the CATE is estimated using the causal model $\model$ as explained in Section~\ref{sec:prelim}.
\end{definition}
In particular, CATE given by Eq.~(\ref{eq:cate}) in the above definition is reduced to Eq.~(\ref{eq:conf-ate}) using confounding variables $Z$ obtained from the causal DAG of \model, which then can be estimated from the data $D$. 
Our focus is on computing CATE (Eq.~\ref{eq:cate}) rather than ATE (Eq.~\ref{eq:ate}) to 
understand the causal factors that influence outcomes for groups in $Q(D)$. 
For instance, the effect of having a {\tt Master's degree} on \verb|Salary| in countries in Europe can be different from that in countries in Asia. Therefore, computing the ATE while considering all individuals from across the globe may not provide meaningful insights.
For $\pattern_g: ({\tt Continent = Europe})$ and $\pattern_t: {\tt (Education =}$ ${\tt MA~degree)}$, we define the treatment group as individuals with a master's degree from European countries and the control group as individuals without a Masters degree from European countries. This enables us to draw relevant conclusions. 

\begin{example}
    In Example~\ref{ex:motivating-results} and Figure~\ref{fig:so-explanation}, there are two explanation patterns using the same grouping pattern. Here $\pattern_g: ({\tt Continent =}$ ${Europe})$, $\pattern_{t1}: ({\tt Age < 35) \wedge (Education = Master's~degree)}$, and $\pattern_{t2}: (Student = yes)$. The explainability of $(\pattern_g, \pattern_{t1}) = 36K$ and that of $(\pattern_g, \pattern_{t2}) = -39K$, indicating that age below 35 and having a Master's degree has a high positive causal effect for individuals from European countries while being a student has a high negative effect. 
\end{example}

\cut{Definitions~\ref{def:explanation-pattern} and \ref{def:explainability} give us a way to measure the causal explainability of an explanation pattern. However, the number of possible patterns can be huge. In Example~\ref{ex:running_example}, with 20 attributes in the SO dataset, there were 4 attributes, namely \verb|continent|, \verb|HDI|, \verb|GDP|, \verb|GINI|, having an FD from the group-by attribute \verb|Country| in query $Q$ as they have a unique value for any given country. Hence there were remaining 16 attributes for treatment patterns. Considering active domains, we had 75 possible patterns for grouping patterns and 298 possible patterns for treatment pattern, hence the number of explanation patterns was 22350. However we intend to give a short summary with causal explanations of the query results in $Q(D)$ as given in Figure~\ref{fig:so-explanation}. Hence we define an optimization problem balancing different constraints and parameters in the next section. 
}

\subsection{Problem Definition and Hardness}
\label{subsec:problem}

\par
Our goal is to obtain a succinct yet comprehensive set of explanation patterns for the groups in $Q(D)$ from the huge search space of possible explanation patterns (e.g., the number of explanation patterns for Example~\ref{ex:running_example} was 22350). 
To achieve this, we frame a constrained optimization problem. 
We apply three constraints: (1) the number of explanation patterns should not exceed a specified threshold, (2) the number of groups in $Q(D)$ explained by the patterns must be at least a specified $\theta$-fraction of all the groups in $Q(D)$, and (3) an explanation pattern should not explain the same set of groups explained by another explanation pattern.
Finally, our goal is to find the set of explanation patterns that abide by these constraints and whose overall explainability is maximized. First, we define the {\em coverage} of a grouping pattern $\pattern_g$.
\begin{definition}[Coverage]
Given a database instance \db\ and a query $\Qagg$ with group-by attributes \attrsubset, a grouping pattern $\pattern_g$, and a group $s {\in} Q(D)$, $\pattern_g$ is said to {\bf cover} $s$ if for any tuple $t {\in} D$ such that $t[\attrsubset] = s[\attrsubset]$, it holds that $t \models \pattern_g$, i.e., $t$ satisfies the predicate $\pattern_g$. 
The set of groups in $Q(D)$ covered by $\pattern_g$ is denoted by 
$\cov(\pattern_g)$.
\end{definition}

Next, we define the problem of {\em Summarized Causal Explanations}.
\begin{definition}[Summarized Causal Explanations]
\label{def:problem}
Given a database $\db$, a causal model \model, a query $\Qagg$ with group-by attributes \attrsubset\ and attribute \Aagg for average, a collection of explanation patterns $\{\pattern_i\}_{i=1}^l$, an integer $k {\in} [1, m]$ where $m = |Q(D)|$, and a threshold $\theta {\in} [0,1]$, 
we aim to find a set $\Phi {\subseteq} \{\pattern_i\}_{i=1}^l$ 
of explanation patterns 
such that the following conditions hold:
\begin{itemize}
    \item \textbf{(Size constraint)} $|\Phi| \leq k$.
\item \textbf{(Coverage constraint)} at least $\theta {\cdot} m$ groups from $\Qagg(\db)$ are covered by 
    $\Phi$, i.e., 
    $\cup_{\pattern_g \in \Phi} \cov(\pattern_g) \geq \theta \times m$. 
\item \textbf{(Incomparability constraint)} There are no pairs of explanation patterns, $(\pattern_g, \pattern_t)$ and $(\pattern_g', \pattern_t')$ in $\Phi$ such that $\cov(\pattern_g) = \cov(\pattern_g')$.
\end{itemize}
\textbf{(Objective)} The objective is to maximize the total explainability of $\Phi$ under the above constraints, i.e., maximize $\sum_{\pattern \in \Phi}Explainability(\pattern)$.
\end{definition}

\revb{The size and incomparability constraints ensure that the size of the explanation will be small with no redundancy and therefore it will be more easily grasped by the user. The coverage constraint ensures that the explanation will be extensive and capture a significant part of the view, and the objective aims to maximize the validity of the explanation as the cause of the trends in the view.}

\begin{example}
In Example~\ref{ex:motivating-results} and Figure~\ref{fig:so-explanation}, the parameters $k {=} 3$ and $\theta {=} 1$ are used, i.e., we aim to find a set of at most $3$ explanation patterns that reveal the causes of the outcome for all groups in $\Qagg(\db)$. As Proposition~\ref{prop:hardness} shows, even deciding whether there is a set $\Phi$ of explanation patterns for a given size constraint $k$ and coverage constraint $\theta$ simultaneously is NP-Hard, although for this example, we find a solution 
covering all 20 countries in $Q(D)$ as shown in Figure~\ref{fig:so-barchart}.
\cut{
The candidate grouping patterns include patterns based on the group-by attribute \verb|country| (e.g., \verb|{country = India}|) and those based on the \verb|continent, HDI, GDP, GINI| attributes that depend on \verb|country| (e.g., \verb|{Continet = North America}|). We first must find the treatment pattern with the highest explainability for each grouping pattern, resulting in the space of possible explanation patterns. Here, the number of possible grouping patterns is $75$, and the maximum number of treatment patterns per grouping pattern is $298$.
}
\end{example}

\paratitle{Positive and negative explanation patterns}
\revb{Given an outcome variable (e.g., income), positive explanations correspond to treatments that have an impact on making its value higher (what increases income), and negative explanations are treatments that make its value lower (what reduces income). These positive/negative treatments can vary for different groups (as illustrated in Figure~\ref{fig:so-explanation}). In an application, both or one of them may be valuable. Consequently, our proposed framework supports both. In a prototype with a UI, analysts have the flexibility to choose whether they want to view one or both types of explanations and even top-k positive/negative treatments for a grouping pattern.
This helps understand the cause of both high and low values of outcomes for different groups without explicitly asking for explanations for high and low values as done in previous work \cite{wu2013scorpion, roy2014formal, miao2019going, li2021putting}.}

To generate positive and negative patterns for a grouping pattern, we slightly vary the optimization objective in Definition~\ref{def:problem}. 
For a 
grouping pattern $\pattern_g$, we find a treatment pattern $\pattern^+_{g, tp}$ with the {\em highest explainability value of $(\pattern_g, \pattern^+_{g, t})$} ($\pattern^+_{g, t}$ is called a {\em positive treatment pattern} for $\pattern_g$). A {\em negative treatment pattern} $\pattern^-_{g, t}$ is defined similarly using the lowest explainability value. 
In our system, for each grouping pattern $\pattern_g$ we compute the sum of absolute values of two explainabilities: 
$|Explainability(\pattern_g, \pattern^+_{g, t})| + |Explainability(\pattern_g, \pattern^-_{g, t})|$.
We treat this sum as the weight of the explanation pattern combination $(\pattern_g, \pattern^+_{g, t}, \pattern^-_{g, t})$ and return top-$k$ explanations that satisfy the constraints in Definition~\ref{def:problem}.


\paratitle{Hardness Result and enumeration of search space}
Since in our optimization problem, we want to cover a certain fraction of answer tuples in $Q(D)$ (`elements') with at most $k$ patterns (`sets'), we can show the following NP-hardness result even if we ignore the optimization objective (proof in the full version \cite{fullversion}).

\begin{proposition}\label{prop:hardness}
It is NP-hard to decide whether the Summarized Causal Explanations problem is feasible (i.e., has any solution satisfying the constraints) for a given $k$ and $\theta$.
\end{proposition}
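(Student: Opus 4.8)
The plan is to reduce from a standard NP-complete covering problem, the natural choice being \textsc{Set Cover} (or equivalently \textsc{Exact Cover by 3-Sets}). Given a \textsc{Set Cover} instance with universe $\mathcal{U} = \{u_1, \ldots, u_n\}$, a family of sets $\mathcal{S} = \{S_1, \ldots, S_p\}$ with each $S_j \subseteq \mathcal{U}$, and a budget $k$, I would build a database $D$, a query $Q$, and a collection of explanation patterns such that a feasible solution to the Summarized Causal Explanations problem with size bound $k$ and coverage threshold $\theta = 1$ exists if and only if the \textsc{Set Cover} instance has a cover of size at most $k$. Since the Proposition asks only about feasibility, I can ignore the objective (explainability values) entirely and am free to assign arbitrary CATE values to the patterns.

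First I would construct the instance. Let the group-by attribute $\attrsubset$ take $n$ distinct values $g_1, \ldots, g_n$, one per universe element $u_i$, so that $Q(D)$ has exactly $m = n$ groups. For each set $S_j \in \mathcal{S}$, I introduce a dedicated binary attribute $W_j$ with the functional dependency $\attrsubset \rightarrow W_j$ enforced by the data (every tuple in group $g_i$ has $W_j = 1$ if $u_i \in S_j$ and $W_j = 0$ otherwise); the grouping pattern $\pattern_g^{(j)} : (W_j = 1)$ then covers exactly the groups $\{g_i : u_i \in S_j\}$, i.e., $\cov(\pattern_g^{(j)})$ mirrors $S_j$. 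I then need to supply a treatment pattern for each such grouping pattern so that each forms a valid explanation pattern: I add one more attribute $U$ (not determined by $\attrsubset$) with at least two values present within every group, and let every $\pattern_t$ be $(U = 1)$, which satisfies the overlap condition conditioned on any $\pattern_g^{(j)}$ since each group contains both $U=1$ and $U=0$ tuples. The supplied collection of explanation patterns is $\{(\pattern_g^{(j)}, \pattern_t)\}_{j=1}^p$. To make the incomparability constraint harmless, I would, if necessary, preprocess $\mathcal{S}$ to remove duplicate sets (this does not change the existence of a size-$k$ cover) so that distinct $j$ give distinct coverage sets.

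Next I would argue the equivalence. For the forward direction, any sub-collection $\Phi$ of at most $k$ explanation patterns that covers all $m = n$ groups corresponds to a sub-family of $\mathcal{S}$ of size at most $k$ whose union is all of $\mathcal{U}$, because $\bigcup_{(\pattern_g,\pattern_t)\in\Phi}\cov(\pattern_g) = Q(D)$ translates directly to $\bigcup S_j = \mathcal{U}$; the incomparability constraint is automatically met after the duplicate-removal preprocessing. Conversely, a set cover of size $\le k$ selects $\le k$ of the patterns $(\pattern_g^{(j)}, \pattern_t)$ whose combined coverage is all of $Q(D)$, satisfying the size, coverage ($\theta = 1$), and incomparability constraints, hence a feasible solution. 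The construction is clearly polynomial in the size of the \textsc{Set Cover} instance.

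The main obstacle, and the step that needs the most care, is making sure the gadget respects \emph{all} of the framework's structural requirements simultaneously: the FD $\attrsubset \rightarrow W_j$ must genuinely hold in $D$ for each grouping-pattern attribute, the overlap/validity condition for treatment patterns must hold conditioned on every grouping pattern used, and the incomparability constraint must not accidentally rule out the solution we want. The duplicate-set preprocessing and the single shared treatment attribute $U$ handle the latter two cleanly, but I would double-check that no two grouping patterns in the constructed collection have identical coverage unless the underlying sets were identical. A minor alternative, if one prefers to keep $\theta$ general rather than fixed at $1$, is to pad the universe and sets so that the coverage fraction works out, but fixing $\theta = 1$ is the cleanest route and suffices for the NP-hardness claim.
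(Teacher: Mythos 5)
Your proposal is correct and follows the same core strategy as the paper's proof: a polynomial reduction from \textsc{Set Cover} that targets only the size and coverage constraints, with the explainability objective made irrelevant (every pattern gets explainability $0$). The two constructions differ in the gadget. The paper groups by \emph{all} of the set-indicator attributes $A_1,\ldots,A_{m'}$ so that every tuple forms its own group, adds $m'$ padding tuples $t_{S_j}$ with unique filler values, and sets $\theta = \frac{n'+k'}{n'+m'}$; this extra machinery is needed because the paper argues over the space of all conjunctive grouping patterns and must rule out selections such as $A_i = 0$ that would fail to cover enough of the padding tuples. Your version instead exploits the fact that Definition~\ref{def:problem} takes an explicit collection of candidate explanation patterns as input: you supply exactly one pattern per set, enforce the FD $\attrsubset \rightarrow W_j$ by construction so that $\cov(\pattern_g^{(j)})$ is in bijection with $S_j$, fix $\theta = 1$, and dispose of the incomparability constraint by deduplicating $\mathcal{S}$. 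This gives a cleaner equivalence; the trade-off is that it establishes hardness for the variant where the candidate collection is part of the input, whereas the paper's gadget also covers the reading in which any pattern over the schema may be chosen. Two small items to make explicit for completeness: the query must still carry an average attribute $\Aagg$ (a constant outcome column suffices, which also makes every CATE zero and the objective vacuous), and a causal DAG must be supplied (edges from each treatment attribute to the outcome, as the paper does). Neither affects the validity of your argument.
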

Definition~\ref{def:problem} assumes that the search space of explanation patterns is given, while in practice it is not efficient to enumerate all explanation patterns ahead of time. In Section~\ref{sec:algo} we give an efficient algorithm that give good solutions for the optimization problem without explicitly enumerating all explanation patterns upfront.
\cut{
We first note that \probName\ is NP-hard by reduction from the set cover problem. 
To this end, we define the decision version of \probName\ as a variation of \cref{def:problem} where we are also given a threshold $\tau$ and the goal is to find a set of explanation patterns such that $\sum_{\varphi \in \Phi}explainability(\varphi) \geq \tau$.


Our proof demonstrates that the problem of finding a solution that meets the constraints, even without optimizing the objective, is NP-hard (as it involves solving a coverage problem with a given bound on the solution size).
}

\section{The \algoName\ Algorithm}
\label{sec:algo}
Proposition~\ref{prop:hardness} 
shows that even deciding the feasibility of the Summarized Causal Explanations problem for a given $k$ and $\theta$ is NP-hard. Further, it is not practical to enumerate all explanation patterns and compute their explainability upfront. Given four-dimensional desiderata in Definition~\ref{def:problem}
(unlike the standard set-cover or max-cover problems that have two), it is non-trivial to design a good approximation algorithm or heuristics for this problem. In this section, we present the \algoName\ algorithm (for {\em \underline{Cau}sal \underline{Sum}marized E\underline{X}planations}) that aims to address these challenges.

\revb{A brute-force approach considers all grouping and treatment patterns and results in long runtimes (as we demonstrate in Section \ref{subsec:exp-runtime}). Instead, \algoName\ employs the Apriori algorithm~\cite{agrawal1994fast} to mine frequent grouping patterns (the coverage of such patterns is monotone) as a heuristic for finding promising grouping patterns that are short (we ignore the rest). To mine treatment patterns (which are non-monotone for CATE) we adopt a lattice traversal approach and use a greedy heuristic materializing only promising treatment patterns. Both steps can lose optimality. In Section \cref{sec:experiments} we compare \algoName\ against \bruteforce\ and show that while \algoName\ is much faster, the difference in results (in terms of explainability, coverage, and accuracy) is modest.}




\vspace{1mm}
\noindent
\textbf{Overview}.
The pseudo-code in Algorithm \ref{algo:full_algo} outlines the operation of \algoName. It takes a database \db, an integer $k$, a query \Qagg\ with an outcome $\Aagg$ and grouping attributes $\attrsubset$, and a coverage threshold $\theta$ as input. The output is a set of explanation patterns $\Phi$. The algorithm consists of three steps: (1) extracting candidate grouping patterns (line \ref{line:init}). 
(2) Focusing on promising treatment patterns for each extracted grouping pattern, \emph{materializing and evaluating only them} (lines \ref{l:iterate-candidates}--\ref{l:top-treatment}). (3) Utilizing Linear Programming (LP) to obtain a set of explanation patterns (line \ref{l:step3}). 





\begin{algorithm}[t]
\footnotesize
\SetKwInOut{Input}{input}\SetKwInOut{Output}{output}
\LinesNumbered
\Input{A database relation \db, an integer $k$, a query \Qagg\ with an aggregate attribute $\Aagg$ and grouping attributes $\attrsubset$, and a coverage threshold $\theta$}
\Output{A set $\Phi$ 
of explanation patterns.} \BlankLine
\SetKwFunction{NextBestExplanation}{\textsc{NextBestExplanation}}
\SetKwFunction{SolveLP}{\textsc{SolveLP}}
\SetKwFunction{GetGroupingPatterns}{\textsc{GetGroupingPatterns}}
\SetKwFunction{ExplanationSummary}{\textsc{ExplanationSummary}}
\SetKwFunction{GetTopTreatment}{\textsc{GetTopTreatment}}
$\Phi \gets \emptyset$\;
$Candidates \gets \GetGroupingPatterns (\db, \Aagg, \attrsubset)$\tcp*{\cref{subsec:grouping_patterns}} \label{line:init}   
 \For{$\pattern_g \in Candidates$}{\label{l:iterate-candidates}
    $\pattern_g.\pattern_t \gets \GetTopTreatment(\pattern_g, k, \Aagg, \db)$\tcp*{\cref{subsec:treatment_patterns}}\label{l:top-treatment}
 }
  $\Phi \gets \SolveLP(Candidates, k, \theta)$\tcp*{\cref{subsec:step_3}}\label{l:step3}

\Return $\Phi$\label{line:returnTop}\\
\caption{The \algoName\ algorithm}\label{algo:full_algo}
\end{algorithm}

\subsection{Mining Grouping Patterns}
\label{subsec:grouping_patterns}
Considering every possible grouping pattern (\cref{def:explanation-pattern}) is infeasible as their number is exponential ($O(agrmax_{A_i {\in} \attrset}|\dom(A_i)|^{|\attrset|}$). 
Instead, our approach utilizes the Apriori algorithm~\cite{agrawal1994fast} to generate candidate patterns. The Apriori algorithm gets a threshold $\tau$, and ensures that
the mined patterns are present in at least $\tau$ tuples of $\db$.
Formally, given a set of attribute 
$W {\subseteq} \attrset$
s.t the FD $\attrsubset {\rightarrow} W$
holds in $\db$, we apply the Apriori to extract frequent patterns defined solely by these attributes. The algorithm guarantees that each mined pattern covers at least $\tau$ tuples from $\db$ and is well-defined over $\Qagg(\db)$, making them promising candidates for covering the necessary number of groups (see item (2) in \cref{def:problem}).





\paratitle{Post-Processing}
Certain extracted grouping patterns may be superfluous, i.e., the set of groups they define from $\Qagg(\db)$ could be indistinguishable.
\revc{Redundant grouping patterns can emerge even in the absence of FDs among attributes. To illustrate, consider the patterns} \verb|{HDI = High}| and \verb|GDP = High|, \revc{which both identify the same set of countries. However, it is possible that two countries with medium HDI values exhibit different GDP values. Thus, there is no functional dependency between} \verb|HDI| and \verb|GDP|. \revc{Nevertheless, two grouping patterns defined by these attributes can still define the same set of tuples, rendering the consideration of both redundant.}
Therefore, following the mining stage, we remove redundant grouping patterns to ensure the obtained solution will satisfy the incomparability constraint (item (3) in \cref{def:problem}). Additionally, we favor more succinct patterns as they are easier to comprehend. To achieve this, we utilize a hash table that records the groups from $\Qagg(\db)$ associated with each mined grouping pattern. In each group set, we retain only the shortest grouping pattern.



\vspace{-2mm}
\subsection{Mining Treatment Patterns}
\label{subsec:treatment_patterns}
As opposed to a standard causal analysis setting where a causal question of the form ``What is the effect of treatment $T$ on outcome $Y$?'' is posed, in our setting, {\bf we aim to find $T$ that yields the highest effect on $Y$.} 
Our subsequent goal, as discussed at the end of Section~\ref{subsec:problem}, is to identify a positive treatment pattern $\pattern^+_{tg}$ and a negative treatment pattern $\pattern^-_{tg}$ for each mined grouping pattern $\pattern_g$, which respectively explain the cause for high and low outcomes of tuples in $D$ that satisfy $\pattern_g$. However, for simplicity and without loss of generality, we 
present the approach for finding positive explanation patterns only denoted by $\pattern_t$ (i.e., treatments that have the highest positive CATE for a grouping pattern $\pattern_g$). 

\revb{Since the number of potential treatment patterns for $\pattern_g$ can be large (exponential in $|\attrset|$), 
we propose a greedy heuristic approach to materialize and assess the CATE only for promising treatment patterns. This is done by leveraging the notion of lattice traversal~\cite{DeutchG19,asudeh2019assessing}. 
In particular, the set of all treatment patterns
can be represented as a lattice where nodes correspond to patterns and there is an edge between $\pattern_t^1$ and $\pattern_t^2$
if $\pattern_t^2$
can be obtained
from $\pattern_t^1$ by adding a single predicate. This lattice can be traversed in a top-down fashion while generating each node at
most once. 
Since not all nodes correspond to treatments that have a positive CATE, we only materialize nodes if all their parents have a positive CATE. 
The primary distinction from existing solutions (e.g., ~\cite{asudeh2019assessing}) lies in the non-monotonic nature of CATE (meaning that adding a predicate to a treatment pattern can either increase or decrease its CATE value). Consequently, exclusively materializing nodes where all of their parents exhibit a positive CATE might lead the algorithm to overlook certain relevant treatment patterns.}
For example, we observe that for the pattern $\pattern_{t1} {=} $\verb|(role = QA)|, by adding the predicate \verb|(education = MA)|, its CATE value increases, while the CATE decreases by adding the predicate \verb|(education = no degree)|.
\revb{
However, according to our experiments (\cref{subsec:quality_eval}), combining treatment patterns that exhibit a positive CATE is highly likely to result in a treatment with a positive CATE as well. As a result, we 
account for most treatments with a positive CATE, and the accuracy of this algorithm is relatively high.}  

Algorithm \ref{algo:treatment_patterns} describes the search for the treatment pattern $\pattern_t^{max}$ with the highest positive CATE value for a given grouping pattern $\pattern_g$. It traverses the
pattern lattice in a top-down manner, starting from patterns with a single atomic predicate in line 2. 
In particular, the function {\tt GenChildren} generates all atomic predicates of the form $A_i~ {\tt op}~ a_j$. 
For each treatment pattern, it evaluates its CATE, 
and discards the ones with CATE values that do not have the same sign $\sigma$, either $+$ or $-$ (line~3). 
The algorithm stores the pattern with the highest positive CATE identified thus far (line~4). It then proceeds to traverse the pattern lattice only for nodes that satisfy the condition of having all their parents with a positive CATE (line~7-8), and updates $\pattern_t^{max}$ if a pattern with a higher CATE was found (lines~9-12). It terminates at the first level, which does not include the maximum value recorded (lines~13-14).


      

\setlength{\textfloatsep}{2px}
\SetInd{1.3ex}{1.3ex}
\begin{algorithm}[t]
\footnotesize
\SetKwInOut{Input}{input}\SetKwInOut{Output}{output}
\LinesNumbered
\Input{A grouping pattern $\pattern_g$, the outcome attribute $\Aagg$, the dataset $\db$, a causal DAG \causalG, and direction $\sigma {\in} \{+, -\}$.}
\Output{A treatment pattern $\pattern_t$.} \BlankLine
    \SetKwFunction{GenChildren}{\textsc{GenChildren}}
    \SetKwFunction{GenChildrenNextLevel}{\textsc{GenChildrenNextLevel}}
     \SetKwFunction{ComputeCATEnFilter}{\textsc{ComputeCATEnFilter}}
          \SetKwFunction{GetTopTreatment}{\textsc{GetTopTreatment}}

\tcc{Get all single-predicate patterns.}
$\boldsymbol{C} \gets \GenChildren(\db, \Aagg, \pattern_g)$\; \label{line:init}  

$\boldsymbol{C} \gets \ComputeCATEnFilter(\boldsymbol{C}, \db, \Aagg, \pattern_g, \causalG)$\; \label{line:filter}  

$\pattern_t^{max} \gets \GetTopTreatment(\boldsymbol{C})$\; \label{line:initR}
      
\While{True}
{\label{line:whileStart}
\tcc{Get patterns in the next level.}
     $\boldsymbol{C} \gets \GenChildrenNextLevel(\boldsymbol{C}, \sigma)$\;

  $\boldsymbol{C} \gets \ComputeCATEnFilter(\boldsymbol{C}, \db, \Aagg, \pattern_g, \causalG)$\;

     $\pattern_t \gets \GetTopTreatment( \boldsymbol{C})$\;
     \If{$\pattern_t.CATE > \pattern_t^{max}.CATE$}{
     $\pattern_t^{max} \gets \pattern_t$\;
     } 
     \Else{
     Break\;
     }
}
\Return $\pattern_t^{max}$\;\label{line:returnTop}
\caption{Top treatment pattern for a grouping pattern 
}\label{algo:treatment_patterns}

\end{algorithm} 


\begin{figure}[t]
\centering
\vspace{-5mm}
\includegraphics[scale = 0.25]{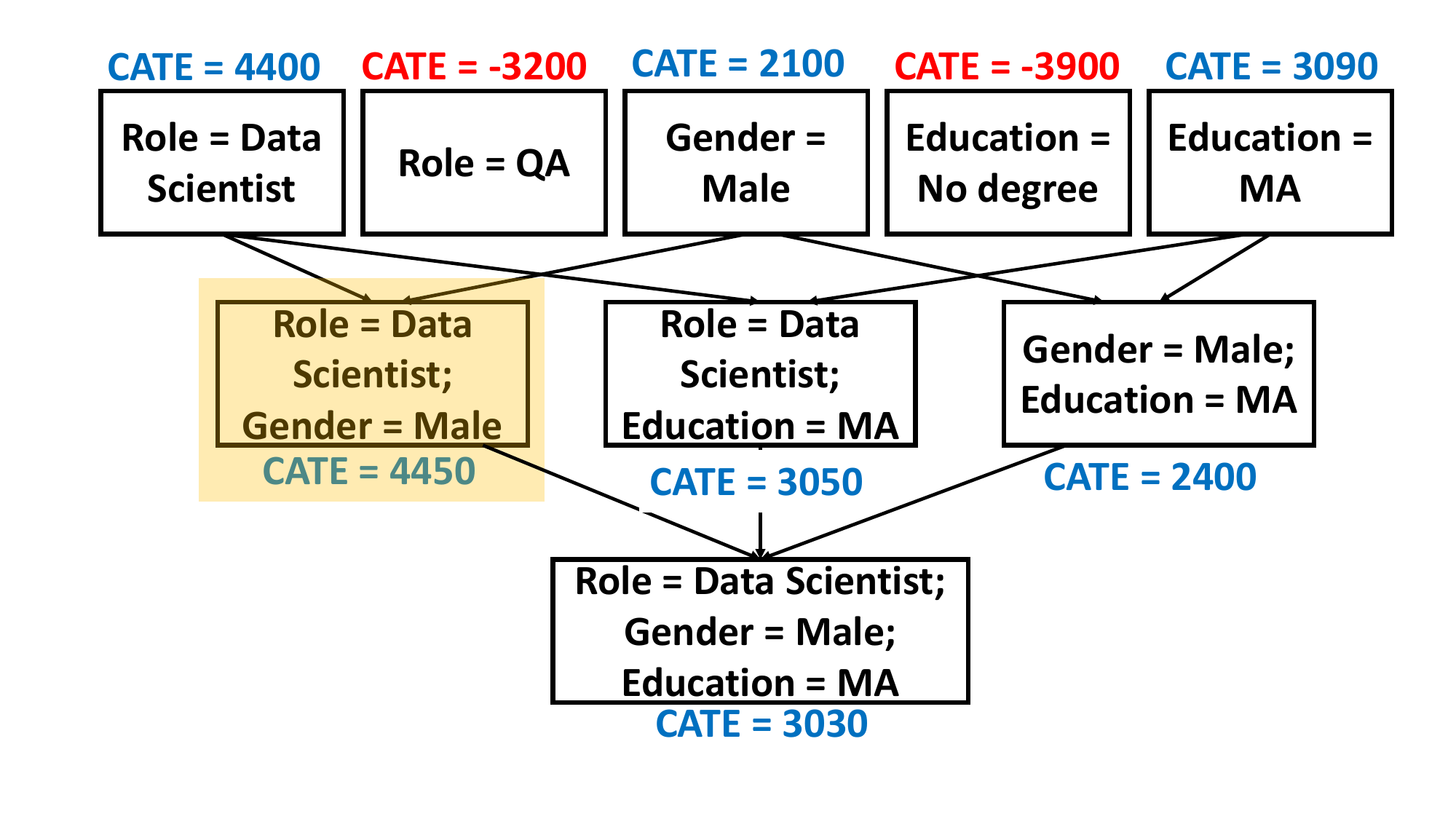}
 \vspace{-2mm}
\caption{Partial treatment-patterns lattice (Example~\ref{ex:lattice}).} \label{fig:treatments_graph}
\end{figure}

\begin{example}\label{ex:lattice}
We illustrate the operation of Algorithm \ref{algo:treatment_patterns} using Figure \ref{fig:treatments_graph}. Initially, it considers all treatment patterns with a single predicate (line 2). Assuming $\sigma {=} +$ (searching for the treatment pattern with the highest CATE), it moves to the next level (line 5). At this level, patterns with two predicates are considered only if both of their parents have a positive CATE. Thus, the pattern \verb|{Role = QA, Gender = Male}| is excluded as the CATE of \verb|Role = QA| is negative (line 8). The treatment pattern with the maximum CATE ($\pattern_t^{max}$) is found at the second level (marked in yellow) (line 9). After another iteration and generating the third level of the lattice, the algorithm does not proceed to the fourth level because $\pattern_t^{max}$ is not present in the third level (line 10).
\end{example}


By changing $\sigma$, 
Algorithm \ref{algo:treatment_patterns} can also find the treatment patterns with the lowest negative CATE. It can also be generalized to output the highest explainability (i.e., the highest absolute CATE value), or both the patterns with the highest and lowest CATE values.  

\paratitle{Optimizations}
We implement more optimizations for \cref{algo:treatment_patterns}.\\
\textbf{(a) Pruning attributes:} We eliminate attributes that do not have a causal relationship with the outcome attribute (lines 2, 7). Since these attributes have no impact on CATE values, they can be disregarded. We can detect such attributes by utilizing the input causal DAG or by removing attributes with low correlation to the outcome.\\
\textbf{(b) Pruning treatments:} In constructing the lattice (lines 2, 7), we exclude patterns with a near-zero CATE.
We observe that combining patterns with a CATE close to zero value often yields a similar result.
Consequently, when advancing to the next lattice level, we only consider the top 50\% patterns with the highest or lowest CATE.\\
\textbf{(c) Parallelism:} The process of extracting treatment patterns for each grouping pattern (lines 2, 7) can be performed in parallel since this procedure is dependent only on the grouping pattern. \\
\textbf{(d) Estimating CATE Values \revb{by Sampling}:}
\revb{To decrease runtime, we use a fixed-size random sample of tuples to estimate CATE values (lines 3, 8) and investigate the impact of sample size on runtime and accuracy (\cref{subsec:exp-sensitivity}). Random sampling of tuples is done only for evaluating CATE values. Our study shows that using a random sample size of $1$ million tuples yields CATE estimates that closely match those obtained from the entire dataset.}

\vspace{-2mm}
\subsection{Linear Program Formulation and Rounding}
\label{subsec:step_3}
\cut{Given a collection of explanation patterns, a threshold $\theta$, and an integer $k$, our goal is 
to solve the \probName\ problem. However, as discussed in \cref{subsec:problem}, this problem is NP-hard (in fact, even finding a solution that satisfies the constraints is NP-hard). 
}
Proposition~\ref{prop:hardness} shows that finding any feasible solution to the optimization problem in Definition~\ref{def:problem} is NP-hard. Our study shows that intuitive combinatorial greedy algorithms targeting the size and coverage constraints and maximizing explainability are unable to find good solutions; hence, we use an LP-rounding algorithm. 
\par
Given 
a collection of explanation patterns $\{\pattern_j\}_{j=1}^l$ with weights $w_j$  corresponding to their explainability, an integer $k$, and a threshold $\theta$, we construct the Integer Linear Program (ILP) shown in \cref{fig:lp} (which extends the ILP for the \emph{max-k-cover} problem). Here $g_j$ are the variables for patterns $\pattern_j$. $t_i$, $i = 1$ to $m$ are the variables for $m$ groups in $Q(D)$. $s_i {\in} \pattern_j$ denotes that $s_i$ satisfies $\pattern_j$.
\par
We consider the LP-relaxation of this ILP with variables in $[0, 1]$ instead of $\{0, 1\}$, then use an LP-solver to find solutions. If no solution is returned, we know that the original ILP had no feasible solution either. If any solution is returned by the LP, the original ILP may or may not have a feasible solution. We use the standard randomized rounding algorithm for max-k-cover \cite{raghavan1987randomized} (sample $k$ patterns with probability $\frac{g_j}{k}$), which guarantees at most $k$ sets, an $(1 - \frac{1}{e})$-approximation to the coverage constraint $\sum_{i = 1}^m t_i \geq \theta \cdot m$, and a $\frac{1}{k}$-approximation to the maximization objective $\sum_{j = 1}^l g_j \cdot w_j$ in the LP as well as the ILP since $OPT \ LP \geq OPT \ ILP$ (details in the full version \cite{fullversion}). 
\revb{However, the approximation guarantees of the proposed LP formulation are solely theoretical since the approximation to the objective holds when \emph{all explanation patterns are considered}. In practice, we use this LP-rounding algorithm in conjunction with the grouping and treatment pattern mining procedures described in Sections \ref{subsec:grouping_patterns} and \ref{subsec:treatment_patterns}, therefore incur a trade-off between value and efficiency and lose the theoretical guarantees. }


\cut{
\textbf{Variables}: 
Let $g_j$, $j {=} 1,\ldots, l$ be
indicator variables for explanation pattern $\pattern_j$ with fixed weights $w_j$. 
Let $t_i, i {=} 1 \ldots, m$ be indicator variables for covering the $i$-th group $s_i$ in $\Qagg(\db)$. We denote $s_i {\in} \pattern_j$ when $s_i$ satisfies $\pattern_j$. 
\textbf{Constraints}: The size constraint (1) ensures that no more than $k$ explanation patterns are selected. 
The consistency constraints (2) ensure that a group in $\Qagg(\db)$ is explained (covered) if at least one explanation pattern that covered it is selected. 
The coverage constraint (3) ensures that at least $\theta$ fraction of the groups are covered. Constraint (4) determines the binary domain of the variables. \\
\textbf{Objective}: maximize the overall explainability of chosen patterns.
}

\begin{figure}
    \centering
    \begin{tcolorbox}[colback=white]
    \vspace{-2mm}
    \begin{equation*}
    \begin{aligned}
    \max \quad & \sum_{j = 1}^l g_j \cdot w_j \quad 
    \textrm{s.t.} \quad (1)~\sum_{j = 1}^l g_j \leq k, \quad 
      (2)~t_i \leq \sum_{j: T_i \in \pattern_j} g_j ~\forall i = 1 \text{ to }m,     \\
      &(3)~\sum_{i = 1}^m t_i \geq \theta \cdot m, \quad 
      (4)~t_i, g_j \in \{0, 1\} ~\forall i = 1 \text{ to }m, ~\forall j = 1 \text{ to }l
    \end{aligned}
    \end{equation*}
    \vspace{-2mm}
    \end{tcolorbox}
    \caption{ILP for optimization problem  (line \ref{l:step3} in \cref{algo:full_algo}).}
    \label{fig:lp}
\end{figure}


\cut{
In the LP relaxation of ILP in \cref{fig:lp}, $t_i, g_j \in [0, 1]$.
Note that the constraints form an NP-hard problem of {\em max-k-cover}, whether there are $k$ sets (patterns) that can cover at least $\theta {\cdot} m$ elements (groups), hence just deciding whether there are any feasible solutions to the ILP is NP-hard, which is to be expected from \cref{prop:hardness}.
}

\cut{
The solution is determined by the values of the variables $g_i$, indicating the selected explanation patterns. We compute a solution by using an LP solver, then apply a standard rounding procedure~\cite{raghavan1987randomized} to select $k$ explanation patterns. We can show that if all grouping and treatment patterns are considered, the randomized rounding procedure yields a solution that meets the size and incomparability constraints while approximating both the coverage constraint and the objective. Full details are given in the full version \cite{fullversion}. 
}

\cut{
Therefore, we propose an LP formulation for the problem in Definition~\ref{def:problem} and utilize existing LP solvers (e.g., \cite{MouraB08}) to find a fractional solution, which is rounded to an integral solution by a randomized algorithm as used in the max-k-cover problem.
We show that \emph{if all grouping and treatment patterns are considered}, this solution meets the size and incomparability constraints\footnote{The incomparability is guaranteed by the post-processing step of the Apriori algorithm.} while \emph{approximating} the coverage constraint and the objective. 
However, as mentioned, considering all grouping and treatment pattern combinations poses a significant computational bottleneck (see \cref{subsec:variants}). Therefore, in our \algoName\ algorithm, we employ this LP formulation over the explanation patterns obtained by the grouping and treatment pattern mining procedures described in Sections \ref{subsec:grouping_patterns} and \ref{subsec:treatment_patterns}, respectively. 
}

\paratitle{Time complexity analysis for \cref{algo:full_algo}} 
The maximum number of explanation patterns in a database $D$ with attributes $\attrset$ is bounded by $|D|^{|\attrset|}$ (considering both grouping and treatment patterns and active domain of attributes), which is polynomial in \emph{data complexity} assuming a fixed schema \cite{Vardi82}. The number of patterns dominates the number of variables in the ILP in Figure~\ref{fig:lp}. Hence even if all patterns are enumerated and evaluated explicitly, the LP relaxation of the ILP can be solved (\sysName\ uses z3 \cite{MouraB08}) and rounded to an integral solution in polynomial time in $|D|$. The additional operations in this section (e.g., computation of CATE in \cref{algo:treatment_patterns}) are polynomial in $D$, giving a worst-case polynomial data complexity of \sysName. However, $|D|^{|\attrset|}$ can have a large value for large $|D|$ and $\attrset$. The suite of optimizations developed in this section reduces the running time of \sysName\ (ref. Section~\ref{sec:experiments}).


\cut{
Recall that we first mine the grouping patterns using the Apriori algorithm \cite{apyori}. The worst-case complexity of the algorithm is exponential in the data size ($O(2^{|\db|})$, though in practice, since the space of attributes with FDs to $\attrsubset$ is small, this algorithm is efficient. 
In the second step, we mine treatment patterns for each obtained grouping pattern (\cref{algo:treatment_patterns}).
The complexity of \cref{algo:treatment_patterns} is $O(C \cdot |\attrset| \cdot argmax_{A_i \in \attrset}|\dom(A_i)|^{|\attrset|})$ since this is the number of nodes in the lattice in the first layer (line 2). Each subsequent layer has fewer nodes and does not increase the complexity since we prune treatments and take only the top ones, this holds. The number of iterations of the loop in line 5 is at most $|\attrset|$ (the maximum length of a path in the lattice). We further need to compute the CATE of each node which is denoted by $C$. Our optimizations (detailed in \cref{subsec:treatment_patterns}) make this  algorithm to be highly efficient in practice. 
Finally, the complexity of the LP solver depends on its implementation (\sysName\ uses z3 \cite{MouraB08}),
but is polynomial. 
Despite an exponential running time in theory, \cref{algo:full_algo} is efficient in practice, and in particular, far surpasses the performance of the brute-force approach, as we show in \cref{sec:experiments}.
}

\section{Experimental Evaluation}
\label{sec:experiments}
We present experiments that evaluate the effectiveness and
efficiency of our proposed framework. We aim to address the following research questions. $\mathbf{Q1}$: What is the
quality of our explanations, and how does it compare to that of existing methods? $\mathbf{Q2:}$ How does each phase of \algoName\ contributes to its ability to find an explanation summary that satisfies our optimization goal and constraints?  
$\mathbf{Q3:}$ What is the efficiency of the \algoName\ algorithm? $\mathbf{Q4}$: How sensitive are the explanations to various parameters?

\paratitle{Prototype implementation}
\algoName\ was written in Python, and is publicly available in~\cite{fullversion}. 
CATE values computation was performed using the DoWhy library~\cite{dowhypaper}, utilizing their linear regression approach. 
\sysName\ generates a solution in natural language using predefined templates, as shown in Figure \ref{fig:so-explanation}. Those templates were generated via prompt questions to ChatGPT~\cite{chatgpt}, asking it to transform predicates into a human-readable text.

\subsection{Experimental Setup}
\label{subsec:setup}

\begin{table}
	\centering
		\caption{Examined datasets.}
			\label{tab:datasets}
	\begin{tabular} 
 {ccccc}
		\toprule
	\textbf{Dataset} & \textbf{tuples}& \textbf{atts}& \textbf{max values per att}&\textbf{grouping patterns}

	 \\
		\midrule

    German~\cite{asuncion2007uci} &1000&20&53&10\\
     Adult~\cite{adult} &32.5K&13&94&13\\
SO~\cite{stackoverflowreport}&38K&20&20&75\\
IMPUS-CPS~\cite{flood2015integrated}&1.1M&10&67&9\\
   Accidents~\cite{moosavi2019countrywide}&2.8M&40&127&15\\
   	

				\bottomrule
	\end{tabular}
\end{table}


\paratitle{Datasets}
We examine multiple commonly used datasets: \\
\textbf{\german:} This dataset contains details of bank account holders, including demographic and financial information, along
with their credit risk. The
causal graph was used from \cite{chiappa2019path}.\\
\textbf{\adult:} This dataset comprises demographic information of individuals along with their education, occupation, annual income, etc. We used the causal graph from \cite{chiappa2019path}. \\
\textbf{\so}: This is discussed in Example~\ref{ex:running_example}. 
\cut{
Stack Overflow's annual developer survey is a survey of people who code around the world.
It contains information
about developers' such as their age, income, and 
country. As there is a significant difference in the economies of the countries, we expanded this dataset by including attributes that describe the economy of each country, such as the \textsc{Gini}, \textsc{HDI}, and \textsc{GDP}.
}
The 
causal DAG was constructed using 
\cite{youngmann2023causal}.\\
\textbf{\impus:} 
This dataset is derived from the Current Population Survey conducted by the U.S. Census Bureau, 
which includes 
demographic details 
for individuals, e.g., education, occupation, and annual income. 
We adopted the causal dag from~\cite{chiappa2019path}.\\
\textbf{\accidents:} This dataset provides comprehensive coverage of car accidents across the USA. It includes numerous environmental stimuli features that describe the conditions surrounding the accidents, such as visibility, precipitation, and traffic signals. To construct a causal DAG, we followed the methodology outlined in~\cite{youngmann2023causal}.

\noindent\textbf{Synthetic:} 
\common{We constructed a schema comprising the attributes $G, G_1, \ldots, G_i, T_1, \ldots, T_j, O$, where: $G$ serves as the grouping attribute in the query, with each tuple taking a unique value ranging from 1 to $n$ (where $n$ is the number of tuples). Attributes $G_1, \ldots, G_i$ are used to establish grouping patterns. Each attribute divides the values of $G$ into varying numbers of buckets. Attributes $T_1, \ldots, T_j$ are employed to define treatment patterns. Each tuple is assigned a random value between 1 and 5 for each attribute independently. The outcome $O$ is defined as: $T_1 - T_2 + T_3 -\ldots+T_j$. 
There is a large number of grouping and treatment patterns to be considered.
For each grouping pattern, the treatment patterns that exhibit the highest causal effects encompass all the attributes $T_1, T_2, \ldots, T_j$. For instance, treatment patterns with high positive causal effect on $O$ are characterized by high values for odd $T$ attributes and low values for even $T$ attributes (e.g., ${T_1 = 5, T_2 = 1, T_3 = 5, \ldots}$). 
}

\vspace{1mm}
\paratitle{Baselines}
We compare \algoName\ with the following baselines:\\
 \textbf{\bruteforce}: The optimal solution according to \cref{def:problem}. This algorithm implements an exhaustive search over all possible grouping and treatment pattern combinations.\\
\textbf{\ids}: The authors of \cite{lakkaraju2016interpretable} have proposed a framework for generating Interpretable Decision Sets (\ids) for prediction tasks. The framework incorporates parameters restricting the percentage of uncovered data tuples and the number of rules. These parameters were assigned the same values used in our system. \\
\textbf{\frl}: The authors of \cite{chen2018optimization} introduce a framework for producing Falling Rule Lists (\frl) as a probabilistic classification model. FRLs consist of a sequence of if-then rules, with the if-clauses containing antecedents and the then-clauses containing probabilities of the desired outcome. The order of rules in a falling rule list reflects the order of the probabilities of the outcome.\\
\textbf{\exptable}: \revc{The authors of \cite{el2014interpretable} introduced an efficient method to generate \emph{explanation tables} for multi-dimensional datasets. The proposed algorithm employs an information-theoretic approach to select patterns that provide
the most information gain about the distribution of the outcome attribute. 
Since this algorithm does not consider an input SQL query, we have included another variant, denoted as \exptableG, to consider the query. This variant considers the grouping patterns found by \algoName\ and reports the pattern found by \exptable\ separately for each group. }\\
\common{\textbf{\xinsight}: The authors of~\cite{abs-2207-12718} introduced a framework that employs causal discovery and identifies causal patterns to explain the differences between two groups in a SQL query result. We generate an explanation by comparing all $\binom{m}{2}$ pairs of groups in $Q(\db)$. For a fair comparison, we skipped the causal discovery phase and gave \xinsight\ the causal DAG \algoName\ uses. }

\vspace{1mm}
Since \ids, \frl, and \exptable\ assume a binary outcome attribute, we binned the outcome variable in each examined scenario using the average outcome values. \reva{We gave \frl, \ids, and \exptable\ the input table, as they do not consider a SQL query, obtaining a single set of rules for each dataset.}

We also considered ChatGPT~\cite{chatgpt} as a baseline. We provided it with multiple prompts comprising a task description, SQL query, causal DAG, and dataset link. However, ChatGPT's responses consistently indicated its lack of direct access to external datasets. Consequently, it couldn't provide specific insights on the datasets, offering only general domain-related insights instead. Although these insights included identifying attributes with strong causal effects on the outcome, they did not provide specific patterns. Consequently, we excluded this baseline from presentation. 

\vspace{1mm}
\paratitle{Variations of \algoName} We consider the following variations:
 \textbf{\bruteforcelp}: As in \bruteforce, all grouping and treatment patterns are examined. In the final step, an approximated solution is obtained by employing the LP formulation described in \cref{subsec:step_3}.\\
\textbf{\greedy}: This baseline utilizes the approaches described in \cref{sec:algo} to generate promising grouping and treatment patterns. In the final step, 
it employs a greedy strategy instead of solving an LP. The strategy involves iteratively selecting explanation patterns based on their explainability and the increase in coverage they offer. 

\vspace{1mm}
Unless otherwise specified, the size constraint is set to 5, and the coverage threshold is set to 0.75. The threshold of the Apriori algorithm is set to 0.1. 
\ids, \frl, and \exptable\ use their default parameters. The time cutoff is set to $3$ hours. The experiments were executed on a PC with a
$4.8$GHz CPU, and $16$GB memory.

\subsection{Quality Evaluation ($Q_1$)}
\label{subsec:quality_eval}
We assess the quality of our explanations relative to the baselines when examining a range of queries on diverse real-world datasets. Since no ground truth is available, we examine the consistency of our findings with insights from prior research (as was done in \cite{galhotra2022hyper, salimi2018bias, youngmann2022explaining}). For each dataset, we present the result for a representative aggregated SQL query. Our queries are inspired by real-life sources, such as the Stack Overflow annual reports~\cite{stackoverflowreport}, media websites (e.g., The 19th Newsletter~\cite{dads}), and academic papers (e.g., \cite{salimi2018bias,aljaban2021analysis}). 
More details and additional use cases are provided in \cite{fullversion}.

 \begin{figure}[t]
        \centering
        \begin{minipage}[t]{1.0\linewidth}
            \small
            \begin{tcolorbox}[colback=white]
\vspace{-2mm}
  \textsf{$\bullet$ \magenta{\underline{For countries in Europe}} (e.g., Spain, Italy), the most substantial effect on high salaries (effect size of 34K, $p {<}$ 1e-3) is observed for \textcolor{blue}{\underline{individuals under $45$}}. Conversely, \textcolor{red}{\underline{being under $25$}} has the greatest adverse impact on salary (effect size of -32K, $p {<}$ 1e-3 ).}\\
                \textsf{$\bullet$ \magenta{\underline{For countries with a high GDP level}} (e.g., Sweden, Spain), the most substantial effect on high salaries (effect size of 40K, $p {<}$ 1e-3 ) is observed for \textcolor{blue}{\underline{white individuals under $35$}}. Conversely, \textcolor{red}{\underline{being over $55$} }has the greatest adverse impact on salary (effect size of -34K,$p {<}$ 1e-4 ).}\\
                \textsf{$\bullet$\magenta{\underline{For countries with a high Gini coefficient} }(e.g., Turkey, Brazil), the most substantial effect on high salaries (effect size of 29K, $p {<}$ 1e-4) is observed for \textcolor{blue}{\underline{white individuals under $45$}}. Conversely, being \textcolor{red}{\underline{being over $55$} }has the greatest adverse impact on salary (effect size of -28K, $p {<}$ 1e-3).}
            \vspace{-2mm}
            \end{tcolorbox}
            
        \end{minipage}
        \caption{\revb{Causal explanation summary by CauSumX for \so\ use-case (sensitive attributes only)}.}
        \label{fig:so-output-sensitive}
    \end{figure}
    
     \begin{figure}[t]
        \centering
        \begin{minipage}[b]{1.0\linewidth}
            \small
            \begin{tcolorbox}[colback=white]
            \vspace{-2mm}
      \textsf{$\bullet$ \magenta{\underline{For cities in the Northeast region}} (e.g., Boston, Albany), the combination of \textcolor{blue}{\underline{overcast weather conditions and low visibility}} has been found to have a substantial positive effect on severity (effect size of 0.55, p < 1e-3). \textcolor{red}{\underline{The presence of traffic signals}} has been identified as the factor with the largest adverse impact on severity (effect size of -0.42, p < 1e-4).} \\
      \textsf{$\bullet$ \magenta{\underline{For cities in the Midwest region}} (e.g., Chicago, Detroit), the combination of \textcolor{blue}{\underline{cold temperatures and snow}} has been found to have a substantial positive effect on severity (effect size of 0.61, p < 1e-3). \textcolor{red}{\underline{Clear weather}} has been identified as the factor with the largest adverse impact on severity (effect size of -0.31, p < 1e-3).} \\
      \textsf{$\bullet$ \magenta{\underline{For cities in the South region}} (e.g., Huston, Miami), \textcolor{blue}{\underline{rain}} has been found to have a substantial positive effect on severity (effect size of 0.3, p < 1e-3). The presence of \textcolor{red}{\underline{traffic calming measures}} has been identified as the factor with the largest adverse impact on severity (effect size of -0.44, p < 1e-3).} \\
      \textsf{$\bullet$ \magenta{\underline{For cities in the West region}} (e.g., Phoenix, Los Angeles), \textcolor{blue}{\underline{the absence of traffic signals and traffic}} \textcolor{blue}{\underline{calming measures}} has been found to have a substantial positive effect on severity (effect size of 0.53, p < 1e-4). \textcolor{red}{\underline{City roads}} (as opposed to highways) has been identified as the factor with the largest adverse impact on severity (effect size of -0.25, p < 1e-4).}
            \end{tcolorbox}
        \end{minipage}
        \caption{\revb{Causal explanation summary by CauSumX for \accidents use-case}.}
        \vspace{-3mm}
        \label{fig:accidents}
    \end{figure}

\begin{figure*}[t]
  \centering
  \begin{subfigure}[b]{0.32\textwidth}
    \centering
    \includegraphics[width=\textwidth]{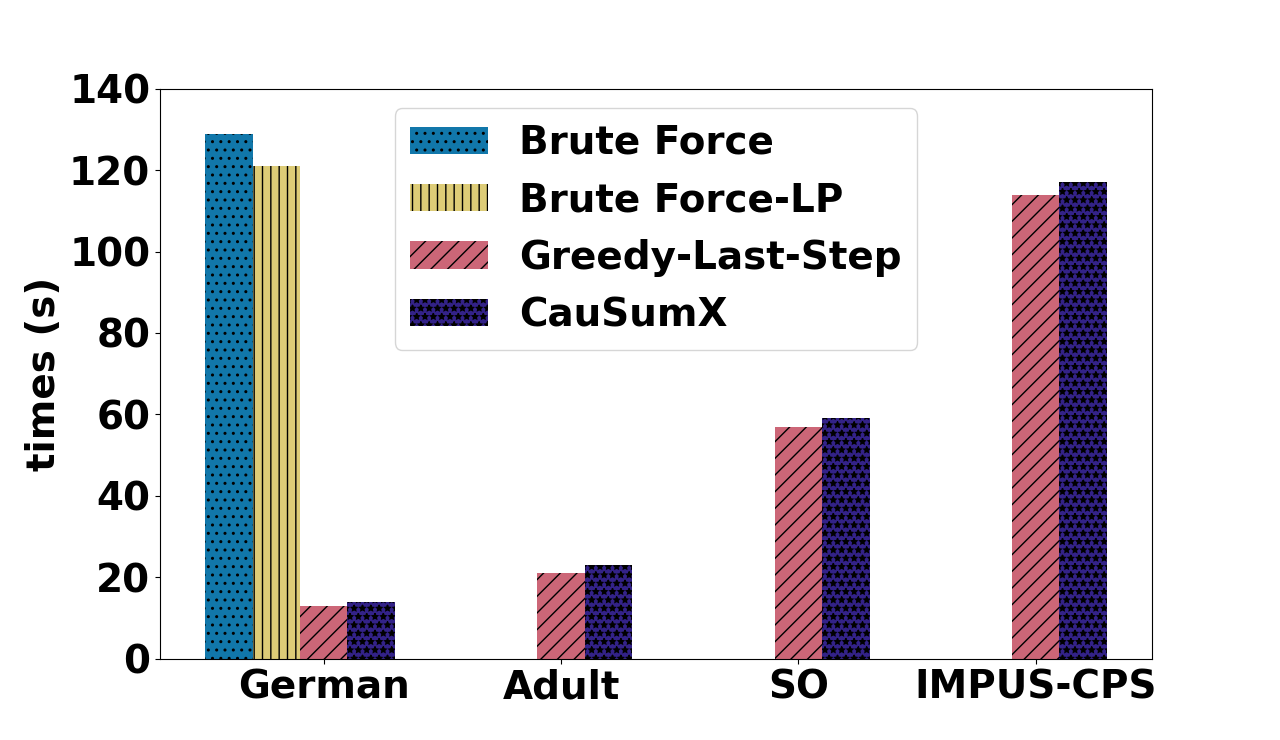}
    \vspace{-7mm}
    \caption{Running times}
    \label{fig:figure1}
  \end{subfigure}
  \hfill
    \begin{subfigure}[b]{0.32\textwidth}
    \centering
    \includegraphics[width=\textwidth]{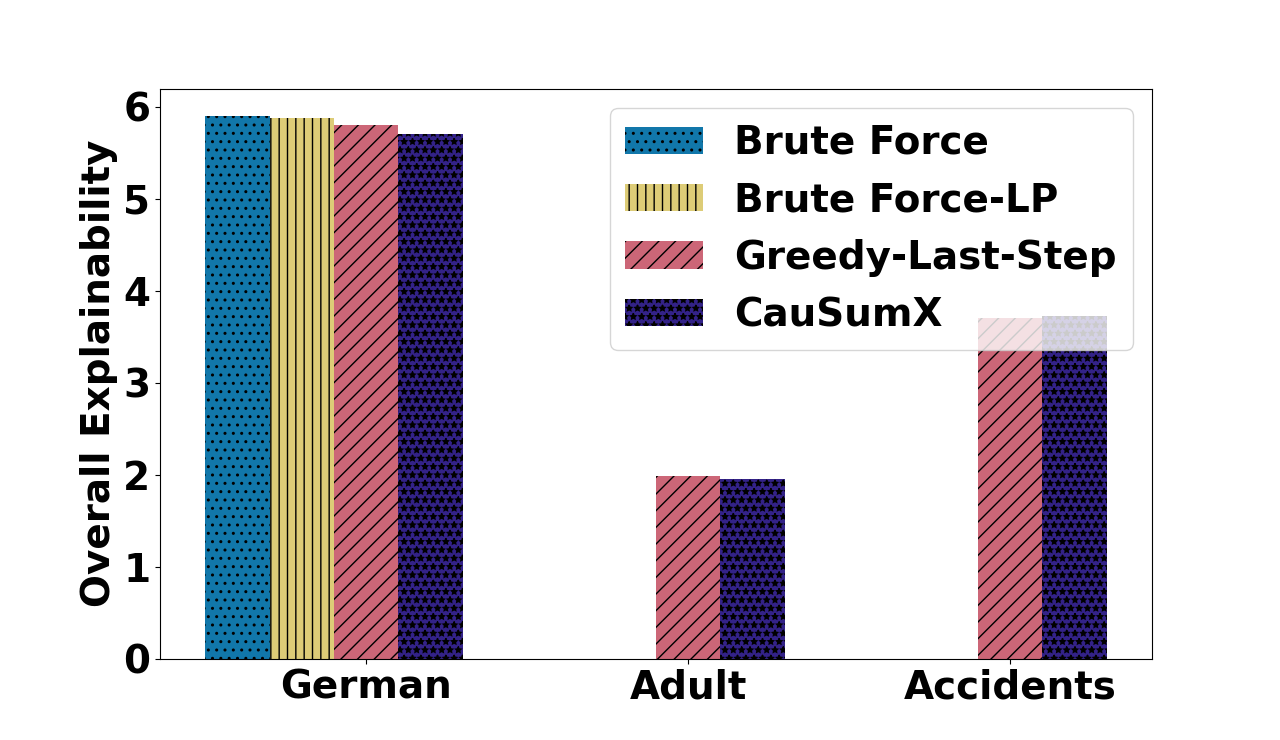}
        \vspace{-7mm}
    \caption{Overall explainability}
    \label{fig:figure3}
  \end{subfigure}
  \hfill
  \begin{subfigure}[b]{0.32\textwidth}
    \centering
    \includegraphics[width=\textwidth]{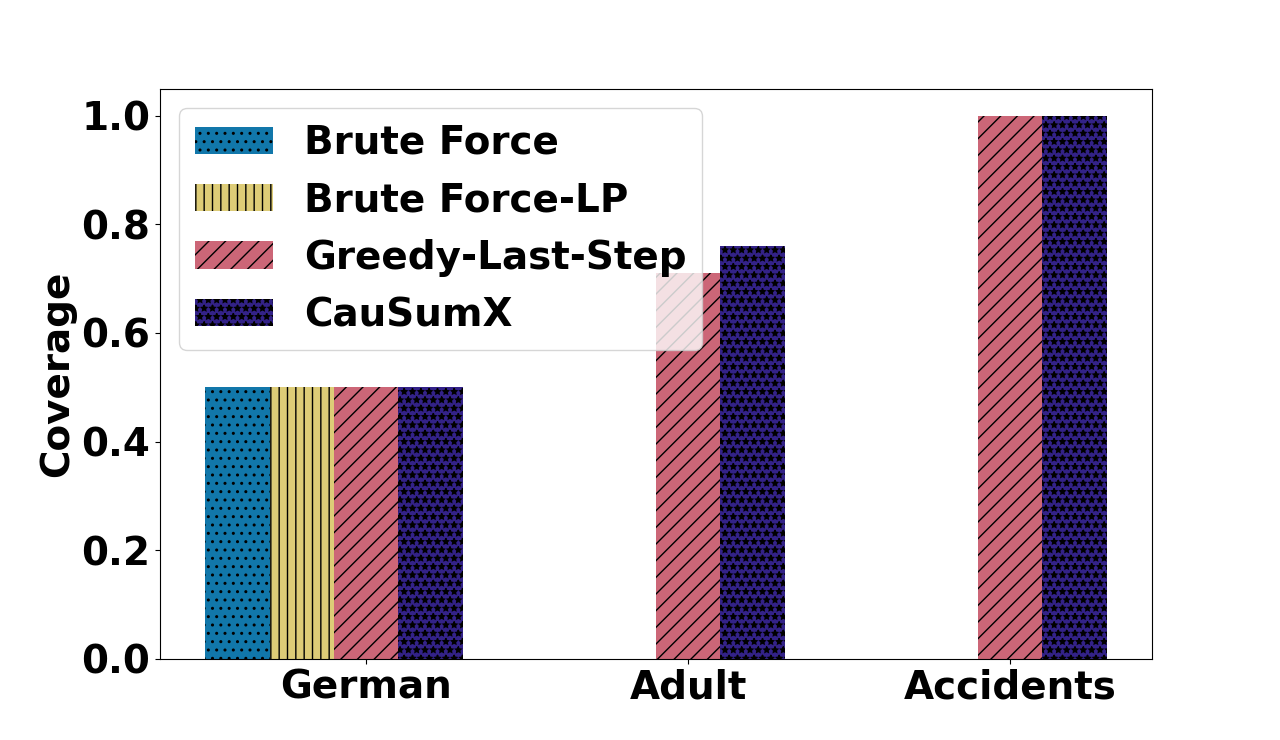}
        \vspace{-7mm}
    \caption{Coverage}
    \label{fig:figure2}
  \end{subfigure}
  \caption{Performance of different variants of \algoName. Baselines that exceed the time cutoff are excluded.}
  \label{fig:variants}
  \vspace{-4mm}
\end{figure*}

\begin{figure}[t]
  \centering
  \begin{subfigure}[b]{0.43\textwidth}
    \centering
    \includegraphics[width=\textwidth]{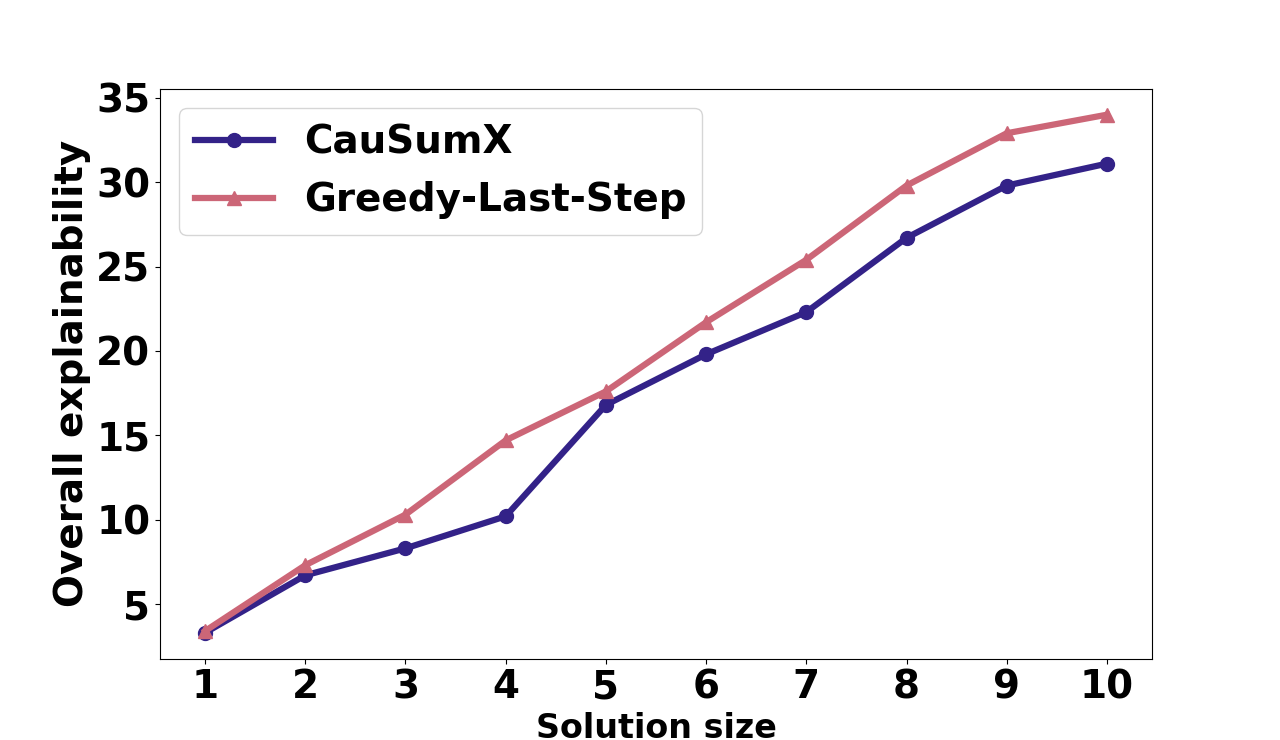}
    \vspace{-6mm}
    \caption{Explainability}
    \label{fig:figure4}
  \end{subfigure}
     \begin{subfigure}[b]{0.43\textwidth}
    \centering
    \includegraphics[width=\textwidth]{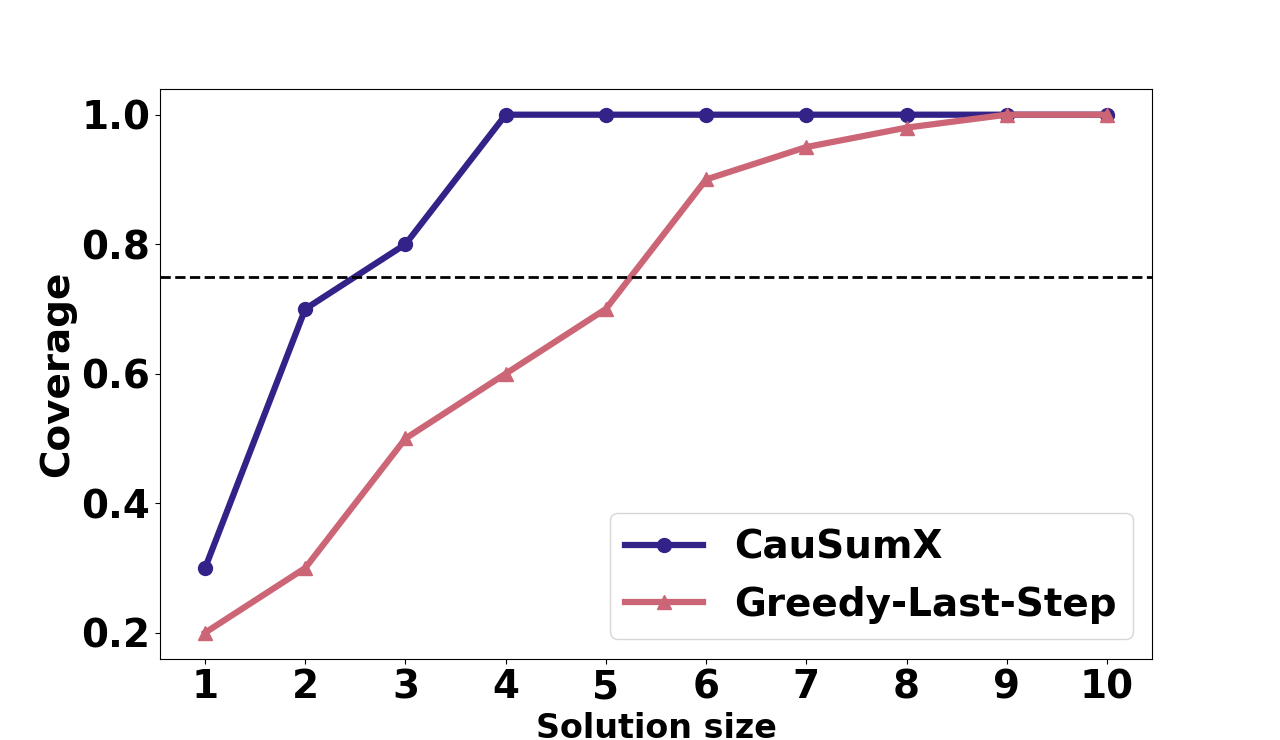}
     \vspace{-6mm}
    \caption{Coverage}
    \label{fig:figure2}
  \end{subfigure}
  \caption{Analysis of \algoName\ and \greedy.}
  \label{fig:cpe_vs_greedy}
\end{figure}
\begin{figure}[t]
  \centering
  \begin{subfigure}[b]{0.43\textwidth}
    \centering
    \includegraphics[width=\textwidth]{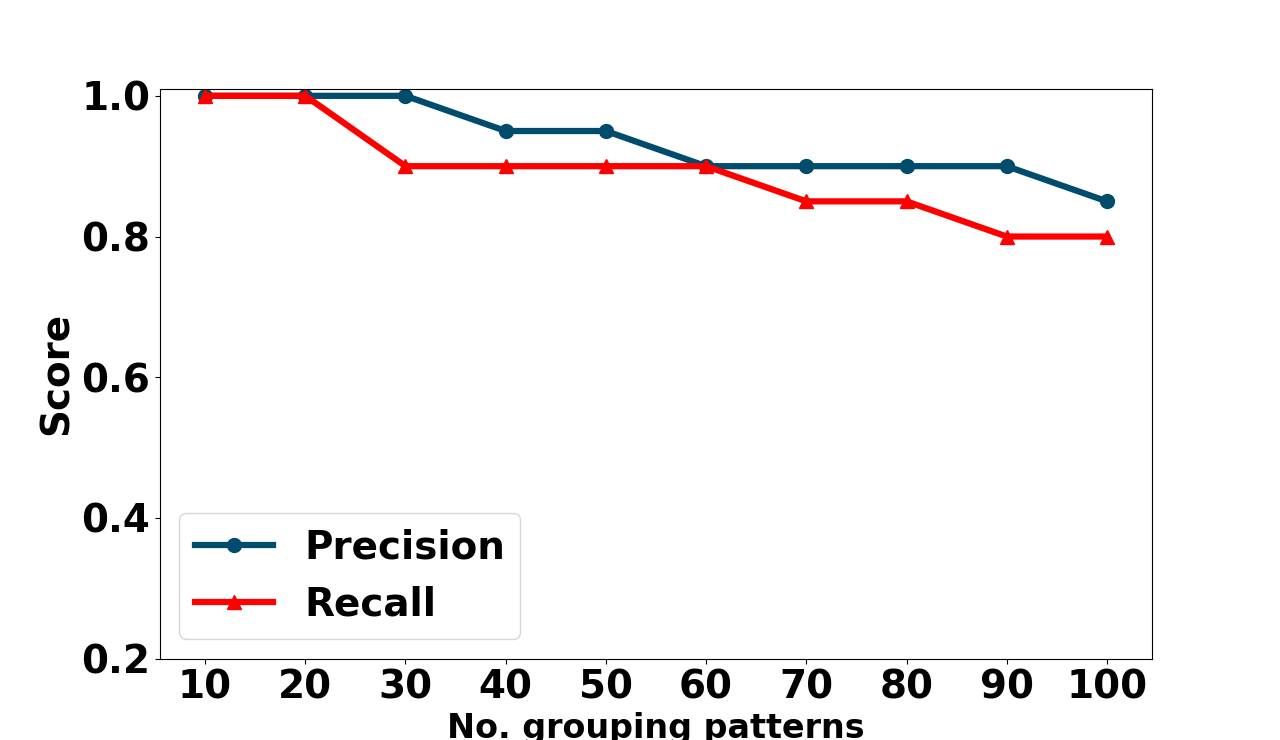}
    \vspace{-6mm}
    \caption{\common{Grouping patterns}}
    \label{fig:figure4}
  \end{subfigure}
     \begin{subfigure}[b]{0.43\textwidth}
    \centering
    \includegraphics[width=\textwidth]{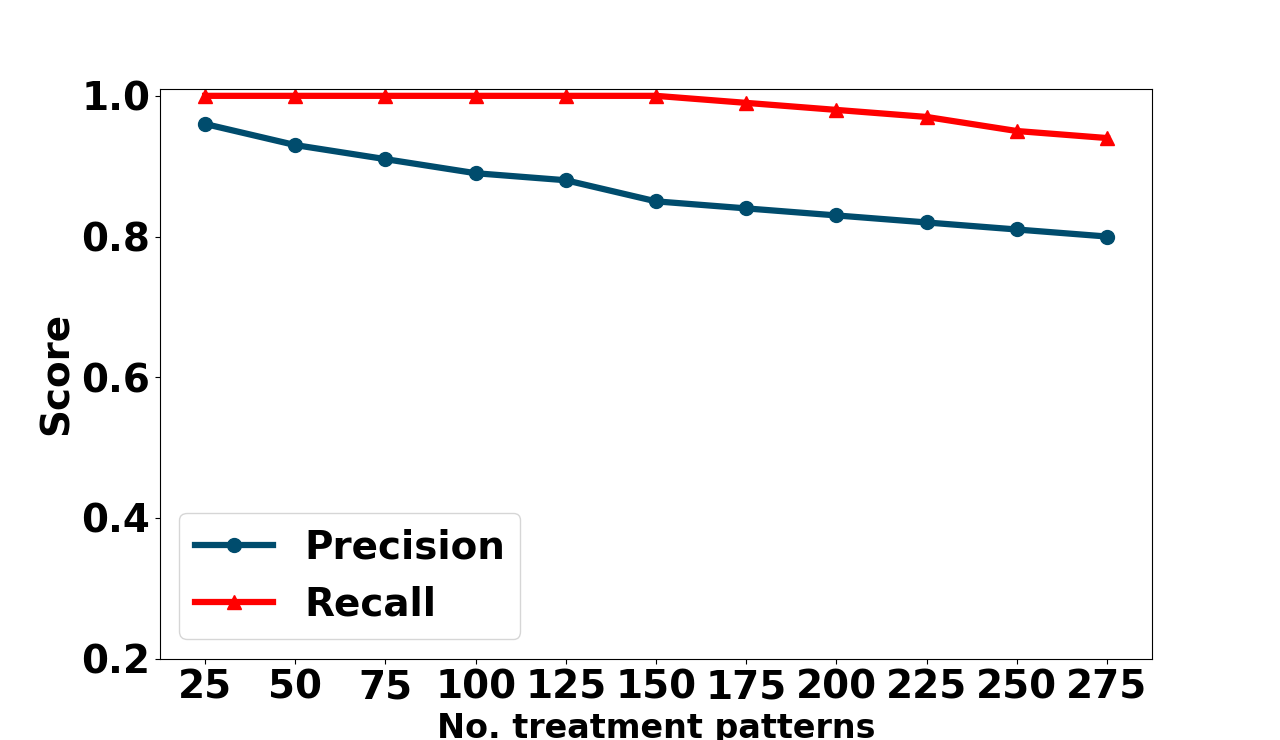}
     \vspace{-6mm}
    \caption{\common{Treatment patterns}}
    \label{fig:figure2}
  \end{subfigure}
  \caption{\common{Analysis of precision and recall} }
  \label{fig:precision_recall}
\end{figure}

\paratitle{\so}
As in our running example, we consider a SQL query that computes the average salary of developers across the 20 most commonly mentioned countries among respondents (accounting for more than 85\% of the \so\ dataset). To define grouping patterns, we considered attributes having FDs with \textsc{country}: \textsc{Continent, HDI, Gini}, and \textsc{GDP}. Here, for brevity, we set the solution size to $3$. 
As mentioned, the explanation summary generated by \algoName (shown in Figure \ref{fig:so-explanation}), reveals key insights regarding the factors influencing salary across different countries. It highlights that job role, age, and education level are the primary determinants of income in the examined countries. Notably, individuals in C-level positions tend to earn significantly more than students. These results align with previous research~\cite{stackoverflowreport,vizajobs}, which emphasized the significant influence of education level and job responsibilities on salaries in high-tech.
Additionally, our results indicate that being below the age of $35$ positively impacts salary, while being over $55$ has a negative impact on income. Age discrimination toward people in the IT industry was identified in the literature~\cite {bdtechtalks,lin2021detecting}. 

\common{XInsight is designed to identify patterns that account for variations in the average salary across pairs of countries ($\binom{20}{2}$), and thus, it results in an extensive explanation (of size exceeding 500KB). As, \algoName, it focuses on patterns involving attributes that causally influence salary, such as \textsf{Formal Education} and \textsf{Gender}. However, these explanations offer distinct perspectives on the query results. For instance, XInsight's explanation sheds light on the substantial variance in average salaries between the US and Poland, highlighting differences in the distribution of role types. In the US, there is a prevalence of machine learning specialists and C-level executives, while Poland has a higher concentration of Back-end developers and DevOps specialists.
Conversely, \algoName\ explanation uncovers commonalities among various groups, such as the positive impact of being under the age of 35 on salaries across all European countries. Therefore, these two solutions can be viewed as complementary efforts, each revealing different causal insights behind aggregate views. 
Nevertheless, if the query returns only two groups, \algoName\ 
can find treatments (patterns) that explain the difference between their outcomes. 
In the US and Poland case, the returned treatments by \algoName\ were similar to that obtained from \xinsight (e.g., \algoName\ also recognized that having a C-level executive position is the treatment with the highest positive effect on respondents from the US). 
In contrast, in the presence of multiple groups in the query result, 
\xinsight\ lacks a straightforward extension for generating a summarized explanation for the entire aggregate view.}

\revc{Both \exptable\ and \exptableG\ aim to discover patterns related to high or low salaries. Notably, \textsf{YearsCoding} emerged as a significant factor in these patterns, indicating that individuals with over 30 years or less than 2 years of coding experience tend to have lower salaries. This aligns with our observation related to being a student or being over 55, which also leads to reduced salaries. However, \textsf{Age} and \textsf{Role} have a more substantial causal effect on salary than \textsf{YearsCoding}.
One limitation of \exptable\ is its inability to capture variations among different groups (countries). It either identifies country-specific patterns (e.g., low income for individuals from India) or universal patterns (e.g., \textsf{YearsCoding}). In contrast, \exptableG\ is designed to capture variations among groups but similarly highlights \textsf{YearsCoding} as a highly informative factor. This underscores the fundamental difference between \exptable, which prioritizes patterns with high information gain, and our approach, which focuses on identifying patterns with strong causal effects.}



\paratitle{Focusing on Sensitive Attributes}
To identify potential biases, we focused exclusively on sensitive attributes (such as ethnicity, gender, and age) when examining treatment patterns. The generated explanation is depicted in \cref{fig:so-output-sensitive}.
Our results indicate that demographic factors significantly influence salary in all countries examined. Specifically, being under $35$ positively impacts income, while being over $55$ has a negative effect.
Similarly, being a white male correlates with higher salary.
These findings align with previous research on demographic impact on income, such as the gender wage gap~\cite{dads} and disparities based on ethnicity~\cite{bertrand2004emily,chetty2020race}.
This showcases \sysName's versatility in identifying causal explanations across different attributes. It also emphasizes its potential in uncovering disparities among demographic groups, supporting efforts to combat discrimination, and promoting equality.


\paratitle{\accidents} 
We investigated the average severity of car accidents across cities in the US.
Our generated explanation is shown in \cref{fig:accidents}. 
It indicates that adverse weather conditions, such as cold temperatures and snow, tend to escalate the severity of car accidents. Conversely, the presence of traffic signals and calming measures appears to mitigate severity. Our results highlight variations in weather conditions across different regions. For instance, in the Midwest, cold temperatures and snow commonly contribute to severe accidents, whereas in the South, rainy weather emerges as a more significant factor for severe car accidents as snow and cold temperatures are less common. 
Previous studies \cite{pardon2013effectiveness, nilsson1982effects} have provided evidence supporting the effectiveness of traffic signals and calming measures in reducing accident severity. 
This analysis underscores the potential value of our system in extracting insights that go beyond common patterns and correlations. Such insights can be valuable for decision-makers, enabling a deeper understanding of causal factors and aiding in implementing effective road safety measures.
\common{The vast number of cities (over 50,000) made it unfeasible to generate an explanation using XInsight within our time constraints.}
\revc{The \ids, \frl, \exptable\, and \exptableG\ baselines also exceeded our time cutoff}.


\common{\subsection{Accuracy Evaluation}}
\common{To evaluate the accuracy of the \algoName\ algorithm, we conducted experiments over syntactic data where the ground truth is known.
}
\common{
Here, we set $n {=} 1k$ and used the default system parameters.

\paratitle{Metrics of evaluation} We report the precision and recall of the grouping and treatment pattern mining algorithms compared to the exhaustive \bruteforce\ baseline.
To assess the precision and recall of the grouping mining algorithm (Section \ref{subsec:grouping_patterns}), we compare the tuples covered by the grouping patterns chosen by \algoName\ against those covered \bruteforce.
To assess the performance of the treatment mining algorithm (Section \ref{subsec:treatment_patterns}), we consider a fixed count of 20 grouping patterns, where both \algoName\ and \bruteforce selected the same grouping patterns. To evaluate the precision and recall of the treatment mining algorithm in comparison to the treatments chosen by \bruteforce, we compared the tuples identified as the treated group by \algoName\ with those defined as the treated group according to \bruteforce. We report average precision and recall across all grouping-treatment pattern combinations.

\paratitle{Grouping patterns} We manipulate the count of attributes utilized for grouping patterns, thereby controlling the number of grouping patterns. 
The results are shown in Figure \ref{fig:precision_recall}(a). Our results show that when the count of potential grouping patterns is relatively low (up to 20), our grouping patterns mining algorithm selects patterns that cover the same tuples as \bruteforce. As the number of grouping patterns to consider increases, the algorithm prunes certain patterns, leading to a decrease in precision and recall. Nevertheless, these scores consistently remain high (above 0.78), signifying that \algoName\ and \bruteforce\ cover nearly the same tuples. 

\paratitle{Treatment patterns} 
The results are shown in Figure \ref{fig:precision_recall}(b). Observe that the recall remains consistently high, irrespective of the number of treatments. This suggests that the treated group according to \bruteforce\ is always encompassed by the treated group identified by \algoName. In contrast, precision decreases as the count of treatment patterns increases. This signifies that the treated group according to \algoName\ may contain irrelevant tuples. This is due to the pruning optimizations employed by \algoName, which prevent it from materializing "long" patterns. However, even with many treatment patterns, the precision consistently exceeds 0.75.
}

\begin{figure}[t]
  \centering
  \begin{subfigure}[b]{0.43\textwidth}
    \centering
    \includegraphics[width=\textwidth]{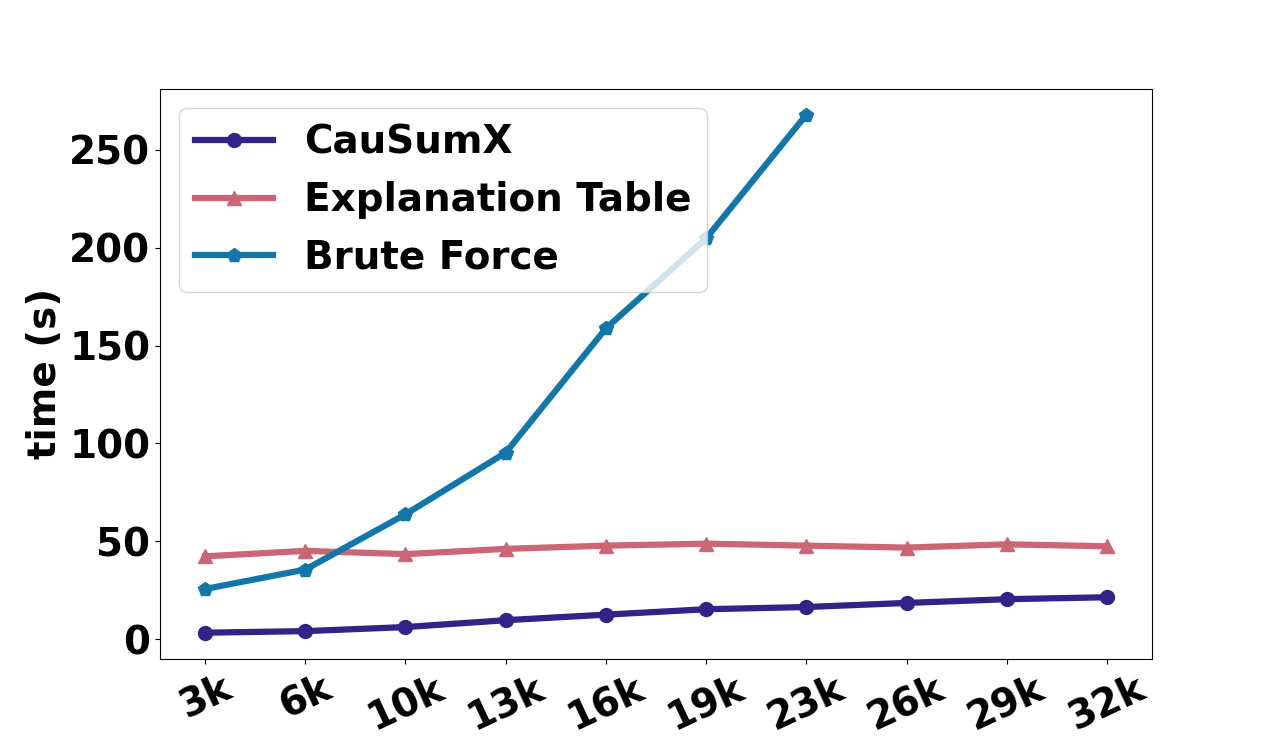}
    \caption{Adult}
    \label{fig:figure1}
  \end{subfigure}
  \begin{subfigure}[b]{0.43\textwidth}
    \centering
    \includegraphics[width=\textwidth]{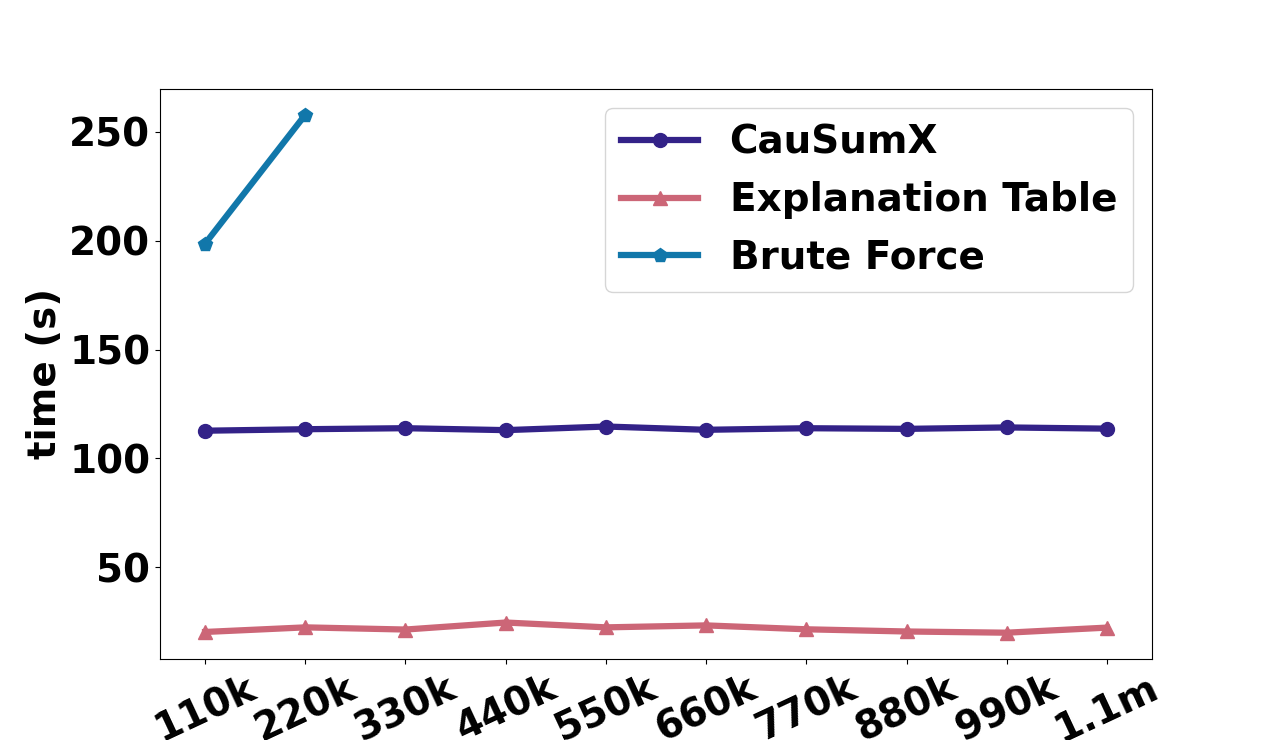}
    \caption{IMPUS-CPS}
    \label{fig:figure2}
  \end{subfigure}
  \caption{Time vs. dataset size.}
  \label{fig:data_size}
\end{figure}


\begin{figure}[t]
  \centering
  \begin{subfigure}[b]{0.43\textwidth}
    \centering
    \includegraphics[width=\textwidth]{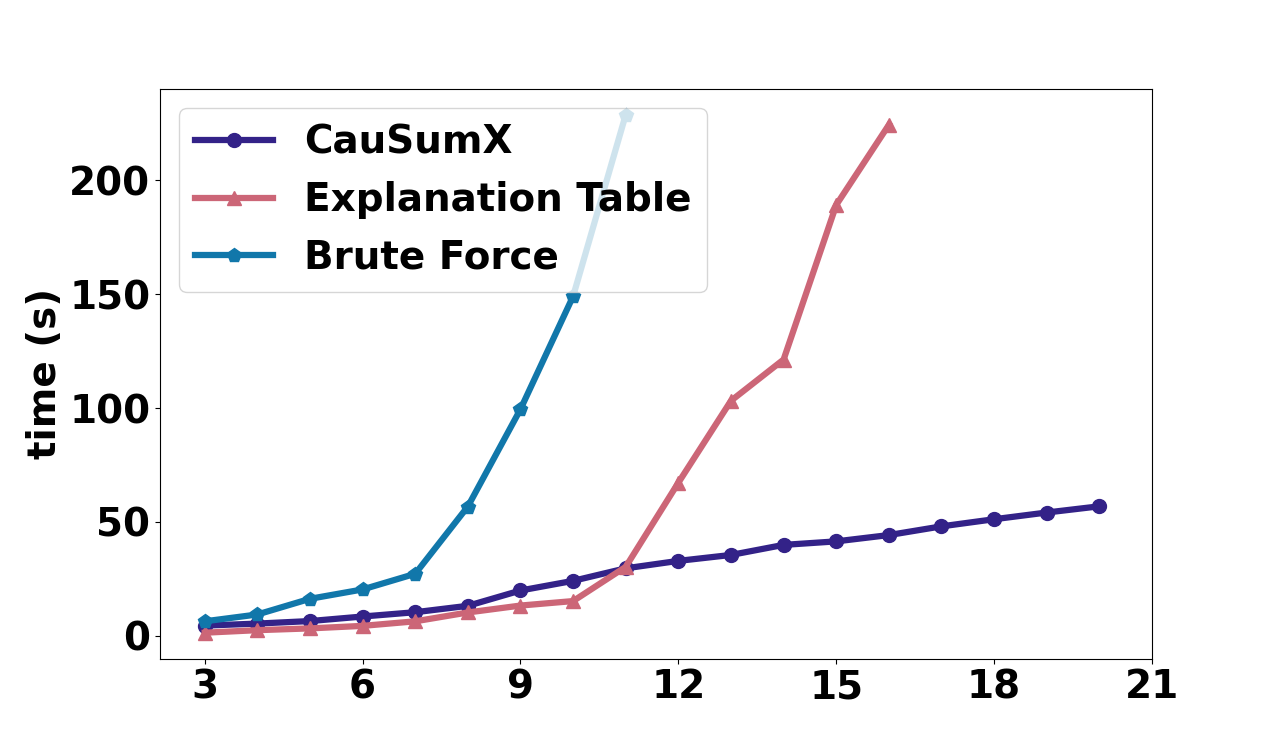}
    \caption{Stack Overflow}
    \label{fig:figure3}
  \end{subfigure}
     \begin{subfigure}[b]{0.43\textwidth}
    \centering
    \includegraphics[width=\textwidth]{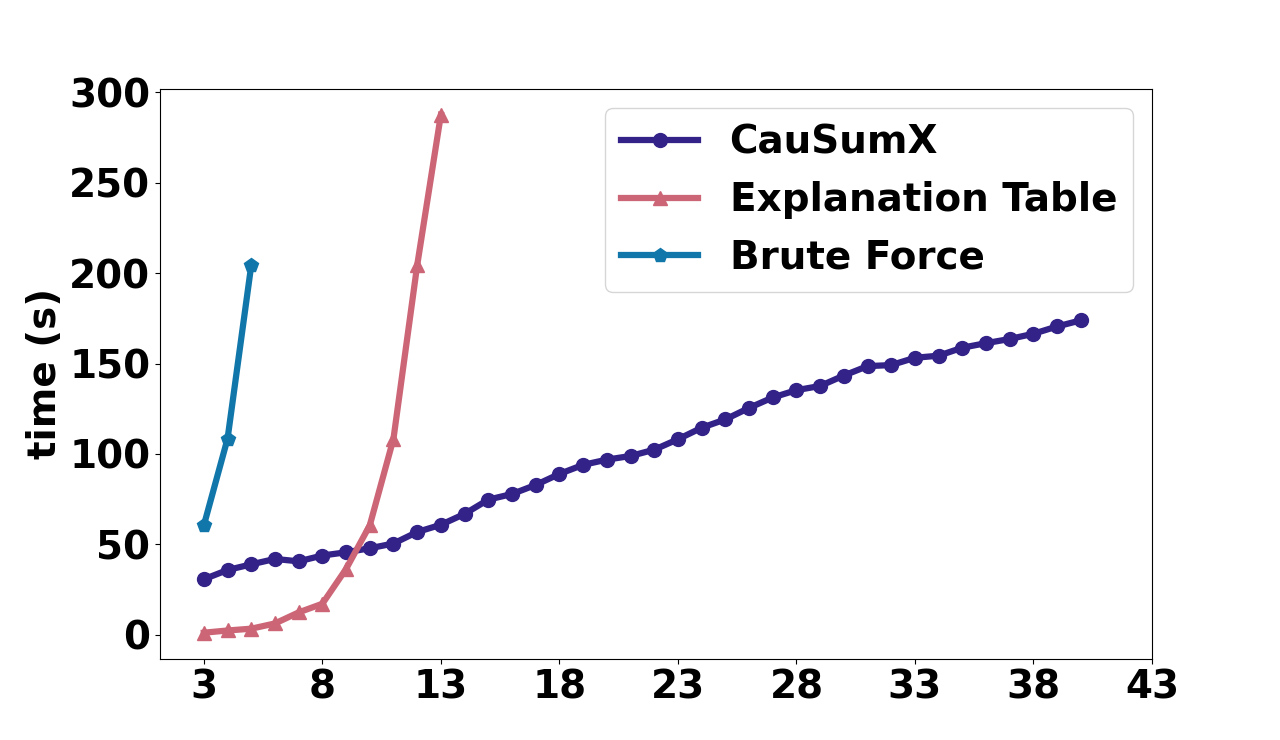}
    \caption{Accidents}
    \label{fig:figure2}
  \end{subfigure}
  \caption{Time vs. the number of attributes.}
  \vspace{-5mm}
  \label{fig:atts_num}
\end{figure}


\begin{figure}[t]
  \centering
  \begin{subfigure}[b]{0.43\textwidth}
    \centering
    \includegraphics[width=\textwidth]{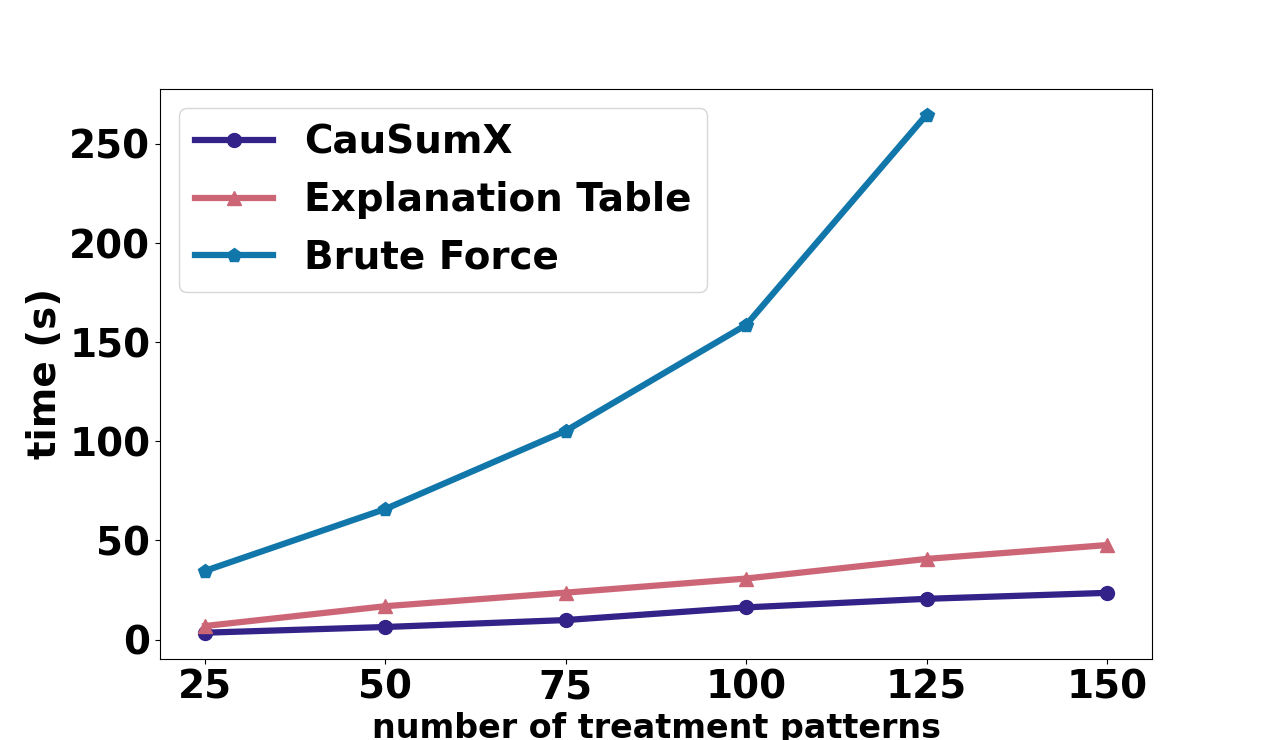}
    \caption{Adult}
    \label{fig:figure1}
  \end{subfigure}
  \begin{subfigure}[b]{0.43\textwidth}
    \centering
    \includegraphics[width=\textwidth]{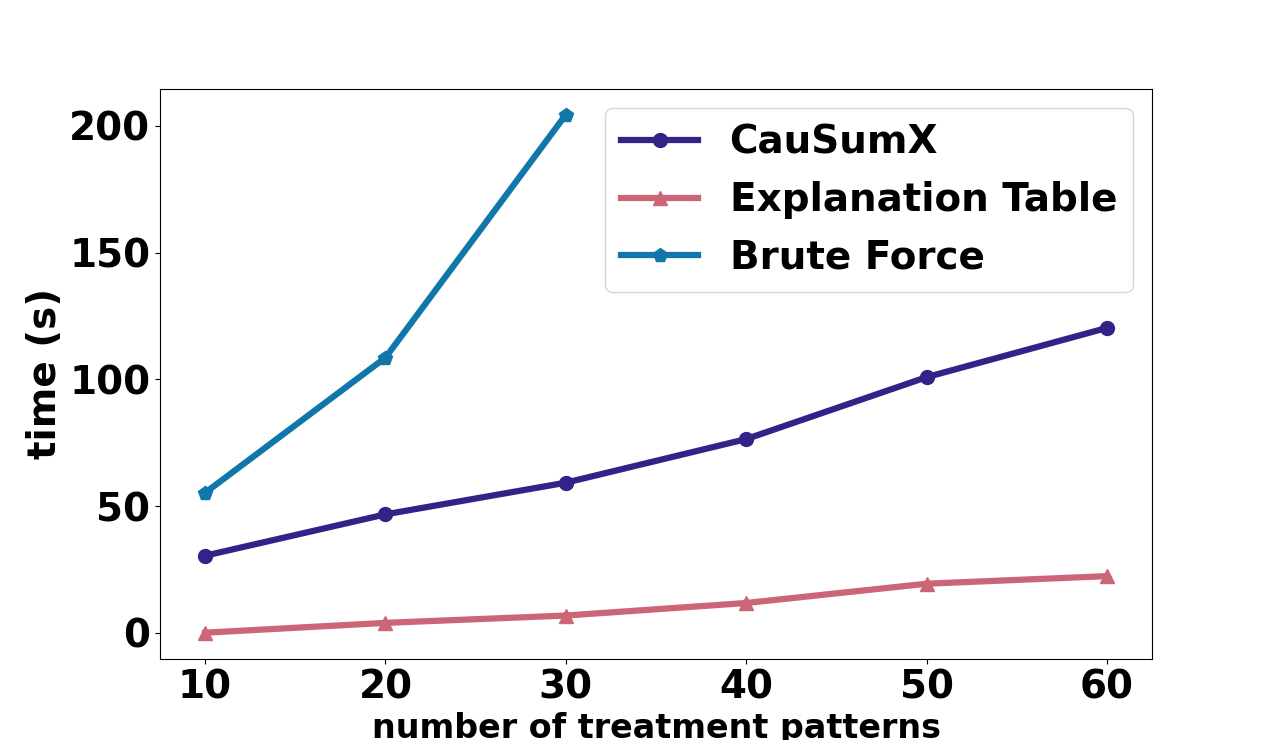}
    \caption{IMPUS-CPS}
    \label{fig:figure4}
  \end{subfigure}
  \caption{Time vs. the number of treatment patterns.}
  \vspace{-4mm}
  \label{fig:bins_num}
\end{figure}


\begin{figure}[t]
\centering
\includegraphics[scale = 0.2]{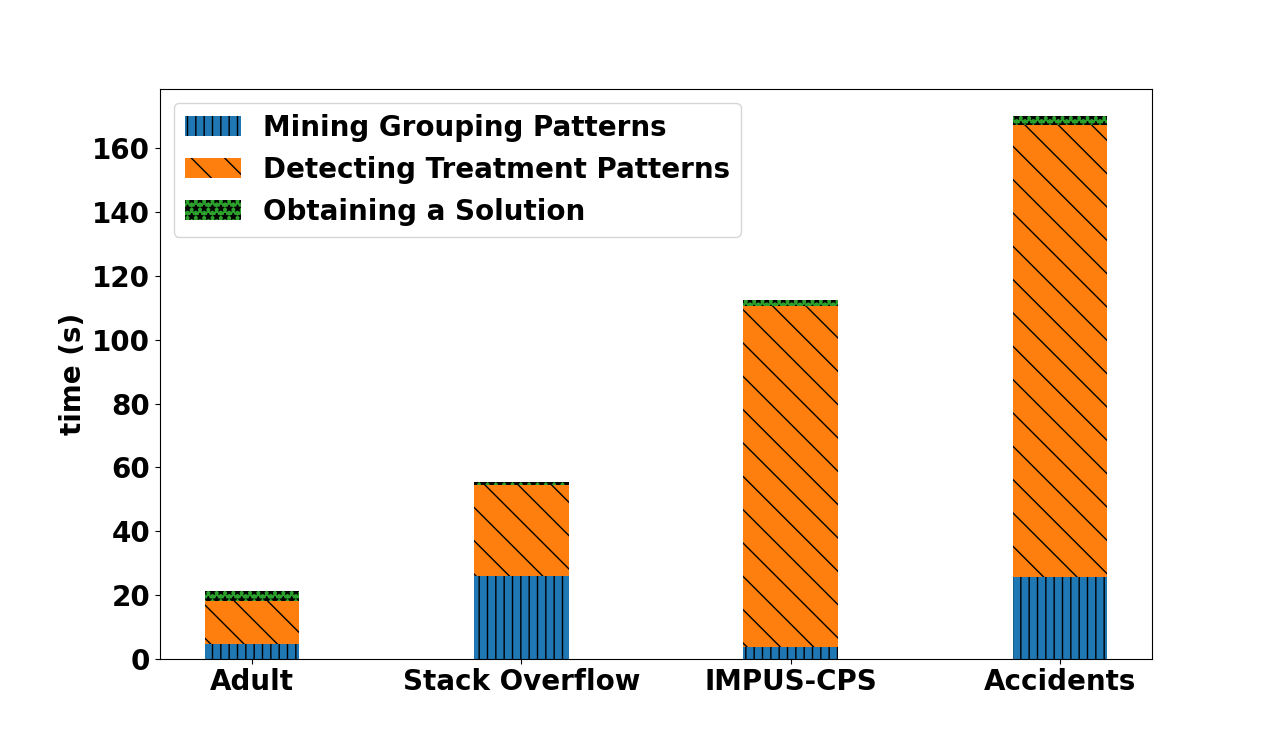}
\caption{\revc{Runtime by-step of the \algoName\ algorithm}} \label{fig:algo_steps}
\end{figure}

\subsection{Ablation Study ($Q_2$)}
\label{subsec:variants}
\revc{A breakdown analysis by step of \algoName\ (Figure \ref{fig:algo_steps}) shows that in all cases, mining the treatment pattern phase (Algorithm \ref{algo:treatment_patterns}) consumes most of the time. The first and last steps are relatively fast. This aligns with our time complexity analysis. }
We next analyze the impact of our grouping and treatment pattern mining algorithms, as well as the LP formulation, compared to the optimal solution determined by \bruteforce\ based on \cref{def:problem}.




\paratitle{Runtime}
Consider Figure \ref{fig:variants}(a). As expected, the algorithms that utilize our grouping and treatment patterns algorithms (\algoName, \greedy) exhibit significantly faster performance compared to \bruteforce variants. Both \bruteforce\ and \bruteforcelp\ could not process any dataset other than \german, as their runtimes exceeded our time cutoff. \emph{This clearly demonstrates the efficiency gained by our algorithms}.
The disparity in running times between \greedy\ and \algoName\ is negligible. This can be attributed to the fact that the treatment pattern detection phase consumes most of the execution time. Additionally, the relatively low number of grouping patterns explains the minimal difference in execution times during the last phase. Despite \greedy\ being faster than solving the LP, there are not too many explanation patterns to consider.

\paratitle{Coverage \& Explainability}
Figures \ref{fig:variants}(b) and (c) display the explainability and coverage of each baseline. In \german, all baselines achieve the same coverage determined by $k$, but the approaches considering all patterns have higher explainability. This minimal difference highlights the effectiveness of our pattern mining and pruning techniques, improving runtime without compromising explainability significantly.
In \accidents, \algoName\ and \greedy\ yield identical solutions. However, in \adult, \greedy\ achieves higher explainability but falls short in coverage compared to \algoName. This emphasizes the better balance achieved by our LP formulation in satisfying coverage and maximizing the objective compared to the greedy approach.

\paratitle{In-depth Analysis}
In a comprehensive analysis comparing \algoName\ and \greedy, we explored their performance by varying the solution size $k$ on the \so\ dataset. The findings are presented in Figure \ref{fig:cpe_vs_greedy}.
As $k$ increases, both algorithms exhibit improved overall explainability, with similar performance in this aspect. However, their behavior diverges in terms of coverage. \algoName\ demonstrates a faster ability to satisfy the coverage constraint (shown by the dashed horizontal line). This is because \algoName\ treats coverage as a constraint, while \greedy\ has no guarantees for coverage. As a result, \greedy\ only satisfies the coverage constraint for $k {=} 6$.
Based on our findings, \algoName\ surpasses \greedy, as both achieve comparable objective values, but \algoName\ has a higher likelihood of satisfying constraints.

\subsection{Efficiency Evaluation ($Q_3$)} \label{subsec:exp-runtime}
We showcase the scalability of \algoName. For brevity,
we excluded certain figures while discussing the trends they reveal within the text.
We omit the results for \ids\ and \frl\ from the presentation, as their response times exceed 10 minutes. \revc{We also exclude the results of \exptableG\ since they exhibited similar trends to those of \exptable.} \common{As we are not making a direct comparison with XInsight but instead with a variant that generates explanations for all pairs of groups, we have chosen to omit this baseline from presentation, as its runtime should be evaluated for the task of generating an explanation for a single pair.}

\paratitle{Data Size}
We analyze the impact of dataset size on runtime through random sampling of tuples. The results for the \adult\ and \impus\ datasets are shown in Figure \ref{fig:data_size}. 
\algoName\ and \bruteforce\ demonstrate a nearly linear increase in runtime for \adult\ (and \so\ - omitted from presentation) due to their full utilization of data for CATE value computation. However, \algoName\ employed a sampling optimization for the larger \impus\ (and \accidents) dataset, resulting in a more consistent runtime. \exptable's runtime is unaffected by dataset size due to sampling, but it is unable to handle datasets with more than $10$ attributes (e.g., \so).

\paratitle{\# Attributes}
We examine the impact of attribute quantity on runtime, by randomly excluding attributes from consideration. The results for the \so\ and \accidents\ datasets are shown in Figure \ref{fig:atts_num}.
\bruteforce\ and \exptable\ show exponential runtime increases with attribute number due to the growing number of patterns to consider. In contrast, \algoName\ exhibits linear growth in runtime, thanks to pruning techniques mentioned in Section \ref{subsec:treatment_patterns} that eliminate non-promising treatment patterns.

\paratitle{Treatment Patterns}
We analyze the impact of treatment pattern quantity on runtime. We vary the number of bins for ordinal attributes and randomly exclude values for non-ordinal attributes. The results for \adult\ and \impus\ are displayed in Figure \ref{fig:bins_num}. Runtime increases linearly for all algorithms, which is expected due to the increased solution space.


\paratitle{Grouping Patterns}
We examine the impact of grouping pattern quantity on runtime. By adjusting the threshold of the Apriori algorithm, we explore different numbers of grouping patterns. 
For \algoName, the runtime remains relatively unchanged across all scenarios due to its simultaneous exploration of promising treatment patterns for each grouping pattern. However, \bruteforce's runtime increases linearly with the number of grouping patterns. 

\paratitle{Solution Size}
Lastly, we explore the impact of varying $k$. As this parameter affects only the final phase of \algoName\ and \bruteforce, we observe negligible changes in their runtimes.

\subsection{Explanations Sensitivity ($Q_4$)} \label{subsec:exp-sensitivity}
We evaluate the impact of various parameters on the quality of the explanation summaries. The measures we focus on are overall explainability and coverage. 
Full details are provided in~\cite{fullversion}.



\paratitle{Apriori Threshold}
We investigate the effect of varying the threshold parameter $\tau$ in the Apriori algorithm. Increasing $\tau$ leads to a reduction in the number of grouping patterns considered. 
Our findings indicate that higher $\tau$ values lead to a decrease in both explainability and coverage. Based on our findings, we recommend using a default threshold of $0.1$, which provides satisfactory results in terms of runtime, explainability, and coverage. However, it can be adjusted  according to specific coverage requirements.

\begin{figure}[t]
  \centering
  \begin{subfigure}[b]{0.43\textwidth}
    \centering
    \includegraphics[width=\textwidth]{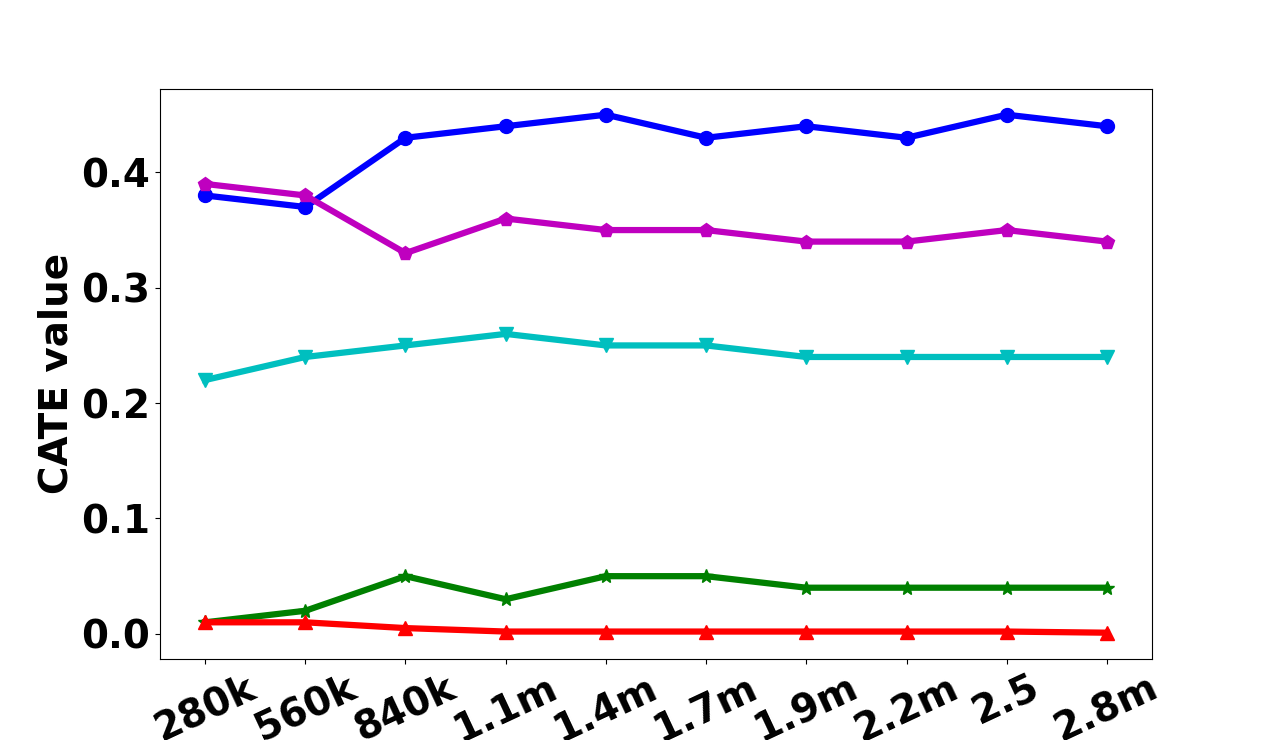}
    \caption{CATE value vs. sample size}
    \label{fig:figure4}
  \end{subfigure}
     \begin{subfigure}[b]{0.43\textwidth}
    \centering
    \includegraphics[width=\textwidth]{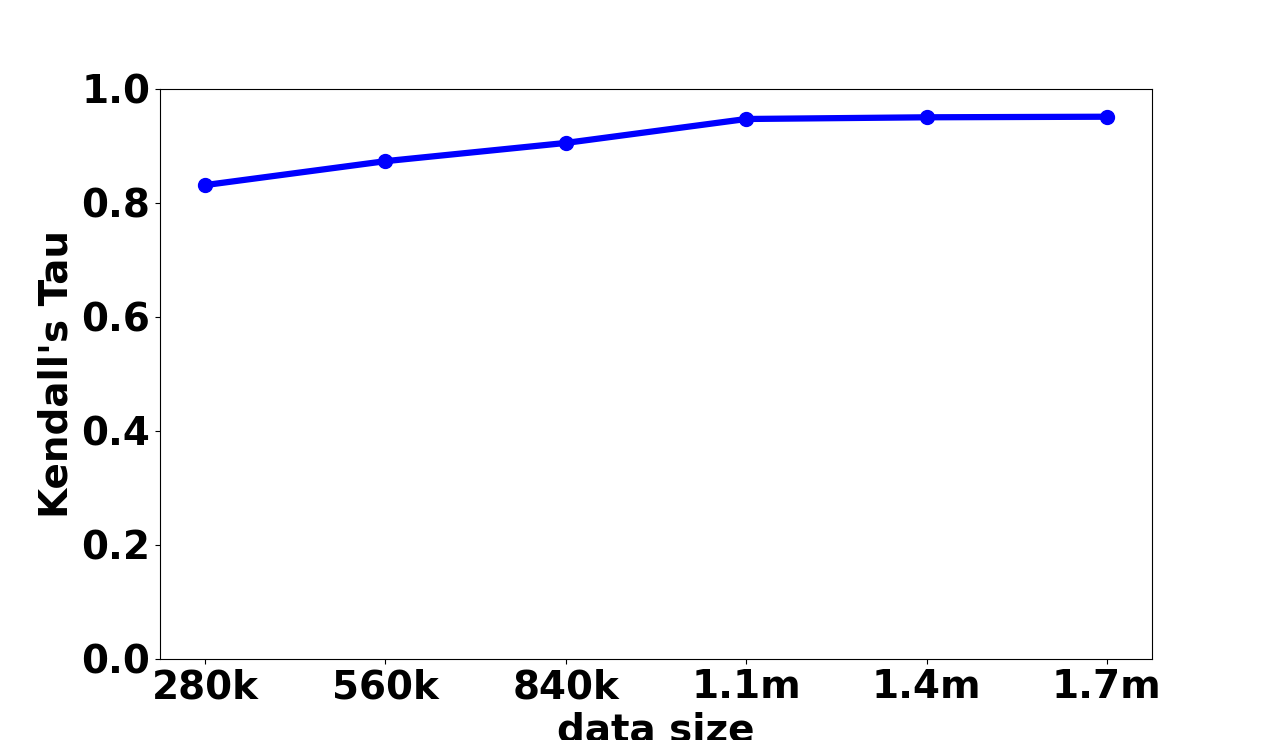}
    \caption{Kendall's $\tau$ vs. sample size}
    \label{fig:figure2}
  \end{subfigure}
  \caption{\revb{CATE Values Estimation (Accidents dataset)}.}
  \vspace{-3mm}
  \label{fig:cate_estimation}
\end{figure}

\paratitle{CATE Values Estimation}
\revb{Recall that we use fixed-size random sampling for estimating CATE values.
We investigate the impact of sample size on CATE value estimation.
Figure \ref{fig:cate_estimation} illustrates the results for the \accidents\ dataset. In Figure \ref{fig:cate_estimation}(a), we present the estimated CATE values for $5$ random treatments using various sample sizes. In Figure \ref{fig:cate_estimation}(b), we evaluate the agreement between rankings using Kendall's $\tau$ correlation coefficient. We randomly selected $20$ treatments and ranked them based on their CATE values, comparing this ranking with rankings obtained using different sample sizes. Notably, for a sample size of $1$m tuples, CATE values exhibit an error of no more than $5\%$, and Kendall's $\tau$ reaches a high and stable value of $0.95$. Similar trends were observed for the \impus\ dataset. Consequently, we conclude that a sample size of $1m$ tuples is suitable for accurate estimation of CATE values.}


\begin{figure}[t]
  \centering
  \begin{subfigure}[b]{0.43\textwidth}
    \centering
    \includegraphics[width=\textwidth]{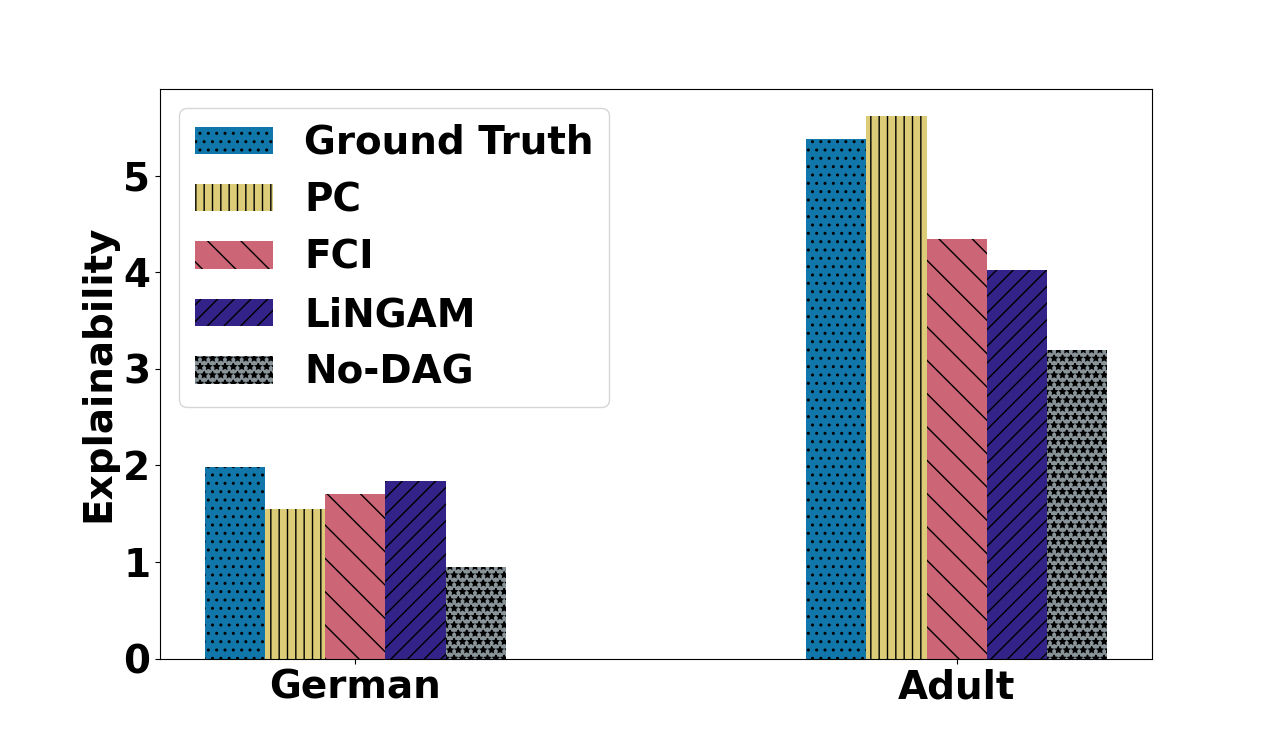}
\vspace{-6mm}
    \caption{Explainability}
    \label{fig:figure4}
  \end{subfigure}
     \begin{subfigure}[b]{0.43\textwidth}
    \centering
    \includegraphics[width=\textwidth]{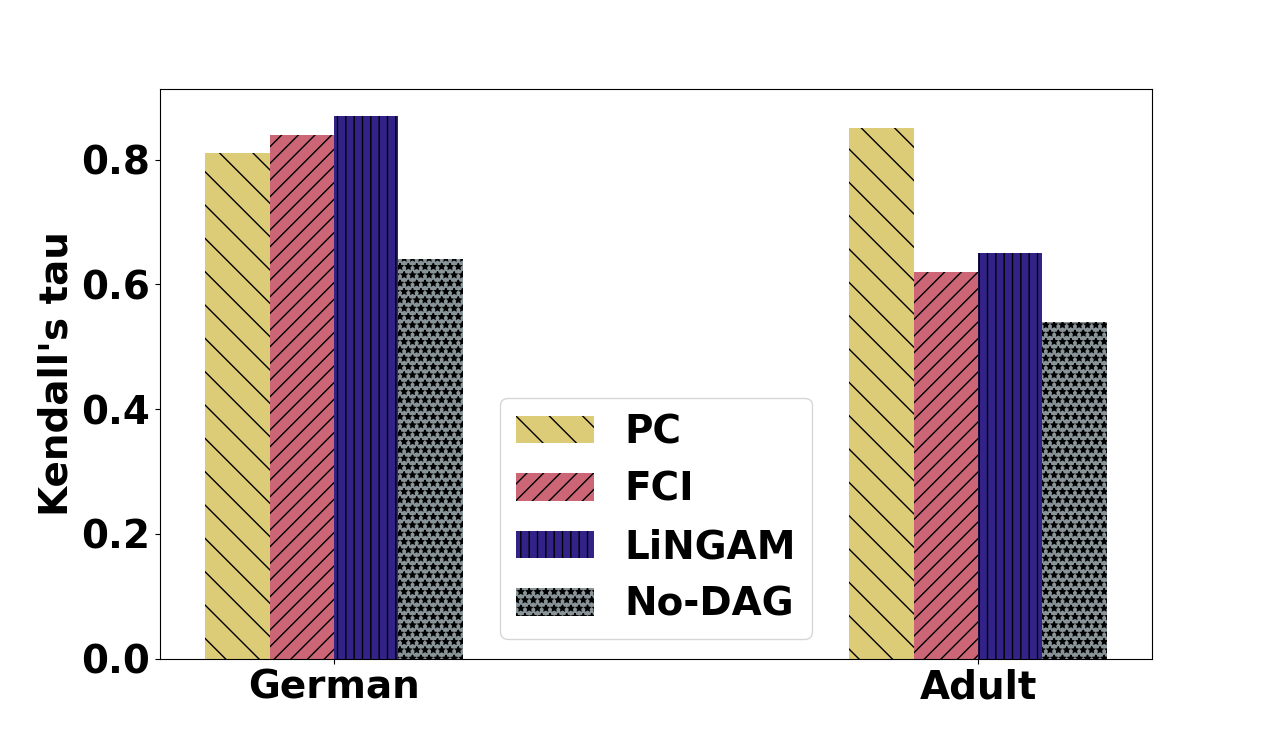}
 \vspace{-6mm}
    \caption{Kendall's $\tau$}
    \label{fig:figure2}
  \end{subfigure}

  \caption{\revc{Modifying the causal DAG.}}
  \label{fig:causal_dag_effect}
\end{figure}

\paratitle{Causal DAG}
\revc{
We depart from the assumption of a given causal DAG and instead use existing solutions to construct DAGs.
We conducted tests with multiple widely used causal discovery algorithms (the PC~\cite{spirtes2000causation}, FCI~\cite{spirtes2000causation}, and LiNGAM~\cite{shimizu2006linear} algorithms), as well as with a simple, straightforward solution. 
In particular, we considered a causal DAG (referred to as No-DAG) wherein all attributes are directly linked to the outcome variable via edges, and no other edges exist within the DAG, similar to the approach taken in~\cite{galhotra2022hyper}.
Our findings reveal that even the employment of basic causal discovery algorithms produces superior results when compared to the absence of any assumed causal DAG (Figure \ref{fig:causal_dag_effect}). We examined the effects on overall explainability and the ranking of treatment patterns when using different causal DAGs (we report the Kendall tau values, comparing the ranking of top-20 treatments based on their CATE values with a ranking obtained using the ground truth causal DAG). Notably, no single causal discovery algorithm outperforms all others, however, all of them outperform the No-DAG baseline.
We note that causal DAGs can originate from various sources, including domain knowledge, GPT, or existing causal discovery methods. This experiment illustrates that while our results do rely on the input causal DAG, the currently available methods for deriving such a DAG are capable of producing meaningful results.}

\section{Limitations and Future Work} 
\label{sec:conc}

\reva{Our framework offers explanations to group-by-avg queries. Restricting to AVG is fundamental for causal explanations unlike non-causal methods.
Our framework aims to find summarized \underline{\em causal explanations} for the query answers in $Q(D)$ using the concept of CATE discussed in Section~\ref{sec:prelim}. The causal estimates by CATE (as shown in Eq.~(\ref{eq:cate}) and (\ref{eq:conf-ate})), inherently use {\em expectations, i.e., weighted average}. Thus, CATEs can be used when considering the aggregate average on the outcome column. 
Aggregate functions like {\tt SUM} or {\tt COUNT} depend on the {\em number of units} satisfying the grouping and treatment patterns, 
which does not have a correspondence with the estimate of causal effect (except that larger groups might reduce variance in the estimate). While other non-causal work on explanations or summarizations for query answers \cite{wu2013scorpion, roy2014formal, miao2019going, li2021putting, lakshmanan2002quotient, wen2018interactive} can support other aggregate functions, methods based on causal estimates focus on average as well \cite{salimi2018bias,youngmann2022explaining,abs-2207-12718}.}

\common{
Our framework currently supports a single-relation database with no dependencies among the tuples. The rationale behind this assumption is to ensure the SUTVA assumption~\cite{rubin2005causal} (as mentioned in Section \cref{sec:prelim}) holds.
Even for a single-table database with dependencies among tuples, this assumption no longer holds. For example, in a {\em Flights} dataset for flight delay, the delay of one flight has an impact on subsequent flights using the same aircraft, and is also dependent on flights leaving and arriving the same airport. Consequently, attempting to compute causal effects on such datasets would result in invalid results. 
When dealing with a single table, treatment patterns are well-defined, and grouping patterns are defined with attributes with FDs to the grouping attributes. However, extending these concepts to multi-table scenarios, where grouping attributes, treatments, and outcomes may originate from distinct tables, poses a challenging formalization task. This complexity arises from the need to address many-to-many relationships and patterns spanning multiple tables. Additionally, in multi-relational databases, intricate dependencies among tuples can exist, potentially violating the SUTVA assumption.
While previous work \cite{SalimiPKGRS20} and \cite{galhotra2022hyper} have extended causal models to accommodate multi-table data and employed them for hypothetical query answering, they have not specifically addressed explanations for groupby-avg queries. The extension of our framework to support multi-relational databases with complex dependencies is an important future work. It is worth noting that prior work on causal explanations \cite{abs-2207-12718, youngmann2022explaining, salimi2018bias} has primarily focused on single tables as well.}

\revc{Lastly, note that attributes used for defining the grouping patterns must be categorical, as they need to exhibit a functional dependency with the grouping attribute. However, attributes used to define treatment patterns can take either continuous or categorical values. 
Handling continuous treatment variables poses specific challenges, but there are standard approaches in causal inference to address them, such as propensity weighting~\cite{rosenbaum1983central} or variable discretization. 
We note that with continuous treatment variables, the search space for potential treatment patterns significantly expands, and thus, while the operation of the treatment mining algorithm remains consistent, the execution times increase.
Here, variable discretization may also be helpful, although this should be done cautiously to preserve the integrity of causal analysis. In our implementation, we did not discretize continuous variables, and we leave this optimization for future work.}

\begin{acks}
This work was partially supported by NSF awards IIS-1552538, IIS-1703431, IIS-2008107, IIS-2147061, and the NSF Convergence Accelerator Program award number 2132318.
\end{acks}

\bibliographystyle{ACM-Reference-Format}
\bibliography{vldb_sample}


\appendix
In this part, we provide missing proofs and additional experiments. 
\section{Proofs}
\label{app:proofs}
\begin{figure}[!htb]
\begin{footnotesize}
\centering
\begin{tabular}{| c | c | c | c | c | c | c | c |}
\hline {\bf id} & {\bf $A_1$} & {\bf $A_2$} & {\bf $A_3$} & {\bf $O$} \\
\hline $t_1$ & $1$ & $0$ & $0$ & $0$ \\
\hline $t_2$ & $1$ & $0$ & $0$ & $0$ \\
\hline $t_3$ & $1$ & $1$ & $0$ & $0$ \\
\hline $t_4$ & $0$ & $0$ & $1$ & $0$ \\
\hline $t_5$ & $0$ & $1$ & $1$ & $0$ \\
\hline $t_{S_1}$ & $1$ & $-35$ & $7$ & $0$ \\
\hline $t_{S_2}$ & $12$ & $1$ & $-4$ & $0$ \\
\hline $t_{S_3}$ & $55$ & $97$ & $1$ & $0$ \\
\hline
\end{tabular}
\end{footnotesize}
\caption{Example query view for the reduction from set cover in the proof of \cref{prop:hardness}. The sets are $S_1 = \{1,2,3\}$, $S_2 = \{3,5\}$, and $S_3 = \{4,5\}$. If $k=2$, the patterns that cover $\frac{5+2}{5+3} = \frac{7}{8}$ of the tuples are $A_1 = 1$ and $A_3 = 1$, indicating that $S_1, S_3$ is a cover.}
\end{figure}

\begin{proof}[Proof of Proposition \ref{prop:approximation}] 
In the decision version of the Set Cover problem we are given with a universe of elements $U = \{x_1, \ldots, x_{n'}\}$, a collection of $m$ subsets $S_1, \ldots, S_{m'} \subseteq U$ and a number $k$. The question is whether there exists a cover of $U$ of at most $k'$ subsets.

Given an instance of the set cover problem, we build an instance of the \probName\ problem as follows. 
We build a relation $R$ with $m'+1$ attributes, $\attrset = (A_1, \ldots, A_{m'}, O)$, and containing $n'+m'$ tuples. For each element $x_i \in U$, we create a tuple $t_i$, such that $t_i[A_j]=1$ iff $x_i \in S_j$. 
We further add $m'$ tuples $t_{S_j}$ such that $t_{S_j}[A_j] = 1$, $t_{S_j}[O] = 0$, and $t_{S_j}[A_p] = l \neq 0$ for all $p \neq j$ where $l$ is a unique number not used anywhere else in an attribute of $R$. 
The query $\Qagg$ is the query that returns all the tuples in the relation $R$, i.e., 
$\Qagg$ = \texttt{SELECT} $A_1 \ldots A_{m'}$, \texttt{COUNT(*) FROM R GROUP BY} $A_1, \ldots, A_{m'}$.
Here, $\pattern_g$ can be any predicate, since the FD that needs to hold is $id \rightarrow \pattern_g $.  
Note that for each set of tuples defined by a pattern can only have an outcome of $0$, as the outcome of all tuples is $0$. 
Therefore, the explainability of all explanation patterns (\cref{def:explainability}) is $0$. 
For \probName, we further define $\theta = \frac{n'+k'}{n'+m'}$, $k = k'$, and $\tau = 0$. The underlying causal DAG, $G$, only contains the edges of the form $A_j \to O$ for all $1\leq j \leq m'$. 
We claim that there exists a cover of $U$ with at most $k$ sets iff there exists a solution $\Phi$ to \probName\ such that $|\Phi| \leq k$, all tuples are covered, and $\sum_{\varphi \in \Phi}explainability(\varphi) \geq 0$. 

($\Rightarrow$) Assume that we have a collection $S_{j_1}, \ldots, S_{j_k}$ such that $\cup_{j = j_1}^{j_k} S_j = U$. We show that there is a solution for \probName\ as follows. For each $S_{j_1}$, we choose for the solution the pattern $\pattern_g^{j_i}: A_{j_i} = 1$. 
We show that $\Phi = \{(\pattern_g^{j_1}, \emptyset), \ldots, (\pattern_g^{j_k}, \emptyset)\}$ is a solution to \probName. First, we note that all tuples of the form $t_i$ are covered by at least one explanation pattern by their definition. For the $m$ remaining tuples, we have coverage of at most $k$ tuples. These are the tuples $t_{S_{j_i}}$ that have $A_{j_i} = 1$. 
Thus, the number of covered tuples is exactly $n'+k'$ out of $n'+m'$ tuples in $R$. 
If there are fewer than $k$ tuples we can augment the original cover with arbitrary sets to obtain a cover of size $k$.

($\Leftarrow$) Assume that we have a solution to \probName\ with the aforementioned parameters. We show that we can find a solution to the set cover problem. 
Suppose the cover is $\Phi = \{(\pattern_g^{j_1}, \emptyset), \ldots, (\pattern_g^{j_k}, \emptyset)\}$. 
We first claim that no grouping pattern that includes $A_i = 0$ in a conjunction can be included in $\Phi$ as such a pattern will not cover any tuple $t_{S_j}$ since these tuples do not have an attribute with value $0$ by definition (and any other number other than $1$ will only cover a single tuple). Thus, the number of covered tuples will be $< \frac{n'+k'}{n'+m'} = \theta$, which would contradict the assumption that this is a valid solution to \probName. 
Hence, all patterns are of conjunctions of $A_i = 1$. For each treatment pattern of the form $\pattern_g = \wedge_{j=i_a}^{i_b} (A_j = 1) \land (A_p = l)$, we choose an arbitrary attribute in the conjunction $A_{j}$ if $(A_j = 1) \in \pattern_g$ and choose $S_{j}$ for the cover. Finally, if there is an uncovered element $x$ in $U$ and $\Phi$ includes a pattern in of the form $\pattern_g = (A_j = l)$ where $l \neq 1$, we choose for the cover a set $S$ that covers $x$ arbitrarily.  
We claim that the chosen collection of $k$ sets is a cover of $U$. 
To see this, 
recall that we claimed that the coverage of $\Phi$ is at least $n+k$. 
If the coverage includes tuples of the form $t_{S_j}$, then each pattern covers a single tuple. Suppose these patterns are $\pattern_a, \ldots, \pattern_b$. 
When building the coverage, instead of these patterns, we add a set that covers elements that are not yet covered by existing patterns. Thus, there are at least $b-a$ covered elements from $U$ in addition to the $n+k-(b-a)$ tuples covered by the patterns. Thus, the set cover we have assembled contains $n-(b-a) + (b-a) = n$ elements and covers all elements in $U$. 
\end{proof}

\paratitle{Randomized rounding algorithm}
The solution of the ILP formulation (\cref{subsec:step_3}) is determined by the values
of the variables $g_i$, indicating the selected explanation patterns. We
compute a solution by using any LP solver, then apply the following
randomized rounding procedure~\cite{raghavan1987randomized}: \\
{\bf(1)} If no solution is returned (LP is infeasible), return {\em ``no solution''}.\\ 
{\bf(2)} Let $g_1, \ldots, g_l$ and $t_1, \ldots, t_m$ be a solution to the LP. \\
{\bf(3)} Interpret the numbers for $g_1/k, \ldots, g_l/k$ as probabilities for the explanation patterns $\pattern_j, j = 1$ to $l$.\\
{\bf(4)} Choose $k$ explanation patterns independently at random according to these probabilities.\\
{\bf(5)} Return the collection $\Phi$ with the $k$ chosen explanation patterns.

We can show that if all grouping and treatment patterns are considered, this procedure yields a solution that covers at least $(1 - \frac{1}{e})\times \theta m$ groups from $\Qagg(\db)$ in expectation, and its overall explainability, in expectation,
is at least a ($\frac{1}{k}$) fraction of the corresponding optimal solution. 

\begin{proposition}\label{prop:lp_rounding}
The following holds for the LP-rounding algorithm of the ILP in \cref{fig:lp}:
\begin{enumerate}[leftmargin=*]
    \item If the LP-rounding Algorithm returns ``no solution'', then it is correct, i.e., no solutions exist for the ILP.
    \item Otherwise, the algorithm returns a collection $\Phi$ of $k$ explanation patterns that covers at least $(1 - \frac{1}{e})\times \theta \cdot m$ groups in expectation, and has an expected overall explainability of $\geq \frac{1}{k}\times OPT \ ILP$.
\end{enumerate}
  \end{proposition}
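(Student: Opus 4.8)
For the first claim I would argue by relaxation: every integral point $(t_i,g_j)\in\{0,1\}$ that is feasible for the ILP of \cref{fig:lp} is also feasible for its LP relaxation $(t_i,g_j)\in[0,1]$, with the identical objective value, so the LP feasible region contains the ILP feasible region. Hence if the LP solver reports infeasibility the ILP has no solution, and returning ``no solution'' is correct. For the second claim I would run the textbook randomized-rounding analysis for \emph{max-$k$-cover}, adapted to our two extra features (a coverage \emph{constraint} instead of a coverage objective, and pattern weights $w_j$). Fix an optimal fractional solution $(g_1,\dots,g_l,t_1,\dots,t_m)$ of the LP. Each of the $k$ independent draws selects $\pattern_j$ with probability $g_j/k$, and I would route the residual mass $1-\sum_j g_j/k\ge 0$ (nonnegative by constraint~(1)) to a ``select nothing'' outcome so that the $k$ draws are genuine i.i.d.\ trials; then $\Phi$ contains at most $k$ distinct patterns automatically, giving the size bound.

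For coverage, fix a group $s_i$ and let $C_i=\{j:s_i\in\pattern_j\}$. A single draw misses all covering patterns with probability $1-\tfrac{1}{k}\sum_{j\in C_i}g_j$, so by independence $\Pr[s_i\text{ uncovered}]=\bigl(1-\tfrac{1}{k}\sum_{j\in C_i}g_j\bigr)^k\le e^{-\sum_{j\in C_i}g_j}\le e^{-t_i}$, using $(1-x/k)^k\le e^{-x}$ and then constraint~(2). Therefore $\Pr[s_i\text{ covered}]\ge 1-e^{-t_i}\ge (1-\tfrac{1}{e})\,t_i$, where the last inequality holds for $t_i\in[0,1]$ because the concave map $t\mapsto 1-e^{-t}$ lies above its chord through $(0,0)$ and $(1,1-\tfrac{1}{e})$. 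Summing over $i$, invoking linearity of expectation, and applying constraint~(3), the expected number of covered groups is at least $(1-\tfrac{1}{e})\sum_{i=1}^m t_i\ge (1-\tfrac{1}{e})\,\theta m$.

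For explainability, I would first record that the weights used by \sysName, namely $w_j=|Explainability(\pattern_g,\pattern^+_{g,t})|+|Explainability(\pattern_g,\pattern^-_{g,t})|$, are nonnegative. Since $(1-g_j/k)\in[0,1]$ and $k\ge1$ give $(1-g_j/k)^k\le 1-g_j/k$, the probability that $\pattern_j$ is selected at least once satisfies $\Pr[\pattern_j\in\Phi]=1-(1-g_j/k)^k\ge g_j/k$. Then by linearity of expectation and $w_j\ge0$, $\mathbb{E}\bigl[\sum_{\pattern_j\in\Phi}w_j\bigr]=\sum_j w_j\Pr[\pattern_j\in\Phi]\ge \tfrac{1}{k}\sum_j w_j g_j=\tfrac{1}{k}\,OPT\ LP\ge \tfrac{1}{k}\,OPT\ ILP$, the last inequality because the LP is a relaxation of the ILP.

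The step I expect to be the main obstacle is pinning down precisely where the $\tfrac{1}{k}$ loss in the objective originates and why this argument cannot do better: if repeated selections were counted with multiplicity, the expected weight would be exactly $\sum_j w_j g_j=OPT\ LP\ge OPT\ ILP$, but because $\Phi$ is a \emph{set}, deduplicating repeated picks can only delete (nonnegative-weight) terms, forcing the fallback to the weaker per-pattern bound $\Pr[\pattern_j\in\Phi]\ge g_j/k$. A second point worth flagging, as already noted in Section~\ref{subsec:step_3}, is that this guarantee is conditional on the ILP being built over \emph{all} explanation patterns; once only the mined grouping and treatment patterns are supplied, $OPT\ ILP$ must be read as the optimum restricted to that subset, and the bounds above are relative to it.
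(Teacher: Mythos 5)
Your proof follows essentially the same route as the paper's: the relaxation argument for part (1), and the standard max-$k$-cover randomized-rounding analysis for part (2), bounding each group's coverage probability below by $(1-\frac{1}{e})\,t_i$ via the concavity/chord argument and each pattern's selection probability below by $g_j/k$, then finishing with $OPT\ LP \geq OPT\ ILP$. If anything you are slightly more careful than the paper at one point --- you write $\Pr[\pattern_j\in\Phi]=1-(1-g_j/k)^k\ge g_j/k$ and flag that $w_j\ge 0$ is needed for this direction of the bound, where the paper's proof asserts $\Pr[C_j=1]=g_j/k$ as an equality --- but the substance of the argument is identical.
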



The proof adapts the proof from~\cite{maxcoverage} for the randomized rounding algorithm for the Maximum Coverage problem to our setting. 

\begin{proof}[Proof of \ref{prop:lp_rounding}]
    (1) A feasible solution of ILP($\mathcal{I}$) is also a feasible solution of LP($\mathcal{I}$). Hence if there are no fractional solution to LP($\mathcal{I}$), there are no integral solutions as well. \\
    
    (2) In this case the LP($\mathcal{I}$) returns some solution. \\
    
    (a) {\bf Claim: for every group $T_i$, the probability that $G$ covers $T_i$ is at least $(1 - \frac{1}{e}).t_i$.} If we choose a random pattern according to the probabilities $g_1/k, \cdots, g_l/k$, it covers group $T_i$ with probability $\sum_{j: T_i \in \pattern_j} g_j/k \geq t_i/k$ by (\ref{eq:lp-coverage}). Therefore, the probability that none of the $k$ patterns chosen by the rounding algorithm in case (2) covers $T_i$ is at most $(1 - t_i/k)^k$ and the probability that $T_i$ is covered is $\geq 1 - (1 - t_i/k)^k$ which is $\geq (1 - \frac{1}{e}).t_i$, since the left is concave and the right is linear, and the inequality holds at the end points $t_i = 0, 1$ in the interval $[0, 1]$. \\
    
    (b) {\bf Claim: at least  $(1 - \frac{1}{e})\times \theta m$ groups are covered in expectation.} Let $M_i$ be a random variable such that $M_i = 1$ if $T_i$ is covered by some pattern in $G$ and $= 0$ otherwise. The number of groups covered by $G$ is $\sum_{i = 1}^m M_i$. Expected number of groups covered by $G$ is $E [\sum_{i = 1}^m M_i] = \sum_i E[M_i]$ (by linearity of expectation) = $\sum_i Pr[M_i = 1]$ $\geq \sum_i (1 - \frac{1}{e}).t_i$ (by claim (a)) = $(1 - \frac{1}{e})\sum_{i = 1}^m t_i$  $\geq (1 - \frac{1}{e}) . \theta m$ by (\ref{eq:lp-coverage}) in the LP.\\

    (c) {\bf Claim: The total weight of patterns in $G$ is $\geq \frac{1}{k}\times OPT \ LP(\mathcal{I})$.} Let $C_j$ be a random variable such that $C_j = 1$ if $\pattern_j \in G$ and $= 0$ otherwise. The total weight of patterns in $G$ is $\sum_{j = 1}^l C_j . w_j$. Expected weights of patterns in $G$ is $E [\sum_{j = 1}^l C_j] = \sum_j E[C_j]$ (by linearity of expectation) = $\sum_j Pr[C_j = 1]$ $= \sum_j g_j/k$  = $\frac{1}{k} . OPT \ LP(\mathcal{I})$. \\
        
     (d) {\bf Claim: The total weight of groups in $G$ is $\geq \frac{1}{k}\times OPT \ ILP(\mathcal{I})$.} 
    Since an optimal solution of ILP($\mathcal{I}$) is a feasible solution of LP($\mathcal{I}$), we have 
    \begin{equation}
        OPT \ ILP(\mathcal{I}) \leq OPT \ LP(\mathcal{I})\label{eq:LP-ILP}
    \end{equation}
    Combining (\ref{eq:LP-ILP}) with claim (c), (d) follows. (a) and (d) prove the proposition.
\end{proof}


\section{Additional Experiments}
\label{app:exp}
Here we provide missing details for the experiments as well as additional experiments.

 \begin{figure}[t]
        \centering
        \begin{minipage}[b]{1.0\linewidth}
            \small
            \begin{tcolorbox}[colback=white]
            \vspace{-2mm}
   \textsf{$\bullet$ To buy a new car, having a \textcolor{blue}{caching account with at least 200 DM and paying back all credits at this bank duly} has the most significant positive effect on the risk score (effect size of 0.56, $p {<}$ 1e-3). Conversely, \textcolor{red}{requesting a loan with a duration exceeding 48 months} has the largest adverse impact on credit risk (effect size of -0.49, $p {<}$ 1e-5).}\\
                  \textsf{$\bullet$ To buy domestic appliances, \textcolor{blue}{requesting a loan with a duration not exceeding 12 months and paying back all credits at this bank duly} has the most significant positive effect on the risk score (effect size of 0.34,$p {<}$ 1e-3). Conversely, \textcolor{red}{requesting a loan with a duration exceeding 48 months} has the largest adverse impact on credit risk (effect size of -0.69, $p {<}$ 1e-5).}\\
                 \textsf{$\bullet$ To buy furniture or equipment, having a \textcolor{blue}{caching account with at least 200 DM} has the most significant positive effect on the risk score (effect size of 0.3, $p {<}$ 1e-5). Conversely, \textcolor{red}{requesting a loan with a duration exceeding 45 months} has the largest adverse impact on credit risk (effect size of -0.78, $p {<}$ 1e-3).}\\
        \textsf{$\bullet$ To get a loan for repairs, have a \textcolor{blue}{caching account with at least 200 DM and a saving account with at least 1000 DM} has the most significant positive effect on the risk score (effect size of 0.5, $p {<}$ 1e-3). Conversely, \textcolor{red}{not having a checking account and renting a house} has the largest adverse impact on credit risk (effect size of -0.66, $p {<}$ 1e-4).}\\
           \textsf{$\bullet$ To get a loan for retraining, having a \textcolor{blue}{owning a house} has the most significant positive effect on the risk score (effect size of 0.4, $p {<}$ 1e-2). Conversely, \textcolor{red}{requesting a loan with a duration exceeding 60 months} has the largest adverse impact on credit risk (effect size of -0.66, $p {<}$ 1e-3).}
          \vspace{-2mm}
            \end{tcolorbox}
        \end{minipage}
        \caption{German use-case example.}
        \label{fig:german}
    \end{figure}

 \begin{figure}[t]
        \centering
        \begin{minipage}[b]{1.0\linewidth}
            \small
            \begin{tcolorbox}[colback=white]
            \vspace{-2mm}
\textsf{$\bullet$ For blue-collar occupations (e.g., Machine-op-inspct, Craft-repair, Transport-moving), being an \textcolor{blue}{adult who is married} has the most significant positive effect on the death rate (0.25,$p {<}$ 1e-3). Conversely, \textcolor{red}{being unmarried} has the largest adverse impact on income (effect size of -0.2, $p {<}$ 1e-4).}\\
                \textsf{$\bullet$ For white-collar occupations (e.g., Exec-managerial, Prof-specialty, Adm-clerical), being a \textcolor{blue}{male with a bachelor's degree or higher} has the most significant positive effect on income (effect size of 0.38, $p {<}$ 1e-4). Conversely, \textcolor{red}{being unmarried} has the largest adverse impact on income (effect size of -0.23, $p {<}$ 1e-3).}\\
                 \textsf{$\bullet$ For service occupations (e.g., Sales, Other-service), being \textcolor{blue}{married} has the most significant positive effect on income (effect size of 0.53, $p {<}$ 1e-3). Conversely, \textcolor{red}{being unmarried female} has the largest adverse impact on income (effect size of -0.39, $p {<}$ 1e-4).}  
          \vspace{-2mm}
            \end{tcolorbox}
        \end{minipage}
        \caption{Adult use-case example.}
        \label{fig:adult}
    \end{figure}

\paratitle{\german}:
We analyze a query computing the average risk score for loan requests based on purpose. 
Due to the absence of functional dependencies in the dataset, each group in the aggregated view necessitates a distinct explanation, representing the individual groups in the result view. The explanation summary generated by our \sysName\ is presented in Figure \ref{fig:german_use_case}. Among the ten purposes examined, four purposes could not be explained, as none of the considered treatments were found to be statistically significant. This outcome can be attributed to the low number of tuples associated with those particular purposes in the dataset.

Our findings reveal the significant impact of checking and saving accounts' status and credit history on the risk score for all loan purposes.
These results align with previous research~\cite{GalhotraGRS22} and the association of these attributes with the Schufa score~\cite{schufa}, a widely used credit rating score in Germany. The Schufa score considers factors like credit history, existing loans, and negative incidents. However, specific details remain undisclosed.
Our system identifies influential attributes that affect desired outcomes, potentially shedding light on opaque algorithms like Schufa score.
Rules generated by \ids, \frl, and \exptable\ rely on correlations rather than causal relationships. For instance, a rule generated by \ids\ (and \exptable) suggests that a loan for a vacation corresponds to a credit risk score of 1. However, not all vacation loan requests should be approved. Here, the purpose of vacation loans constituted a small number of records, and coincidentally, all of them were associated with high saving account balances. These approaches failed to appropriately prioritize this pattern, resulting in a rule that lacks sensibility in terms of causality but is accurate in terms of prediction.

\paratitle{\adult}:
We analyzed average income across occupations using an aggregated query.
The income variable is a binary attribute, where $1$ represents an annual income greater than $\$50k$ and $0$ indicates otherwise.
To establish the grouping patterns, we utilized the attribute \textsc{occupation category}, which exhibited a functional dependency with the \textsc{occupation} variable. All other variables were utilized to define the treatment patterns.
The generated explanation summary can be found in Figure \ref{fig:adult_use_case}.
Marital status, education, and gender were identified as significant factors affecting income across all occupations, aligning with prior research~\cite{tramer2017fairtest,salimi2018bias}.
Salimi et al.\cite{salimi2018bias} found a high representation of married males and a strong positive association between marriage and high income in this dataset. Although there is no direct causal link between marital status and income, dataset inconsistencies resulted in marital status having the strongest impact in our analysis due to adjusted gross income based on filing status, which reflects household income. Additionally, \cite{salimi2018bias} demonstrated that males tend to have higher education levels and higher education is associated with higher incomes. Our results support these findings, revealing variations across occupation categories, particularly in white-collar occupations where higher education predominantly influences income. This highlights the benefit of our approach in providing detailed causal explanations.
\ids, \frl, and \exptable\ yielded comparable outcomes, with marital status being the best predictor of income, followed by gender, and age.  However, they fail to explain variations among occupations.


\paragraph*{\bf Breakdown Analysis of the \algoName Algorithm}
We analyze the operation of \algoName\ by step. The runtime analysis is depicted in Figure \ref{fig:algo_steps}. We observe that in all cases, mining the treatment pattern phase (Algorithm \ref{algo:treatment_patterns}) consumes most of the time. Since the number of grouping patterns is relatively small (as there are not too many FDs in the examined cases), the first and last steps are relatively fast. This aligns with our time complexity analysis and demonstrates the need for an efficient approach for avoiding iterating over all possible treatment patterns.

\begin{figure}[t]
\centering
\includegraphics[scale = 0.2]{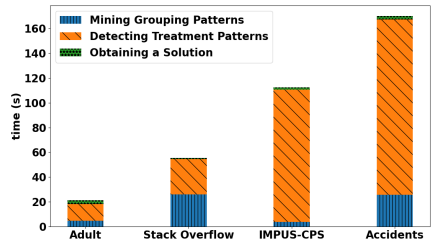}
\caption{Runtime by-step of the \algoName\ algorithm} \label{fig:algo_steps}
\end{figure}

\begin{figure}[t]
  \centering
  \begin{subfigure}[b]{0.23\textwidth}
    \centering
    \includegraphics[width=\textwidth]{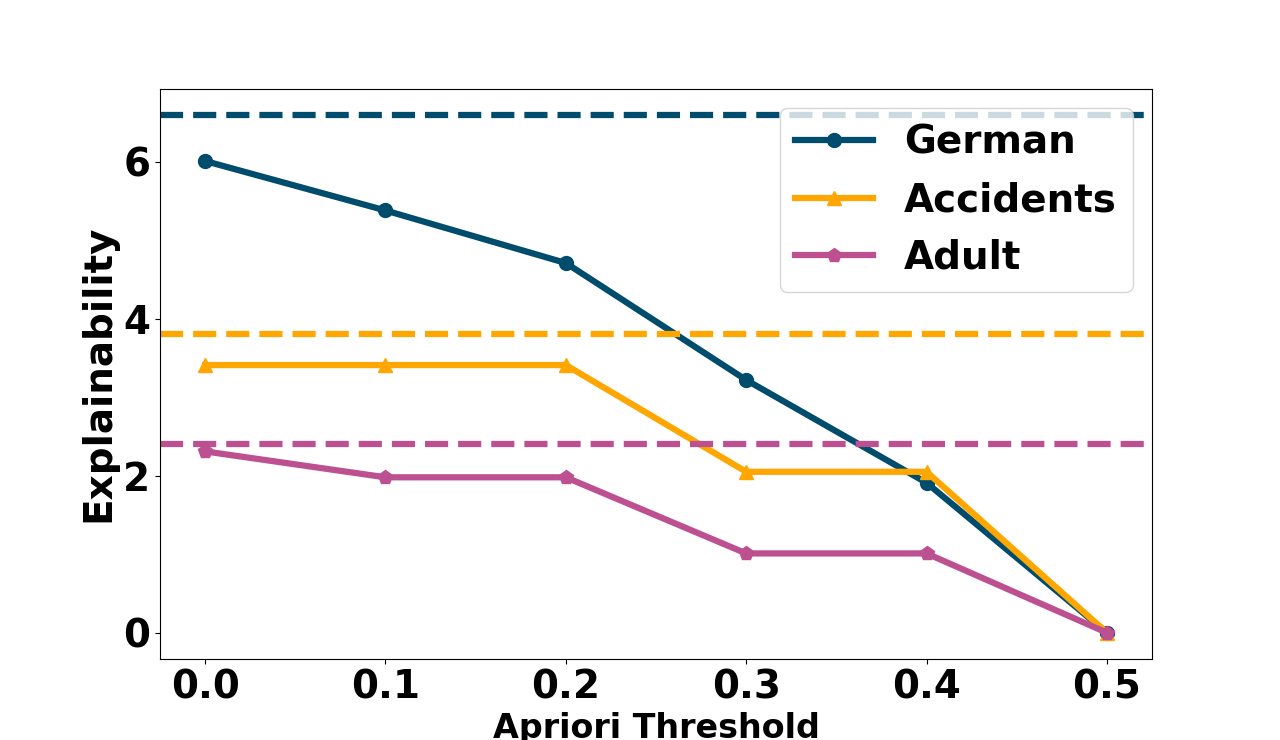}
    \caption{Explainability vs. Threshold}
    \label{fig:figure4}
  \end{subfigure}
     \begin{subfigure}[b]{0.23\textwidth}
    \centering
    \includegraphics[width=\textwidth]{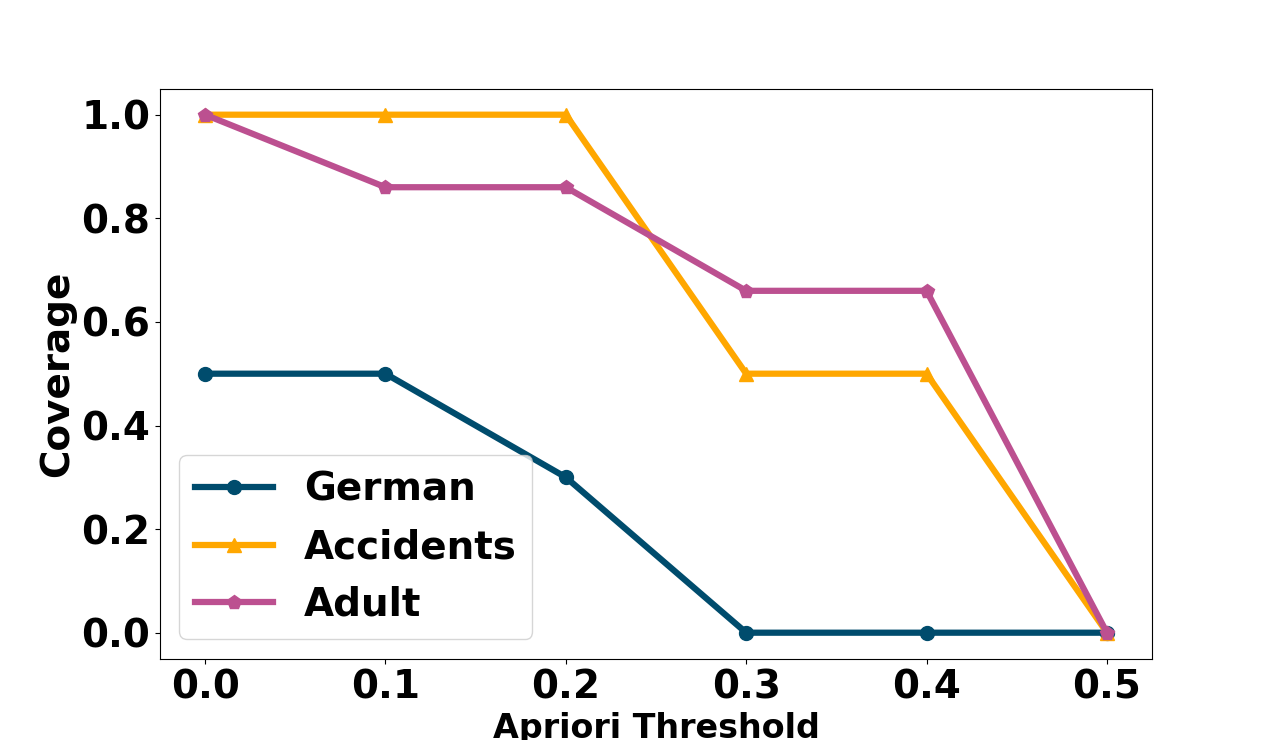}
    \caption{Coverage vs. Threshold}
    \label{fig:figure2}
  \end{subfigure}
  \caption{The Effect of Apriori Threshold.}
  \label{fig:apriori}
\end{figure}

\paragraph*{\bf Apriori Threshold}
We investigate the effect of varying the threshold parameter $\tau$ in the Apriori algorithm. Increasing $\tau$ leads to a reduction in the number of grouping patterns considered. Recall that Brute Force examines all possible grouping patterns, equivalent to setting $\tau {=} 0$. However, even when $\tau {=} 0$, \algoName\ and Brute Force yield different explainability scores. This is because \algoName\ does not explore all treatment patterns and therefore is not guaranteed to find the optimal ones. 
Our findings for the German, Adult, and Accident datasets are shown in Figure \ref{fig:apriori} (similar trends were observed for the other datasets). Figure \ref{fig:apriori}(a) presents the impact of the Apriori threshold on explainability. The dashed lines represent the explainability achieved by the Brute Force Algorithm. In Figure \ref{fig:apriori}(b), we illustrate the effect of the threshold on coverage.
Note that even when setting $\tau{=}0$, it is still not possible to cover all groups in the German dataset. This limitation arises because each group requires a separate explanation (due to the absence of FDs in this dataset), and the fact that the coverage is further restricted by the size constraint (which is $5$ in our setting).
As expected, higher threshold values lead to a decrease in both explainability and coverage. Based on our findings, we recommend using a default threshold of $0.1$, which provides satisfactory results in terms of runtime, explainability, and coverage. However, the analyst can adjust this threshold according to specific coverage requirements.

\begin{figure}[t]
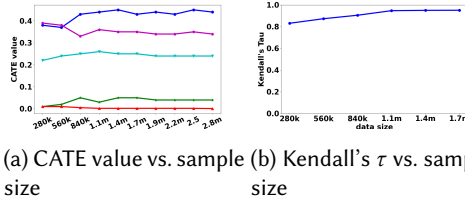

  \centering
  \begin{subfigure}[b]{0.23\textwidth}
    \centering
    \includegraphics[width=\textwidth]{figs/CATE_accidents.png}
    \caption{CATE value vs. sample size}
    \label{fig:figure4}
  \end{subfigure}
     \begin{subfigure}[b]{0.23\textwidth}
    \centering
    \includegraphics[width=\textwidth]{figs/order_accidents.png}
    \caption{Kendall's $\tau$ vs. sample size}
    \label{fig:figure2}
  \end{subfigure}
  \caption{CATE Values Estimation (Accidents dataset).}
  \label{fig:cate_estimation}
\end{figure}

\paragraph*{\bf CATE Values Estimation}
We conducted an investigation to assess how the sample size affects the estimation of CATE values. As the sample size increases, the running time also increases, but the accuracy of the estimations improves. 
Figure \ref{fig:cate_estimation} illustrates the results for the Accident dataset. In Figure \ref{fig:cate_estimation}(a), we present the estimated CATE values for $5$ random treatments using various sample sizes. In Figure \ref{fig:cate_estimation}(b), we evaluate the agreement between rankings using Kendall's $\tau$ correlation coefficient. We randomly selected $20$ treatments and ranked them based on their CATE values, comparing this ranking with rankings obtained using different sample sizes. Notably, for a sample size of $1m$ tuples, the CATE values exhibit an error of no more than $5\%$, and the Kendall's $\tau$ reaches a high and stable value of $0.95$. Similar trends were observed for the IMPUS-CPS dataset. Consequently, we conclude that a sample size of $1m$ tuples is suitable for accurate estimation of the CATE values.

\begin{table}[t]
\centering
\small
\begin{tabular}{|c|c|c|c|}

\hline
\textbf{Dataset} 
 & \textbf{Graph Name}& \textbf{Number of Edges} & \textbf{Density} \\
\hline
\multirow{4}{*}{German} & Used causal DAG & 20 & 0.05 \\
& PC & 43 & 0.11 \\
& FCI &  12& 0.03 \\
& LiNGAM & 8 & 0.02\\
\hline
\multirow{4}{*}{Adult} & Used causal DAG & 36 & 0.23 \\
& PC & 38 & 0.24 \\
& FCI & 10 & 0.06 \\
& LiNGAM & 18 & 0.11 \\
\hline
\multirow{4}{*}{SO} & Used causal DAG & 28 & 0.07 \\
& PC & 75 & 0.19 \\
& FCI & 41 &0.1  \\
& LiNGAM & 7 & 0.01 \\
\hline
\end{tabular}
\caption{Causal DAG statistics.}
\label{tab:causal_dags}
\end{table}

\begin{figure}[t]
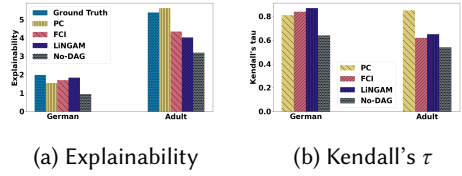

  \centering
  \begin{subfigure}[b]{0.23\textwidth}
    \centering
    \includegraphics[width=\textwidth]{figs/DAG_exp.png}

    \caption{Explainability}
    \label{fig:figure4}
  \end{subfigure}
     \begin{subfigure}[b]{0.23\textwidth}
    \centering
    \includegraphics[width=\textwidth]{figs/DAG_order.png}
 
    \caption{Kendall's $\tau$}
    \label{fig:figure2}
  \end{subfigure}

  \caption{Modifying the causal DAG.}
  \label{fig:causal_dag_effect}
\end{figure}

\paragraph*{\bf Causal DAG}
In this experiment, we deviate from the assumption of having a pre-defined causal DAG. Instead, we leverage existing causal discovery algorithms to construct the DAGs and examine their impact on the results. Modifying the DAG can lead to changes in the CATE values, consequently affecting their ranking based on CATE values. Hence, we present the effects on both overall explainability and the ranking of treatment patterns when using different causal DAGs. We conducted tests using three widely used causal discovery algorithms: PC~\cite{spirtes2000causation}, FCI~\cite{spirtes2000causation}, and LiNGAM~\cite{shimizu2006linear}. 
Statistics on the obtained causal DAGs are given in Table~\ref{tab:causal_dags}. 
Our results for the German and Adult and SO datasets are shown in Figure \ref{fig:causal_dag_effect} (similar patterns were observed for the other datasets). 
Figure \ref{fig:causal_dag_effect}(a) illustrates the obtained explainability scores for the German and Adult datasets (for SO, the explainability scores are on a different range and thus omitted from the presentation) using different causal DAGs. In Figure \ref{fig:causal_dag_effect}, we present the Kendall tau values, comparing the ranking of top-20 treatments based on their CATE values with the ranking obtained using the ground truth causal DAG. Notably, no single causal discovery algorithm outperforms all others.
While for the German dataset, LiNGAM yielded results that closely aligned with our ground truth, for the Adult dataset, the PC algorithm demonstrates superiority.
This is because each algorithm relies on specific assumptions to construct a causal DAG, leading to varying performance across datasets depending on the validity of those assumptions. We observe that the generated causal DAGs tend to be sparser than the ground truth DAGs. Hence, our key finding is that when a causal DAG is unavailable, one can utilize available algorithms to generate candidate causal DAGs, and then, by leveraging domain knowledge, determine the most suitable causal DAG that aligns with the domain expertise.

\end{document}